
\documentclass{comnet}
\listfiles

\usepackage{amsmath} 
\usepackage{amsthm} 

\usepackage{graphicx} 

\usepackage{bibentry}

\usepackage{amssymb}

\usepackage{subcaption}
\captionsetup[subfigure]{justification=raggedright}

\usepackage{mleftright} 
\mleftright 

\usepackage{bm}

\graphicspath{{./fig/}}

\newcommand{\tc}{\tilde{c}}

\newcommand{\prw}{p^{\mathrm{rw}}}


\newcommand{\expC}{\langle \tilde{c} \rangle}
\newcommand{\expCst}{\langle \tilde{c} \rangle_{st}}
\newcommand{\expCts}{\langle \tilde{c} \rangle_{ts}}



\newcommand{\dist}{\mathit{\Delta}}

\newcommand{\giv}{\ensuremath{\: \vert \:}}

\newcommand{\Prob}{\mathrm{P}_{st}}
\newcommand{\ProbAll}{\mathrm{P}}
\newcommand{\ProbRSP}{\mathrm{P}_{st}^{\mathrm{RSP}}}

\newcommand{\Prw}{\mathrm{P}_{st}^{\mathrm{rw}}}

\newcommand{\PRW}{\mathbf{P}^{\mathrm{rw}}}




\newcommand{\Pset}{\mathcal{P}_{st}}

\newcommand{\PF}{\mathcal{Z}_{st}}

\newcommand{\Lhood}{\mathcal{L}}

\newcommand{\logL}{\log \mathcal{L}}

\newcommand{\V}{n}
\newcommand{\Vij}{n_{ij}}
\newcommand{\Vi}{n_{i}}
\newcommand{\expVij}{\bar{n}_{ij}}

\newcommand{\expVi}{\bar{n}_{i}}

\newcommand{\psum}{\sum_{\wp \in \Pset}}
\newcommand{\Psum}{\sum\limits_{\wp\in \Pset}}

\newcommand{\betaMLE}{\hat{\beta}_{\mathrm{MLE}}}


\newcommand{\relent}{\mathbb{J} \left( \Prob \| \Prw \right)}

\newcommand{\tw}{\widetilde{w}}

\newcommand{\Wt}{\overset{-t}{\mathbf{W}}\vphantom{\mathbf{W}}}


\usepackage[T1]{fontenc}

\DeclareMathAlphabet{\mathcal}{OMS}{cmsy}{m}{n}
\DeclareSymbolFont{symbols2}{OML}{cmm}{m}{it}
\DeclareMathSymbol{\wp}{\mathord}{symbols2}{'175}

\usepackage{hyperref}

%
\AtBeginDocument{
  \label{CorrectFirstPageLabel}
  
}

\begin{document}

\title{Maximum likelihood estimation for randomized shortest paths with trajectory data}

\shorttitle{Maximum likelihood estimation for randomized shortest paths} 
\shortauthorlist{Ilkka Kivim\"aki et al.} 

\author{
\name{Ilkka Kivim\"aki$^*$}
\address{Aalto University, Department of Computer Science, Espoo, Finland\\
Universit\'{e} catholique de Louvain, ICTEAM, Louvain-la-Neuve, Belgium
\email{$^*$Corresponding author: ilkka.s.kivimaki@gmail.com}}
\name{Bram Van Moorter, Manuela Panzacchi}
\address{Norwegian Institute for Nature Research, Trondheim, Norway}
\name{Jari Saram\"{a}ki}
\address{Aalto University, Department of Computer Science, Espoo, Finland}
\and
\name{Marco Saerens}
\address{Universit\'{e} catholique de Louvain, ICTEAM, Louvain-la-Neuve, Belgium}
}

\maketitle

\begin{abstract}
{
Randomized shortest paths (RSP) are a tool developed in recent years for different graph and network analysis applications, such as modelling movement or flow in networks.
In essence, the RSP framework considers the temperature-dependent Gibbs-Boltzmann distribution over paths in the network.
At low temperatures, the distribution focuses solely on the shortest or least-cost paths, while with increasing temperature, the distribution spreads over random walks on the network.
Many relevant quantities can be computed conveniently from this distribution, and these often generalize traditional network measures in a sensible way.
However, when modelling real phenomena with RSPs, one needs a principled way of estimating the parameters from data.
In this work, we develop methods for computing the maximum likelihood estimate of the model parameters, with focus on the temperature parameter, when modelling phenomena based on movement, flow, or spreading processes.
We test the validity of the derived methods with trajectories generated on artificial networks as well as with real data on the movement of wild reindeer in a geographic landscape, used for estimating the degree of randomness in the movement of the animals.
These examples demonstrate the attractiveness of the RSP framework as a generic model to be used in diverse applications.
}
{randomized shortest paths, random walk, shortest path, parameter estimation, maximum likelihood, animal movement modelling}
\end{abstract}

\section{Introduction}
\label{sec:MLEIntro}

\subsection{Background and motivation}
\label{sec:BackgroundAndMotivation}
Modelling the movement and flow of different entities on networks is a key topic in network science \citep{wasserman1994social, kolaczyk2009statistical, newman2010networks, estrada2012structure, barabasi2015network, fouss2016algorithms}.
It is not only relevant for studying physical movement, but also for more abstract processes such as communication, the spread of diseases and financial transactions, to name a few.
Moreover, models of movement and flow are often used as the basis of popular distance, centrality and other measures on networks.

The two most standard paradigms for studying movement on networks consider it as occurring over shortest paths or over random walks.
These paradigms are, however, often too simple for building realistic models.
In reality, movement or flow rarely strictly follows the shortest paths, nor is it completely random.
In addition, the standard network measures derived from these paradigms often have caveats.
The shortest path distance, for instance, does not take into account connectivity besides the shortest path, i.e.\ the number of other short connections between nodes. 
Moreover, when comparing distances on unweighted graphs, the shortest path distance often results in ties between node pairs, because it typically yields a limited number of integer values.

Random walks can be used for defining the \emph{commute time} and \emph{commute cost distances} \citep{chandra1989the-electrical,fouss2016algorithms,kivimaki2014developments}, as well as the closely related \emph{resistance distance}~\citep{klein1993resistance}.
They reflect the connectivity between nodes, but only in small networks.
Instead, in large networks they become only dependent on the local connectivity of the nodes, in many cases the node degrees~\citep{luxburg2010getting, luxburg2014hitting}.
This phenomenon has been called the \emph{global information loss problem} in \citep{nguyen2016new,odor2017global}.

The above reasons have motivated the development of alternatives for the traditional paradigms, with special focus on defining new distances on networks.
One such alternative, which is also the main topic of this work, is the \emph{randomized shortest paths (RSP) framework} \citep{yen2008a-family, saerens2009randomized, kivimaki2014developments, kivimaki2016two}. 
The RSP framework is based on considering a Gibbs-Boltzmann distribution over paths from a \emph{source node} $s$ to a \emph{target node} $t$ involving an \emph{inverse temperature parameter}, $\beta=1/T > 0$, which controls the degree of randomization from the optimal, shortest paths.
At the limits of the parameter range, the distribution focuses either on solely the optimal paths ($\beta \rightarrow \infty$) or spreads over random walks ($\beta \rightarrow 0^{+}$).
In the framework, the optimality of a path can be defined as the number of steps, or, more generally, based on real-valued \emph{edge costs}, denoted by $c_{ij}$ for edge $(i,j)$. 
On the other hand, random walks are considered based on \emph{edge affinities} $a_{ij}$ which can be defined independently of the edge costs (see Section~\ref{sec:RSP} for details).
This enables the consideration of optimality and randomness of paths as being based on different grounds, which can be relevant for some applications.
The Gibbs-Boltzmann distribution is an exponential-family distribution, but can also be interpreted in terms of maximum entropy modelling~\citep{kivimaki2018distances}.

Originally the RSP framework was proposed in order to define many interpretable metrics on a network, such as distance measures between the network nodes~\citep{yen2008a-family, kivimaki2014developments} or node and edge centrality measures \citep{kivimaki2016two}.
Furthermore, the RSP framework can also be used for purposes of planning or predicting paths on networks and for modelling movement or flow patterns on networks~\citep{garcia-di2011a-continuous-state,panzacchi2016predicting}, which is the application domain considered in this paper.

One problem that has not yet been tackled in depth in the literature is the estimation of the parameters of the RSP model.
This work tackles the problem by presenting methods for computing maximum likelihood estimates (MLEs) of the parameters of the RSP model in cases where the data consists of trajectories between nodes on a network.
Such data may be generated whenever recording a movement or flow process on a network, including examples such as
\begin{itemize}
\item geolocation data of animals moving on a geographical network,
\item trip data gathered from public transportation networks, 
\item trajectory data of players in video games, or
\item browsing behavior data of web users.
\end{itemize}

The main focus in this work is on estimation of the inverse temperature parameter $\beta$ based on such data, in which case the network structure, i.e.~the edge costs and edge affinities, are considered known and fixed.
A relatively high value of $\beta$ (i.e.~low temperature) describes trajectories following optimal or near-optimal paths between the source and target nodes, whereas a low value of $\beta$ (high temperature) can describe trajectories resembling a random walk with a drift towards the target~\citep{garcia-di2011a-continuous-state}.
By setting $\beta$ to an appropriate value, the RSP model can take into account the assumption that movement or flow on a network often does not occur completely optimally nor completely randomly.
When considering movement on a network, such an assumption can be valid for various reasons.
For example, the agent (e.g.,~an animal) moving on the network might not have sufficient global information of the network or be intelligent enough in order to follow optimal paths.
Or the agent can simply have a simultaneous preference for both randomness (i.e.~exploration) and optimality (exploitation), which can help in obtaining more knowledge of the environment or in distracting a prey or an opponent trying to predict and intercept the agent.

In earlier literature, the value of $\beta$ has often been selected by some form of tuning.
However, these tuning schemes are not based on solid statistical grounds and can be very specific for the application in question.
For instance, in \citep{kivimaki2014developments}, the authors use distance measures derived from the RSP framework for clustering graph nodes, and the value of $\beta$ is fixed by maximizing the clustering performance using a part of the data.

We first derive the MLEs of $\beta$ for trajectories that have been recorded fully but also tackle the more complicated scenario where the trajectories are incomplete.
We confirm the validity of these methods with artificial examples using simulated geographic networks and an artificial community-structured network.
In addition, we test the developed MLE methods on real data of trajectories in a geographic landscape, based on GPS recordings of movements of wild reindeer.
This example is devised mostly to verify that the MLE computations can be run in practice and that the methods provide reasonable results.
Recently, the RSP framework was applied to similar data in~\citep{panzacchi2016predicting} for locating movement corridors of wild reindeer in a geographical landscape. 
In addition, it has been made familiar to the ecological community thanks to the implementation in the \texttt{gdistance} R package \citep{van2012gdistance}.
This paper is partly a continuation of the work in \citep{panzacchi2016predicting}, as we also consider trajectories of wild reindeer as a use case for the problem of fitting the model to data of movement trajectories.
However, in addition to this ecological application, the theory developed in the paper is generic so that the methods can be used for any movement or flow phenomena on networks that are suitable for the RSP model.

In addition to $\beta$, the edge costs, $c_{ij}$, and edge affinities, $a_{ij}$, can be considered as parameters of the RSP framework.
Although we do also briefly discuss the estimation of edge costs from trajectory data, in Section \ref{sec:EstimationOfEdgeCost}, the focus is mostly on the estimation of $\beta$, as commonly in RSP applications the graph structure, i.e.~the edge costs and affinities, are considered fixed and known.
Furthermore, as the purpose of this work is only in developing the RSP theory for fitting the RSP model to trajectory data, the evaluation of the MLEs in specific ecological applications and comparison with other model estimation approaches are left for future work.
We also plan to later investigate the use of the methods developed here for parameter estimation in the context of clustering and classification of graph nodes.

The paper structure is as follows: 
In the rest of this section, we recall the RSP framework and discuss its properties that are required for following the remainder of the paper.
We also discuss other work related to model estimation in similar problem settings at the end of the section.
In Section \ref{sec:CompleteTrajectories}, we derive and validate MLEs for data consisting of complete trajectories.
Section \ref{sec:IncompleteTrajectories} deals with MLEs in situations where the data consists of incomplete trajectories, i.e.,~where only a subset of the edge or node sequence of a path is observed.
In Section \ref{sec:ApplicationToWildAnimalMovement}, we use the methods derived in Section \ref{sec:IncompleteTrajectories} to fit the RSP model to data consisting of GPS trajectories of wild reindeer by estimating the inverse temperature parameter associated to the trajectories.

\subsection{Randomized shortest paths}
\label{sec:RSP}
The RSP framework has been developed during the past decade in several works \citep{yen2008a-family,saerens2009randomized,kivimaki2014developments,kivimaki2016two}, focusing mostly on the definition of distance, similarity or centrality measures on graphs or networks. 
It was initially inspired by models developed in the field of transportation science \citep{akamatsu1996cyclic}.
In this Section, we recall the definition of the RSP framework and the related results relevant for the remainder of the paper. 
Before that, however, we define the terms and notation used in the paper.

\subsubsection{Definitions and notation}
\label{sec:DefinitionsAndNotation}

\paragraph{Vector and matrix notation.} Vectors are generally denoted by lowercase boldface characters and matrices by capital boldface characters, $\mathbf{e}_{i}$ denoting the $i$-th basis vector, i.e.~1 at element $i$ and 0 elsewhere.
The length of vectors and size of matrices is determined depending on the context, if not stated explicitly.
$\mathbf{I}$ denotes the identity matrix (of appropriate size) and  $\mathbf{I}_{ij}$ denotes the matrix whose element $(i,j)$ is 1 and other elements are 0.
Otherwise, for an arbitrary matrix $\mathbf{X}$, the lowercase $x_{ij}$ and sometimes $[\mathbf{X}]_{ij}$ are used to denote the element $(i,j)$ of $\mathbf{X}$, whereas $\mathbf{X}_{ij}$ denotes the matrix containing the value $[\mathbf{X}]_{ij}$ at element $(i,j)$ and zero elsewhere.

\paragraph{Graphs and paths.}
Let $G = (V,E)$ be a directed, strongly connected graph with node set $V$ containing $n$ nodes labeled from $1$ to $n$, i.e.\ $V = \big\{1,\ldots, n \big\}$; and edge set $E$ containing $m$ edges represented as ordered pairs $(i,j)$ where $i, j \in V$ and $i \neq j$ (i.e.\ we do not consider graphs with edges from a node to itself).
For any node $i\in V$ we denote by $Succ(i)$ the set of \emph{successor} nodes of $i$, i.e.\ $Succ(i) = \big\{ j\in V \giv (i,j) \in E \big\}$.

A \emph{path} $\wp$ on $G$ is defined as a sequence of nodes $\wp = (v_{0}, \ldots, v_{L})$, where $(v_{i-1}, v_{i}) \in E$ for all $i = 1,\ldots,L$ and where $L \geq 1$ is the \emph{length} of the path (note that thus a single node does not constitute a path).
The length of an arbitrary path $\wp$ is denoted as $L(\wp)$.
The $l$-th node of path $\wp$ is denoted by $\wp(l)$, where $0 \leq l \leq L(\wp)$.
Note that the node indexing along a path starts from 0.
Likewise, the edge between the $l$-th and $(l+1)$-th node is denoted by $\wp(l,l+1)$, and the \emph{subpath} or \emph{subsequence}, from the $l_{1}$-th to the $l_{2}$-th node by $\wp(l_{1}:l_{2})$.

The focus in this work is especially on \emph{hitting paths}, i.e.\ paths, where the last node appears only once, or, formally, paths $\wp$ for which $\wp(i) \neq \wp(L(\wp))$ for all $i < L(\wp)$.
The set of all hitting paths from a starting node $s$ to a target node $t$ is denoted by $\Pset$. 
Note that with the above definition, a hitting path cannot go from a node to itself, and thus $\mathcal{P}_{tt} = \emptyset$ for all $t \in V$.
Moreover, the subset of $\Pset$ containing only paths of a fixed length $k$ is denoted by $\Pset^{(k)}$.
A pair of starting node $s$ and target node $t$ is concisely referred to as an $s$-$t$-pair and the paths in $\Pset$ as $s$-$t$-paths.
Throughout the paper, $t$ is used as the index of the \emph{absorbing} target node.
Note that the results derived in this work for hitting paths can be easily generalized to all paths, not only hitting ones.
We focus only on hitting paths for two reasons: because of brevity, and because the network measures derived from the RSP framework are more directly related to many traditional network measures when considering hitting paths than when considering all paths.

\paragraph{Edge weights, path costs and probabilities.}
In the RSP framework, each edge $(i,j)\in E$ is associated with two kinds of weights: an affinity $a_{ij}>0$ and a cost $c_{ij} > 0$.
In addition, for node pairs that are not connected by an edge, i.e.\ such $(i, j) \in V \times V$ that $(i, j) \notin E$, we define $a_{ij}=0$ and $c_{ij}=\infty$.
The edge affinities and costs define the \emph{affinity matrix} $\mathbf{A}$ and \emph{cost matrix} $\mathbf{C}$, both of size $n \times n$, whose elements $(i,j)$ are the corresponding values for the node pair $(i,j) \in V \times V$.

The edge costs $c_{ij}$ define the optimality of movement and can be considered in terms of the energy consumption, geographical distance, time duration, monetary value, or other form of expense related to the step over an edge.
The cost of a path $\wp$ is defined as the sum of the edge costs along the path:
\begin{equation}
\label{eq:PathCost}
\tc(\wp) = \sum_{l=1}^{L(\wp)} c_{\wp(l-1), \wp(l)}.
\end{equation}
Note the tilde above the $c$, which we use generally to differentiate between path-related and edge-related quantities.
For any $s$-$t$-pair, the \emph{least cost} from $s$ to $t$ means the minimum path cost over all $s$-$t$-paths.
The least cost can be considered as a directed and weighted form of the \emph{shortest path distance}.

While the edge costs define optimality of movement, the edge affinities instead determine what is meant by random movement.
Namely, the affinities determine random walks on the graph, which are generated by the transition probability matrix $\PRW$ with elements
\begin{equation}
\label{eq:ReferenceTransitionProbabilities}
\prw_{ij} 
= 
\dfrac{a_{ij}}{\sum_{k=1}^{n} a_{ik}},
\
(i,j) \in E.
\end{equation}
The random walk with the above transition probabilities corresponds to a first-order Markov chain with state set $V$.
The \emph{random walk probability distribution}, $\Prw$ over the set of hitting paths from $s$ to $t$ is determined by the product of the transition probabilities, i.e.\ for any hitting path $\wp \in \Pset$,
\begin{equation}
\label{eq:ReferencePathProbabilities}
\Prw (\wp)
=
\prod_{l=1}^{L(\wp)} \prw_{\wp(l-1),\wp(l)}.
\end{equation}
Note that this product determines the path probabilities in a well-defined manner, as it is known that for hitting paths (see, e.g.\ \citep{francoisse2017bag})
\begin{equation}
\label{eq:PrefsSumToOne}
\sum_{\wp \in \Pset} \Prw (\wp) = 1.
\end{equation}
The random walk generated by the transition probabilities $\prw_{ij}$ is sometimes referred to as the \emph{reference}, \emph{natural}, or \emph{unbiased} random walk and the superscript ``rw'' in the above quantities naturally refers to (unbiased) random walks.

The reference random walk defines the \emph{expected hitting time} and the \emph{expected hitting cost} from $s$ to $t$, as the expected path length, or path cost, respectively, over the random walk distribution $\Prw$.
The sums, from $s$ to $t$ and back from $t$ to $s$, of expected hitting times and costs define, respectively, the \emph{commute time distance} and the \emph{commute cost distance}.
It is well known that on an undirected graph (i.e.,~a graph for which $a_{ij} = a_{ji}$ and $c_{ij} = c_{ji}$ for all $(i,j) \in E$) where edge costs correspond to edge \emph{resistances} (i.e.,~$c_{ij} = r_{ij} = 1/a_{ij}$; in which case affinities correspond to conductances), both the commute time and commute cost distances are proportional to the \emph{resistance distances} between corresponding nodes.
More exactly, if we denote, for any $s$-$t$-pair on an undirected graph, by $\dist_{st}^{\mathrm{res}}$, $\dist_{st}^{\mathrm{CT}}$ and $\dist_{st}^{\mathrm{CC}}$, respectively, the resistance, commute time and commute cost distances between $s$ and $t$, then the following holds~\citep{chandra1989the-electrical,kivimaki2014developments,golnari2018random}:
\begin{equation}
\dist_{st}^{\mathrm{res}}
=
\dfrac{\dist_{st}^{\mathrm{CT}}}{\sum\limits_{(i,j)\in E}a_{ij}}
=
\dfrac{\dist_{st}^{\mathrm{CC}}}{\sum\limits_{(i,j)\in E}a_{ij} c_{ij}}.
\label{eq:dCTdCCdRes}
\end{equation}

As mentioned earlier, in Section~\ref{sec:BackgroundAndMotivation}, although the edge costs can be defined based on the edge affinities (for instance as $c_{ij} = 1/a_{ij}$, as above in the electric circuit analogue), generally in the RSP framework, the two weights can be --- and are considered in the derivations as --- independent of each other.
This assumption is convenient technically, but also makes sense due to the fact that the underlining drivers behind optimal and random movement can be very different from each other.
In the simplest examples, even when the edge costs may obtain arbitrary values, $c_{ij}\geq 0$, the random walk can be considered oblivious to the edge features by defining $a_{ij}=1$ for all $(i,j) \in E$, which results in a natural random walk with uniform transition probabilities at each node.
On the other hand, defining $c_{ij}=1$ for all $(i,j)\in E$ corresponds to considering the length of paths as their cost, in which case the random walk can nevertheless be considered with arbitrary affinities that depend on local features of edges, thus leading to a natural random walk with non-uniform transition probabilities at each node.

\subsubsection{Definition of the RSP framework}
The RSP framework can be defined and interpreted in different ways \citep{yen2008a-family,saerens2009randomized,kivimaki2014developments,kivimaki2016two,francoisse2017bag}, but here we present the most often formulated version, based on \emph{cost minimization constrained by relative entropy}.
Consider an agent moving on the graph from a starting node $s$ to a target node $t \neq s$.
Instead of moving randomly, as defined by the unbiased transition probabilities in Equation (\ref{eq:ReferenceTransitionProbabilities}), the agent aims to move in an optimal way.
However, for various possible reasons, the movement of the agent is not completely optimal.

Instead, the agent is considered to choose its path from $\Pset$ from a distribution which minimizes the expected path cost constrained to a fixed relative entropy with respect to the unbiased random walk.
Formally, we seek for the distribution that satisfies the minimization problem
\begin{equation}
\label{eq:RSPMinimization}
\mathop{\mathrm{Minimize}}_{\Prob}
\
\expC
= 
\sum_{\wp\in \Pset} 
\Prob(\wp) 
\tilde{c}(\wp)
\ \
\mathrm{s.t.}
\ \
\begin{cases}
\relent = J_{0}
\\
\sum\limits_{\wp\in \Pset} \Prob(\wp) = 1,
\end{cases}
\end{equation}
where $\relent$ is the \emph{Kullback-Leibler divergence}, or \emph{relative entropy}, with respect to the natural random walk distribution
\begin{equation}
\label{eq:KLDivergence}
\relent 
= 
\sum_{\wp \in \Pset} 
\Prob (\wp) 
\log 
\mleft( 
\Prob(\wp) 
/ 
\Prw(\wp) 
\mright),
\end{equation}
which is constrained to a fixed value $J_{0}>0$, and which determines the degree of randomness associated with the movement behavior of the agent.
A low value of $J_{0}$ constrains the distribution $\Prob$ to remain very similar to the random walk distribution $\Prw$, while a high value of $J_{0}$ allows the distribution to focus more on optimal, low-cost paths.

The solution of (\ref{eq:RSPMinimization}), which can be obtained with the Lagrangian method (see, e.g.\ \citep{yen2008a-family}), is the Gibbs-Boltzmann distribution (which we also refer to as the \emph{RSP distribution}) over the set $\Pset$:
\begin{equation}
\label{eq:GibbsBoltzmann}
\Prob(\wp) 
= 
\dfrac{\Prw(\wp) \exp(-\beta\tc(\wp))}{\sum\limits_{\wp\,' \in \Pset} \Prw(\wp\,') \exp(-\beta\tc(\wp\,'))},
\end{equation}
where $\beta = 1/T > 0$ is the \emph{inverse temperature parameter}, resulting from introducing $T$ as the Lagrangian multiplier of the relative entropy constraint.
The parameter $\beta$ is thus related to the relative entropy value $J_{0}$.
Namely, for low values of $\beta$ (corresponding to low values of $J_{0}$), i.e.\ when $\beta \rightarrow 0^{+}$, the distribution converges to the random walk distribution.
For high values of $\beta$ (high values of $J_{0}$), i.e.\ when $\beta \rightarrow \infty$, the distribution focuses more and more on the low-cost paths.

Among the other interpretations of the RSP framework, the RSP distribution could be obtained from an inverse point-of-view as well, namely by considering minimization of relative entropy with a fixed expected cost.
That way the RSP framework can be interpreted as a maximum entropy model~\cite{kapur1992entropy}, as minimizing relative entropy with respect to the random walk distribution is conceptually close, though not equivalent, to maximizing the Shannon entropy.

Normally in the RSP model, for convenience, the user is required to set the value of $\beta$, instead of $J_{0}$, to determine the degree of randomness associated to the distribution.
Note that there is no analytical expression for computing the value of $\beta$ that would result in the RSP distribution with a given relative entropy $J_{0}$.
Instead, for obtaining the RSP distribution with a particular relative entropy, the corresponding value of $\beta$ can be obtained with a bisection search.
Also worth noting is that Equation~(\ref{eq:GibbsBoltzmann}) is the solution of the problem in Equation~(\ref{eq:RSPMinimization}) only if $J_0 < J_{\mathrm{max}}$, where $J_{\mathrm{max}}$ is the relative entropy at the limit $\beta \longrightarrow \infty$, where the RSP distribution is focused only on the shortest paths and zero for other paths.

Let us denote the numerator in (\ref{eq:GibbsBoltzmann}), i.e.\ the \emph{likelihood of path $\wp$}, as
\begin{equation}
\label{eq:PathLikelihood}
\widetilde{w}(\wp) = \Prw(\wp)\exp(-\beta \tc(\wp)).
\end{equation}
In fact, we make this definition for all paths between $s$ and $t$, not only the hitting ones.
The denominator in Equation (\ref{eq:GibbsBoltzmann}) is the \emph{partition function} of hitting $s$-$t$-paths, which accumulates the overall likelihood of hitting $s$-$t$-paths, and which is denoted by
\begin{equation}
\label{eq:PF}
\PF = \Psum \Prw(\wp) \exp(-\beta\tc(\wp))
=
\Psum \tw(\wp).
\end{equation}
Finally, by defining the \emph{edge likelihoods} by 
\begin{equation}
\label{eq:wij}
w_{ij} = \prw_{ij}\exp(-\beta c_{ij}),
\end{equation}
the path likelihood is, in fact, the product of the edge likelihoods along the path:
\begin{align}
\label{eq:PathLikelihoodFromEdgeLikelihoods}
\widetilde{w}(\wp) 
= \prod_{l=1}^{L(\wp)} w_{\wp(l-1),\wp(l)}.
\end{align}

\subsubsection{Computation of main quantities}
\label{sec:ComputationOfMainQuantities}
Here we gather results from earlier literature for computing quantities from the RSP distribution that are relevant for the current work.
We first present the computation of the path likelihoods $\widetilde{w}$ and the partition function, $\PF$, based on matrix computations, and then show how, using the partition function, we may compute the \emph{expected cost} of paths from $s$ to $t$, as well as the \emph{expected number of traversals over edges} and the \emph{expected number of visits to nodes}, over the RSP distribution.
All of these quantities and their computation appear throughout the derivations related to the maximum likelihood estimation in later sections.

The edge likelihoods $w_{ij}$, from Equation~(\ref{eq:wij}), define the \emph{likelihood matrix} $\mathbf{W}$, which can be expressed as
\begin{equation}
\label{eq:Wmatrix}
\mathbf{W} = \PRW \circ \exp(-\beta \mathbf{C}),
\end{equation}
where $\PRW$ and $\mathbf{C}$ are the matrices containing the reference transition probabilities and edge costs, respectively, and $\circ$ is the element-wise, i.e.~Hadamard, product, and the exponential is taken element-wise as well.
The likelihood matrix $\mathbf{W}$ is substochastic, i.e.~its row sums are all less than unity, $\sum_{j} w_{ij} < 1$.
This can be interpreted as $\mathbf{W}$ defining a \emph{killed random walk} (also sometimes called the \emph{evaporating random walk}), where the residue probability at each node $i$, $1-\sum_{j} w_{ij}$, corresponds to the probability of transition from $i$ to an imaginary, absorbing ``cemetery node'', instead of continuing the walk to a neighbouring node.

We present two ways of computing the partition function of Equation~(\ref{eq:PF}).
The first way is to define matrix 
\begin{equation}
\Wt
=
\mathbf{W}
-
\mathbf{I}_{tt}
\mathbf{W}
\label{eq:Wt}
\end{equation}
as the matrix $\mathbf{W}$ with row $t$ set to zero.
This is equivalent to considering a deletion of all edges leaving node $t$, which makes $t$ an \emph{absorbing} node (or state; analogous to absorbing Markov chains~\citep{grinstead1997introduction}).

The overall likelihood of hitting paths from $s$ to $t$ of given length $k=1,2,\ldots$, is given by elements of the powers of $\Wt$:
\begin{equation}
\label{eq:WPowers}
\sum_{\wp \in \Pset^{(k)}} 
\widetilde{w}(\wp)
= 
\left[
\Wt^k
\right]_{st}.
\end{equation}
As a result, the partition function, $\PF$, defined earlier in Equation~(\ref{eq:PF}), can be computed by summing over all path lengths $k$:
\begin{equation}
\PF
=
\left[
\Wt
+
\Wt^2
+
\Wt^3
+
\cdots
\right]_{st}
=
\left[
(
\mathbf{I}-
\Wt
)^{-1}
- \mathbf{I}
\right]_{st}.
\label{eq:PFFromWt}
\end{equation}

However, computing quantities with the above approach can be costly, when considering different target nodes, $t$, as the matrix inverse appearing in Equation~(\ref{eq:PFFromWt}) has to be computed separately for each $t$.
A second way of computing the partition function $\PF$ is to first compute the \emph{fundamental matrix of all paths}, given by
\begin{equation}
\label{eq:Zmatrix}
\mathbf{Z} 
= \mathbf{I} + \mathbf{W} + \mathbf{W}^2 + \mathbf{W}^3 + \cdots
= \mathbf{(I-W)}^{-1},
\end{equation}
whose elements $z_{st}$ quantify the expected number of visits to node $t$ before being killed (i.e.\ transitioning to the cemetery node) during a killed random walk starting from node $s$ based the substochastic transition matrix $\mathbf{W}$.
Then, as was shown in \citep[Appendix B]{francoisse2017bag} and in \citep{kivimaki2014developments}, the partition functions $\PF$, from Equation~(\ref{eq:PF}), for any $s$ and $t$ such that $s \neq t$, can be computed based on elements of matrix $\mathbf{Z}$ of Equation~(\ref{eq:Zmatrix}) as
\begin{equation}
\label{eq:PFFromZ}
\PF = z_{st}/z_{tt}.
\end{equation}
Note, that, for any $s$-$t$-pair, the italic $z_{st}$ always refers to element $(s,t)$ of matrix $\mathbf{Z}$, whereas the calligraphic $\PF$ denotes the partition function, defined in Equation~(\ref{eq:PF}).
Based on this result, as discussed in \citep{francoisse2017bag} and \citep{kivimaki2014developments}, for any $s$ and $t$ such that $s \neq t$, the partition function of hitting paths $\PF$ can be shown to quantify the probability that a walker starting from node $s$ and moving according to the substochastic transition matrix $\mathbf{W}$ survives to node $t$ before being killed.

Finally, the matrix whose element $(s,t)$ contains the partition function $\PF$ from Equation~(\ref{eq:PF}) for all $(s,t) \in V \times V$ can be expressed, based on Equation~(\ref{eq:PFFromZ}), as
\begin{equation}
\label{eq:Zhmatrix}
\mathbf{Z}_{\mathrm{h}} 
= 
\mathbf{Z}
\mathbf{D}_{\mathbf{Z}}^{-1}
-
\mathbf{I},
\end{equation}
where $\mathbf{D}_{\mathbf{Z}}$ is the $(n \times n)$ diagonal matrix of the diagonal elements of $\mathbf{Z}$.
Although this expression is more convenient for computing the partition functions between multiple (or all) $s$-$t$-pairs at once, in this paper we, however, rely more on the form in Equation~(\ref{eq:PFFromWt}), considering one target node $t$ at a time.

Note that the nonnegativity, substochasticity and irreducibility (as $G$ is strongly connected) of $\mathbf{W}$ imply that the spectral radius of $\mathbf{W}$ is less than unity, $\rho(\mathbf{W}) < 1$.
Also, although $\Wt$ is not irreducible, as the graph is not strongly connected after the removal of the edges leaving $t$, we nevertheless have $\rho(\Wt) < 1$, as setting a row to zero in a matrix cannot increase its spectral radius.
This ensures, based on Perron-Frobenius theory, that the matrix series in Equations~(\ref{eq:PFFromWt}) and (\ref{eq:Zmatrix}) converge and can be computed using the presented matrix inverses~\citep{meyer2000matrix}.

The partition function $\PF$ can be manipulated in order to derive the computation of various meaningful quantities related to the RSP framework.
In particular, the \emph{expected number of traversals over an edge $(i,j)$}, when moving according to the RSP distribution over hitting paths from $s$ to $t$, is given by (see, e.g.\ \citep{kivimaki2016two})
\begin{equation}
\label{eq:ExpVij}
\expVij(s,t)
=
\sum_{\wp\in \Pset} 
\Prob(\wp) 
\Vij(\wp)
=
-\dfrac{1}{\beta}
\dfrac{\partial \log \PF}{\partial c_{ij}}
,
\end{equation}
where $\Vij(\wp)$ denotes the number of times edge $(i,j)$ appears on path $\wp$.
This derives from the fact that 
\begin{equation}
\label{eq:VisitsOverP}
\frac{\partial \tc (\wp)}{\partial c_{ij}}  = \Vij(\wp).
\end{equation}

Similarly, the \emph{expected cost} of moving from $s$ to $t$ when moving according to the RSP distribution over hitting paths from $s$ to $t$ is (see e.g.\ \citep{kivimaki2014developments})
\begin{equation}
\label{eq:ExpectedCost}
\expCst 
= 
\sum_{\wp\in \Pset} 
\Prob(\wp) 
c(\wp)
=
-\dfrac{\partial \log \PF}{\partial \beta}
.
\end{equation}
By altering the temperature, the expected cost over the RSP distribution interpolates between the least cost (when $\beta \rightarrow \infty$) and the expected hitting cost (when $\beta \rightarrow 0^{+}$) from $s$ to $t$.
Accordingly, the symmetrized version $\expCst + \expCts$ interpolates between the least cost distance (multiplied by $2$), and the commute cost distance.

Thanks to the above derivative expressions (\ref{eq:ExpVij}) and (\ref{eq:ExpectedCost}), the above quantities can be expressed in terms of elements of matrices $\mathbf{W}$ (from Equation~(\ref{eq:Wmatrix})) and $\mathbf{Z}$ (from Equation~(\ref{eq:Zmatrix})) as (again, see \citep{kivimaki2014developments, kivimaki2016two})
\begin{equation}
\label{eq:ExpVComputation}
\expVij(s,t)
=
\mleft(
\dfrac{z_{si}}{z_{st}}
-
\dfrac{z_{ti}}{z_{tt}}
\mright)
w_{ij}
z_{jt}
\end{equation}
and
\begin{equation}
\label{eq:ExpCComputation}
\expCst 
=
\sum_{(i,j)\in E}
\expVij(s,t)
c_{ij}
=
\sum_{(i,j)\in E}
\mleft(
\dfrac{z_{si}}{z_{st}}
-
\dfrac{z_{ti}}{z_{tt}}
\mright)
w_{ij}
c_{ij}
z_{jt}.
\end{equation}

From (\ref{eq:ExpVComputation}), the \emph{expected number of visits to node $i$} with respect to the RSP distribution over hitting paths from $s$ to $t$, can be computed as
\begin{equation}
\label{eq:ExpVNode}
\expVi(s,t)
=
\sum_{j\in Succ(i)}
\expVij(s,t)
=
\mleft(
\dfrac{z_{si}}{z_{st}}
-
\dfrac{z_{ti}}{z_{tt}}
\mright)
z_{it}.
\end{equation}
The quantity $\expVi = \sum_{s \in V} \sum_{t \in V} \expVi(s,t)$ was coined in~\citep{kivimaki2016two} as the \emph{simple RSP betweenness centrality} of node $i$, which interpolates between the \emph{shortest path likelihood betweenness} (when $\beta \rightarrow \infty$), which is strongly related to the standard shortest path betweenness centrality~\citep{freeman1978centrality}, and the stationary distribution of the unbiased random walk (when $\beta \rightarrow 0^{+}$) on the graph.

The RSP distribution over paths from $s$ to $t$ can also be interpreted as defining a \emph{biased random walk}, with new transition probabilities containing a drift towards $t$.
The \emph{biased transition probabilities} towards $t$ can be obtained by using (\ref{eq:ExpVComputation}) and (\ref{eq:ExpVNode}) as
\begin{equation}
\label{eq:Pbias}
p_{ij}^{(t)} 
= 
\dfrac{\expVij}{\expVi}
=
\dfrac{w_{ij}z_{jt}}{z_{it}},
\end{equation}
for all $i \neq t$.
As we consider hitting paths, $\bar{\V}_{t} = 0$, and the biased transition probabilities are separately defined as zero for the target node, i.e.~$p_{tj}^{(t)} = 0$ for all $j$.
As can be seen from Equation (\ref{eq:Pbias}), the biased transition probabilities are independent of $s$, i.e.~$p_{ij}^{(t)}$ is the same for any edge $(i,j)$ for all starting nodes $s$.
The biased transition probabilities can be used for generating individual paths over a graph according to the RSP distribution.

\subsection{Related work}
\label{sec:RelatedWork}
The RSP framework was originally defined in order to develop distance and centrality measures on graphs for graph-based machine learning purposes~\citep{saerens2009randomized,yen2008a-family,kivimaki2014developments}.
Variants of the RSP framework have been developed for different use cases, including the sum-over-paths \citep{mantrach2010sum} and bag-of-paths \citep{francoisse2017bag} frameworks.
Parameter tuning in applications of these frameworks (e.g.~in clustering in \citep{kivimaki2014developments} and semi-supervised classification in \citep{lebichot2014semisupervised}) has usually been dealt with by using a held-out tuning data set, and by searching for the parameter value giving best performance on this data.
In this paper we approach the parameter estimation problem from a more fundamental view.
Note, however, that the setting is quite different from the previous applications of the RSP framework, and we leave for future work the investigation of using the results derived here for parameter tuning in other application areas of RSPs.

The RSP framework shares similarities with the \emph{logit assignment model} and its different variants proposed in transportation science \citep{dial1971probabilistic,ben1985discrete,ben1999discrete,prashker2004route}, and was originally inspired by such models~\citep{akamatsu1996cyclic}.
Logit assignment models are sometimes criticised for a couple of reasons. 
First, they can become computationally untractable when considering cycles on the network because of infinitely cumulating costs~\citep{oyama2017discounted}.
This problem has been alleviated, for instance, by restricting the path set to ``efficient'' paths~\citep{dial1971probabilistic}, by constraining the path lengths~\citep{oyama2019prism}, or by introducing a \emph{discount factor}~\citep{oyama2017discounted} that makes the model interpolate between unbiased and optimal behaviour similarly (but not equivalently) to the inverse temperature parameter $\beta$ in the RSP model.
A second criticism about logit assignment is that it has the \emph{independence from irrelevant alternatives (IIA)} property~\citep{ben1985discrete}, essentially meaning that an agent leaving from $s$ to $t$ according to the logit probabilities will more likely select an edge  that leads to more alternative $s$-$t$-paths, even if those paths are almost identical (i.e.\ irrelevant).
This issue has been addressed, for instance, with different \emph{nested} models~\citep{ben1973structure,mai2015nested}, where the set of paths (or \emph{choice set}) is partitioned into nests containing irrelevant alternatives.

The logit assignment models originate from \emph{random utility theory}, where edge costs (or utilities) are considered as a sum of a deterministic part and a random (``error'' or ``noise'') part, which leads to a logit distribution over paths similar to the Gibbs-Boltzmann distribution of the RSP framework~(see, e.g., \citep{ben1985discrete}) when minimizing expected cost.
In such a formulation, the expected cost actually corresponds to the concept of \emph{free energy}, or \emph{potential} in the RSP formalism~\citep{kivimaki2014developments,francoisse2017bag}. 
However, normally the RSP free energy is considered as the \emph{relative} free energy by regularizing costs in terms of relative entropy (instead of the Shannon entropy) with respect to the unbiased random walk, (see Section~\ref{sec:BackgroundAndMotivation}).
Thanks to this difference, the RSP model avoids the issue of intractability of computation that the logit assignment model faces with cycles on the network.
In more detail, the regularization based on relative entropy results in the substochasiticity of the likelihood matrix $\mathbf{W}$ defined in Equation~(\ref{eq:Wmatrix}), which ensures that the RSP model can be computed for any positive value of the inverse temperature parameter $\beta$, without having to restrict the set of paths.
The RSP model also alleviates the IIA issue in some cases, as the addition of an alternative route causes a change in the unbiased random walk, and thus the RSP probabilities.

The standard derivation of the RSP model considers cost-minimization subject to constrained relative entropy.
The relative entropy can be interpreted information-theoretically as a fixed degree of informedness, or knowledge that the random walker has of the environment (compared to an unbiased random walker); in other words, the walker aims at minimizing travel cost constrained on its knowledge of the network.
Such an information-theoretic interpretation can be expressed for some logit assignment models, as e.g.\ the \emph{rational inattention model} in \cite{matvejka2015rational}.
On the other hand, the RSP model can also be derived in a similar fashion to the derivation of the logit assignment model, based on augmented edge costs, and minimization of expected augmented path costs.
We leave these developments for future work, but mention them here to increase the motivation of using RSPs in modelling trajectory data, in addition to the discussion on the topic already in Section~\ref{sec:BackgroundAndMotivation}.

The literature behind logit assignment models is extensive, and also includes various studies for parameter estimation, also with maximum likelihood methods. 
The early works \citep{robillard1974calibration,fisk1977note} derive maximum likelihood methods for a parameter of the logit assignment model corresponding to the inverse temperature in our formulation.
Their results are similar to the results derived in our work for complete trajectories, although the works only consider networks without cycles.
Also, the \emph{dispersion-constrained model} in~\citep{anas1988statistical} is based on a maximum-entropy approach similar to the RSP framework, and involves the estimation of a temperature parameter.
Other works, such as \citep{fosgerau2013link,mai2015nested,mai2016method,oyama2017discounted,oyama2018link,oyama2019prism} focus, in various traffic-related settings, on the problem of estimating parameters on edges or parameters associated to features on edges, which in some cases corresponds to the problem of estimating the inverse temperature and in some case to the problem of estimating edge costs in the RSP framework (the latter of which is addressed only briefly in the current work, in Section \ref{sec:EstimationOfEdgeCost}).
The above cited works in logit assignment and related models is in many respects very close and similar to the RSP model.
However, one key difference is that, as already discussed above, the RSP model considered in this paper, by maximizing relative entropy, generalizes a pure, unbiased random walk, whereas the logit assignment models can be seen as maximizing the Shannon entropy over paths.
In addition, to the best of our knowledge, a similar definition and handling of data consisting of incomplete trajectories, as presented here, does not seem to appear in the logit assignment literature.

The problem of parameter estimation in the RSP framework has similarities with the more general problem of estimation of Markov chain transition probabilities from trajectory data (see e.g.~\citep{craig2002estimation, metzner2007generator, shelton2014tutorial}).
Indeed, when assuming the RSP model, a given value of $\beta$ defines the biased transition probabilities towards the target node, according to Equation~(\ref{eq:Pbias}), and thus estimating $\beta$ implies estimating the transition probabilities.
However, the general estimation of Markov chains, studied in the works cited above, does not involve assumptions on the distribution of paths on the graph, and also does not consider affinities, costs or other weights related to the transitions.
Moreover, Markov chain estimation has been mostly studied for continuous-time Markov jump processes, for the estimation of the \emph{generator matrix} of the process.
Such techniques have been developed e.g.~for healthcare research, for studying disease progression and treatment effects, where the temporal aspect is of course vital~\citep{craig2002estimation}.
Instead, the RSP framework is only analogous to a discrete time Markov chain, although extending the idea of RSPs to consider continuous time is a planned topic for future research.

Another problem related to the current work, especially the problem of model inference with incomplete trajectory data, is the inference of \emph{aggregate Markov chains}, which has been applied to ion channel modelling \citep{qin1997maximum}.
There the state space is divided into aggregates and only the aggregate that the system is in, instead of the actual state, is observed.
Aggregated Markov chains have also been only considered in the context of continuous-time chains.
Aggregated Markov chains bare similarities with the study of \emph{stochastic complementation} in discrete-time Markov chains~\citep{meyer1989stochastic}.

Besides the temporal aspect, there are also other differences between the general estimation of Markov chains and the estimation of RSP parameters.
Namely, the methods for Markov chain estimation normally require data containing observations at each state of the chain, as the transition probabilities for unobserved states cannot be estimated \citep{bladt2005statistical}.
The estimation of the inverse temperature parameter in the RSP model, however, does not require this, and it can be estimated even for a sparse set of incomplete trajectories, as derived and verified with experiments in Section~\ref{sec:IncompleteTrajectories}.

\section{Maximum likelihood estimation with complete trajectories}
\label{sec:CompleteTrajectories}
The rest of the paper is dedicated to deriving and verifying methods for computing the likelihood of the RSP model parameters when modelling trajectory data, and computing the maximum likelihood estimates (MLEs).
The focus is mostly on the likelihood of the inverse temperature parameter $\beta$.
For estimating $\beta$, the \emph{graph structure}, i.e.~the edge affinities (or the reference transition probabilities) and edge costs, are always assumed to be known completely and exactly.
The MLE of $\beta$ is denoted in general as $\betaMLE$.
However, in the expressions for the MLEs, $\beta$ does not normally appear explicitly, but the expressions involve other RSP quantities, which can be computed with the methods detailed in Section~\ref{sec:ComputationOfMainQuantities}.
Accordingly, $\betaMLE$ can then be found by performing a search that satisfies the expressed MLE criterion, or by other, more direct optimization means.

This section deals with data sets of \emph{complete trajectories} on a network, i.e.~where each node (and thus each edge) of the observed path has been recorded.
We derive the results by first considering the simple case where all trajectories are observed to go from one starting node $s$ to one destination node $t$, and then extend to trajectories between several $s$-$t$-pairs.
The methods are then validated with artificial data.
We also present briefly in this section a method for estimating the edge costs $c_{ij}$ from complete trajectories, when assuming that only the affinities or reference transition probabilities are known, but otherwise the focus of the paper is on estimating $\beta$.
Later, in Section \ref{sec:IncompleteTrajectories}, we tackle the problem of computing $\betaMLE$ when the data consists of incomplete trajectories, where only a part of the edges or nodes visited along each trajectory is observed.

\subsection{Single source and target}
Let $\Omega_{st}$ be a data set containing $K$ fully observed hitting paths going from $s$ to $t$.
\begin{theorem}
\label{thm:CompleteTrajs_st}
Assuming independence between the trajectories, the maximum likelihood estimate of the inverse temperature, $\hat{\beta}_{\mathrm{MLE}}$, given the data set $\Omega_{st}$, as described above, is the value of $\beta$ for which the expected cost of hitting $s$-$t$-paths over the RSP distribution, $\expCst$, from Equations~(\ref{eq:ExpectedCost}) and (\ref{eq:ExpCComputation}), satisfies,
\begin{equation}
\expCst 
= 
\dfrac{1}{K}
\sum_{\wp \in \Omega_{st}} 
\tc(\wp),
\end{equation}
\end{theorem}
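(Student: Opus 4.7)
The plan is to write down the joint likelihood of the $K$ independent trajectories under the RSP distribution, take its logarithm, and differentiate with respect to $\beta$; the derivative of $\log \PF$ has already been identified in the paper with $-\expCst$ (Equation~(\ref{eq:ExpectedCost})), so setting the score to zero will immediately produce the stated equation.

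Concretely, by the independence assumption and Equation~(\ref{eq:GibbsBoltzmann}), the likelihood of $\beta$ given $\Omega_{st}$ is
\begin{equation*}
\Lhood(\beta)
=
\prod_{\wp \in \Omega_{st}} \Prob(\wp)
=
\prod_{\wp \in \Omega_{st}}
\frac{\Prw(\wp) \exp(-\beta \tc(\wp))}{\PF},
\end{equation*}
so the log-likelihood is
\begin{equation*}
\log \Lhood(\beta)
=
\sum_{\wp \in \Omega_{st}} \log \Prw(\wp)
- \beta \sum_{\wp \in \Omega_{st}} \tc(\wp)
- K \log \PF.
\end{equation*}
The first term does not involve $\beta$ (the reference random walk is fixed), so differentiating with respect to $\beta$ and applying Equation~(\ref{eq:ExpectedCost}), which identifies $-\partial \log \PF / \partial \beta$ with $\expCst$, yields the score
\begin{equation*}
\frac{\partial \log \Lhood}{\partial \beta}
=
- \sum_{\wp \in \Omega_{st}} \tc(\wp)
+ K \, \expCst .
\end{equation*}
Setting this expression equal to zero and dividing by $K$ produces exactly the MLE condition stated in the theorem.

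To promote this critical-point argument to a genuine maximum, I would observe that the RSP distribution on $\Pset$ is an exponential family in $\beta$ with sufficient statistic $-\tc(\wp)$ and log-partition function $\log \PF$. Standard exponential-family calculus then gives $\partial^2 \log \PF / \partial \beta^2 = \Var_{\Prob}[\tc(\wp)] \geq 0$, so $\log \Lhood$ is concave in $\beta$ (strictly concave as soon as there are at least two hitting paths of distinct cost, which holds in any non-trivial strongly connected graph). Hence the stationary point is the unique global maximizer, confirming that $\betaMLE$ is the value for which $\expCst$ equals the empirical mean path cost.

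The only substantive technical point is justifying the interchange of differentiation and summation when differentiating $\log \PF = \log \sum_{\wp \in \Pset} \Prw(\wp)\exp(-\beta \tc(\wp))$ under an a priori infinite sum over hitting paths. This is already implicitly handled by the convergence results recalled after Equation~(\ref{eq:Zmatrix}), where $\rho(\Wt) < 1$ guarantees absolute convergence of the series defining $\PF$ for every $\beta > 0$, with geometric decay that is uniform on compact subsets of $(0,\infty)$; dominated convergence then legitimizes term-by-term differentiation, and no further obstacle remains.
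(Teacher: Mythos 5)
Your proposal is correct and follows essentially the same route as the paper's own proof: form the product likelihood under independence, take the logarithm, differentiate, and use Equation~(\ref{eq:ExpectedCost}) to identify $-\partial \log \PF / \partial \beta$ with $\expCst$, which gives the stated stationarity condition. Your additional exponential-family concavity argument and the justification for term-by-term differentiation go slightly beyond the paper, which only states the first-order (necessary) condition, but they are sound and welcome refinements rather than a different approach.
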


\begin{proof}
Based on the assumption of independence between the trajectories, the likelihood is simply the product of the individual path probabilities from Equation~(\ref{eq:GibbsBoltzmann}):
\begin{equation}
\begin{aligned}
\label{eq:FullSinglePathLikelihood}
\mathcal{L}
(\beta \giv \Omega_{st})
= 
\prod_{\wp \in \Omega_{st}} 
\Prob(\wp)
= 
\prod_{\wp \in \Omega_{st}} 
\dfrac{\Prw(\wp) \exp(-\beta \tc(\wp))}{\PF}
\end{aligned}
\end{equation}
and the log-likelihood is
\begin{equation}
\label{eq:FullSinglePathLogLikelihood}
\logL
= 
\sum_{\wp \in \Omega_{st}} 
\mleft(
\log \Prw(\wp) - \beta \tc(\wp) 
\mright) 
- K \log \PF.
\end{equation}
Taking the derivative of the log-likelihood with respect to $\beta$, and setting it to zero gives us the necessary optimality condition for the MLE:
\begin{equation}
\label{eq:FullSinglePathLikelihoodDerivative}
\dfrac{\partial \logL}{\partial \beta}
= -\sum_{\wp \in \Omega} \tc(\wp) - K \dfrac{\partial \log \PF}{\partial \beta}
= 0
\ 
\Longleftrightarrow
\ 
\expCst 
= 
\dfrac{1}{K}
\sum_{\wp \in \Omega} 
\tc(\wp)
\end{equation}
where we used Equation (\ref{eq:ExpectedCost}).
\end{proof}

Theorem~\ref{thm:CompleteTrajs_st} simply statest that the likelihood is maximized by the value $\betaMLE$ for which the expected cost over the RSP distribution equals the average cost of the observed trajectories.
A similar result is common in a large class of maximum entropy estimation procedures~\citep{kapur1992entropy}, whereas it is presented here for the first time in the literature related to RSPs.
As discussed above, Equation~(\ref{eq:FullSinglePathLikelihoodDerivative}) cannot be solved explicitly for $\beta$. 
Instead, the value $\betaMLE$ has to be estimated numerically by finding the root of the equation.

\subsection{Multiple sources and targets}
\label{sec:CompleteTrajsMultipleST}

Let then $\Omega$ denote a data set of trajectories between different $s$-$t$-pairs, $\Omega_{st} \subset \Omega$ the set of trajectories in the data that go from a particular $s$ to a particular $t$, and $K_{st} = | \Omega_{st} |$ the number of trajectories from $s$ to $t$ (with $K_{st}=0$ if none of the trajectories go from $s$ to $t$).

\begin{theorem}
\label{thm:CompleteTrajs}
Given data set $\Omega$, and assuming, again, independence between the trajectories, $\betaMLE$ is the value for which the RSP expected costs, $\expCst$, for all $s$ and $t$ satisfy
\begin{equation}
\sum_{s,t \in V} 
K_{st} 
\mleft< \tc \mright>_{st} 
= 
\sum_{s,t \in V} 
\sum_{\wp_{st} \in \Omega_{st}} 
\tc(\wp_{st})
\end{equation}
\end{theorem}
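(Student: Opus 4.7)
The plan is to proceed in close analogy with the proof of Theorem~\ref{thm:CompleteTrajs_st}, but with the likelihood factored across source–target pairs. Since the trajectories are independent, and each $\wp_{st}\in\Omega_{st}$ has RSP probability given by Equation~(\ref{eq:GibbsBoltzmann}) with the partition function $\PF$ corresponding to its own $(s,t)$ pair, I would write the joint likelihood as
\begin{equation*}
\mathcal{L}(\beta\mid\Omega)=\prod_{s,t\in V}\prod_{\wp_{st}\in\Omega_{st}}\frac{\Prw(\wp_{st})\exp(-\beta\tc(\wp_{st}))}{\PF},
\end{equation*}
and then take logarithms to get $\log\mathcal{L}=\sum_{s,t}\sum_{\wp_{st}\in\Omega_{st}}\bigl(\log\Prw(\wp_{st})-\beta\tc(\wp_{st})\bigr)-\sum_{s,t}K_{st}\log\PF$.

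Next I would differentiate with respect to $\beta$. The reference random-walk term $\log\Prw(\wp_{st})$ does not depend on $\beta$ and drops out. The linear-in-$\beta$ term contributes $-\sum_{s,t}\sum_{\wp_{st}\in\Omega_{st}}\tc(\wp_{st})$. For the partition-function term I would invoke Equation~(\ref{eq:ExpectedCost}), which yields $\partial\log\PF/\partial\beta=-\expCst$ for each $(s,t)$. Setting the derivative to zero gives
\begin{equation*}
-\sum_{s,t\in V}\sum_{\wp_{st}\in\Omega_{st}}\tc(\wp_{st})+\sum_{s,t\in V}K_{st}\expCst=0,
\end{equation*}
which is exactly the stated optimality condition.

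The only subtle point to watch is that, although the partition functions $\PF$ differ across pairs, they all share the common matrix $\mathbf{W}$ from Equation~(\ref{eq:Wmatrix}), so the derivative identity $\partial\log\PF/\partial\beta=-\expCst$ remains valid pair-by-pair without any cross-pair coupling; this is what makes the condition decouple into a single scalar equation summed over $(s,t)$. I would also briefly note, as in the single-pair case, that the equation cannot be solved in closed form for $\beta$, so $\betaMLE$ must be obtained by a numerical root-finding procedure on the scalar function $\beta\mapsto\sum_{s,t}K_{st}\expCst(\beta)-\sum_{s,t}\sum_{\wp_{st}\in\Omega_{st}}\tc(\wp_{st})$. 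There is no real obstacle here beyond carefully tracking the dependence of each $\PF$ on $\beta$ through the shared edge likelihoods $w_{ij}=\prw_{ij}\exp(-\beta c_{ij})$.
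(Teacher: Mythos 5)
Your proposal is correct and follows essentially the same route as the paper's proof: factor the likelihood over $s$-$t$-pairs as in Equation~(\ref{eq:MultipleFullPathsLogLikelihood}), differentiate with respect to $\beta$, apply Equation~(\ref{eq:ExpectedCost}) pair-by-pair, and set the derivative to zero. Your added remark that each $\PF$ depends on $\beta$ only through the shared edge likelihoods, so the condition decouples into a single scalar equation, is a correct and harmless elaboration of what the paper leaves implicit.
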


\begin{proof}
The likelihood, given $\Omega$, is simply the product of the likelihoods for each $s$-$t$-pair and, accordingly, the log-likelihood can be written as
\begin{equation}
\label{eq:MultipleFullPathsLogLikelihood}
\logL 
(\beta \giv \Omega)
= \sum_{s,t \in V} \! \mleft( - K_{st} \log \PF + \sum_{\wp_{st} \in \Omega_{st}} \mleft( \log \Prw(\wp_{st}) - \beta \tc(\wp_{st}) \mright) \mright).
\end{equation}
Again, as in the proof of Theorem~(\ref{thm:CompleteTrajs_st}), setting the derivative to zero, we see that the MLE of $\beta$ should satisfy
\begin{align}
\sum_{s,t \in V} 
K_{st} 
\mleft< \tc \mright>_{st} 
= 
\sum_{s,t \in V} 
\sum_{\wp_{st} \in \Omega_{st}} 
\tc(\wp_{st}).
\label{eq:MultipleFullPathsLikelihoodDerivative}
\end{align}
\end{proof}

Thus, as in Theorem~\ref{thm:CompleteTrajs_st}, as well as often in maximum entropy maximum entropy estimation methods~\citep{kapur1992entropy}, the most likely value of $\beta$ corresponds to the one for which the empirical average agrees with the expected value given by the model.
Likewise, as in Section~\ref{sec:CompleteTrajectories}, Equation~(\ref{eq:MultipleFullPathsLikelihoodDerivative}) cannot be solved explicitly for $\beta$, but $\betaMLE$ has to be determined numerically.

\subsection{Validation of $\betaMLE$ with complete trajectories}
\label{sec:ValidationOfFullPathMLE}

\begin{figure}
	\centering
	\subcaptionbox{\label{fig:GaussianLandscape}}
	{\includegraphics[width=0.48\textwidth, trim=2cm 1.2cm 1.2cm 0.8cm, clip=true]{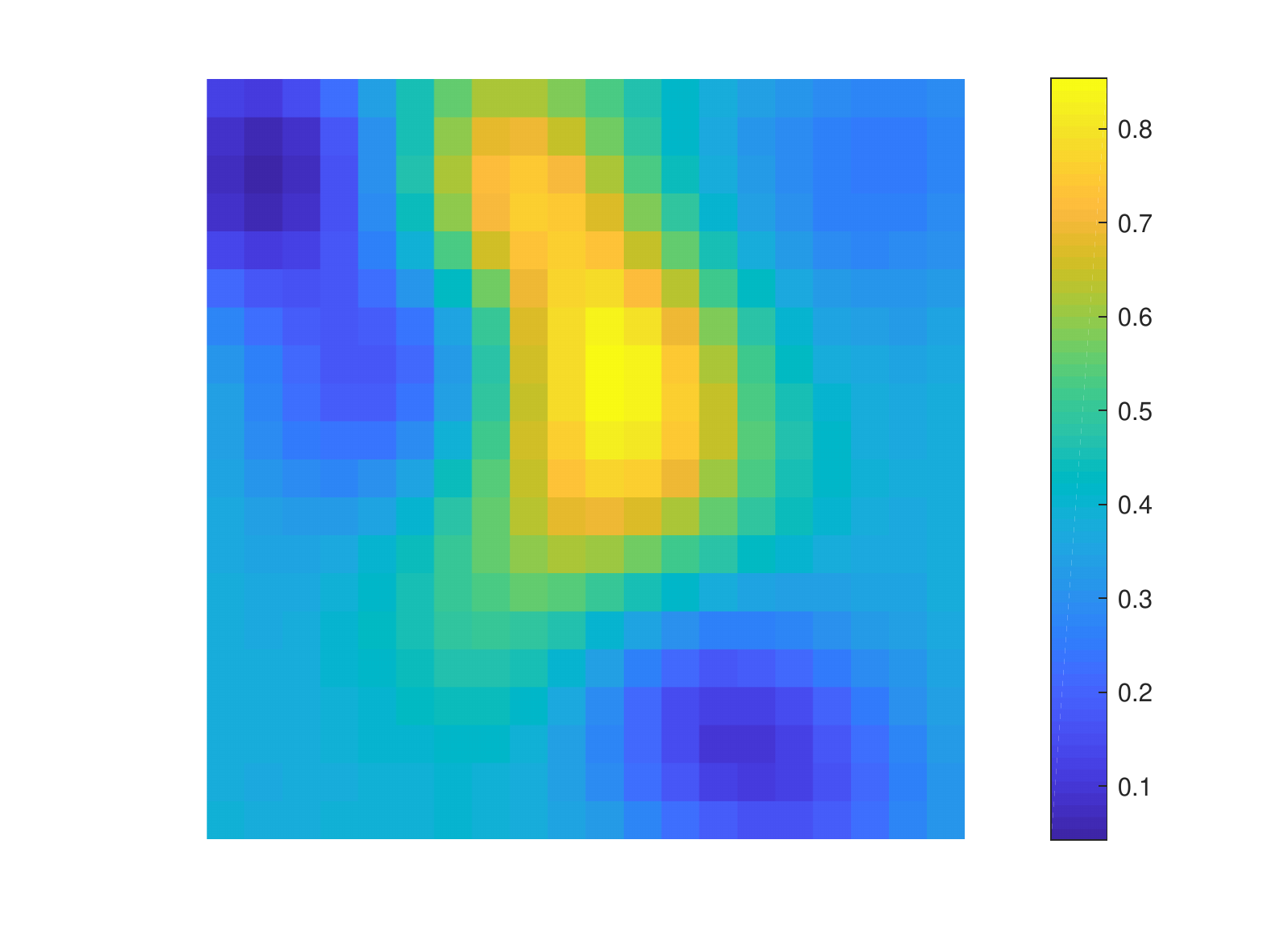}}	
	\subcaptionbox{\label{fig:LFR_graph}}
	{\includegraphics[width=0.48\textwidth, trim=4cm 4cm 4cm 4cm, clip=true]{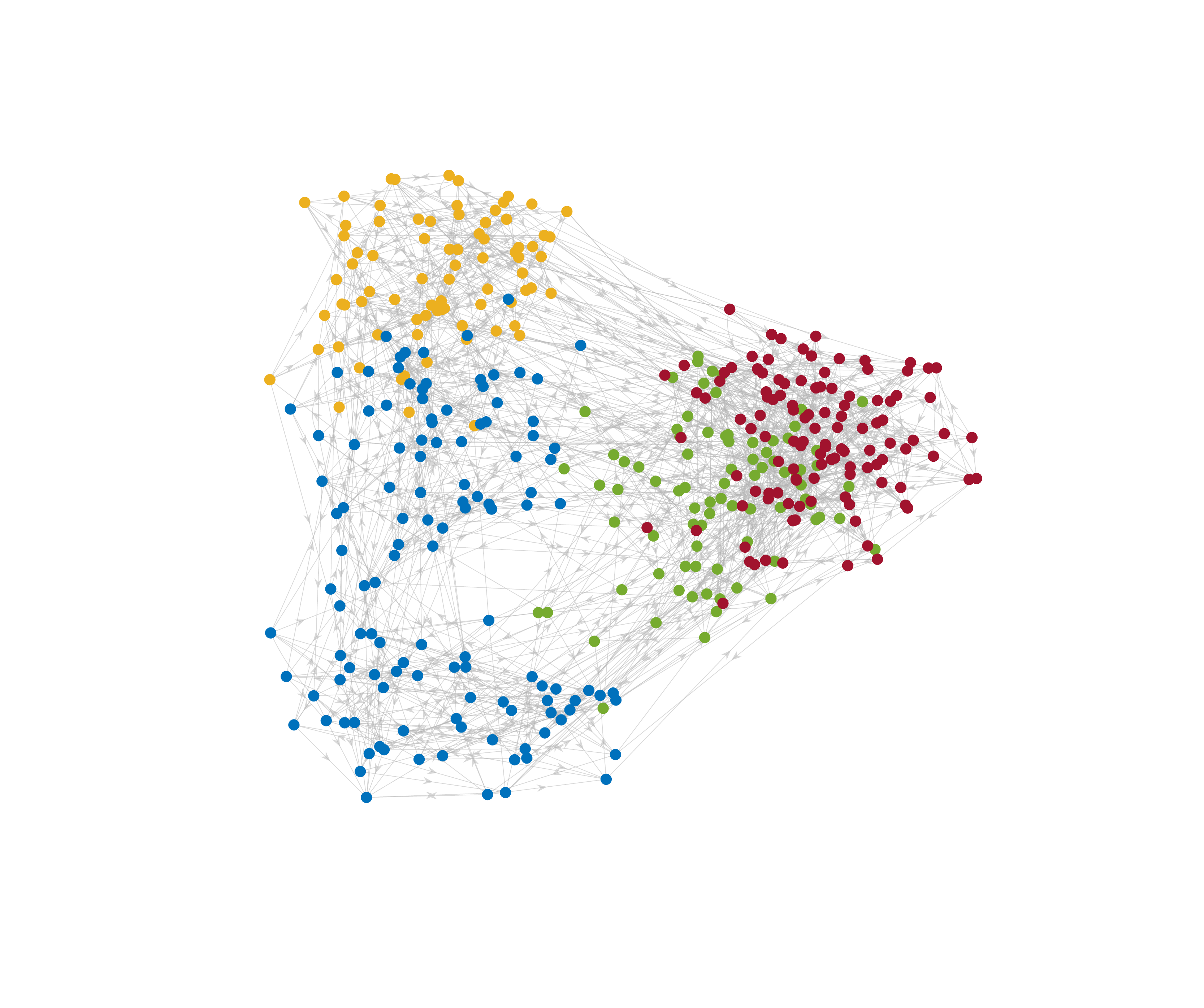}}
	\caption{Two of the three graphs used for validating the MLE methods. (A): The generated $20 \times 20$ Gaussian landscape. Edge costs are determined by costs on pixels, shown by the heatmap, so that for pixel $j$, $c_{ij} = c_{j}$ for all $i$ s.t.~$(i,j) \in E$. (B): The graph generated with the LFR algorithm. The node colors depict the nodes' communities.
	}
	\label{fig:ExperimentGraphs}
\end{figure}

The above results and their accuracy were evaluated by generating trajectories on three artificial graphs.
The first graph is a simple $20 \times 20$ grid with uniform edge costs.
The second is a $20 \times 20$ Gaussian landscape depicted as a heatmap in Figure~\ref{fig:GaussianLandscape}.
In that graph, each node (corresponding to a pixel in the heatmap) is assigned a cost based on a mixture of Gaussian distributions on a plane overlaid on the grid. 
In more detail, the mixture consists of five low-cost and five high-cost Gaussian patches, which cause decrease and increase of the costs from a base value of $0.5$.
The edge costs on the graph are then determined by the cost assigned to the ending node of the edge.
In both the uniform grid and the Gaussian landscape, each node is connected to its adjacent and diagonal neighbors and the cost of the diagonal connections is multiplied by $\sqrt{2}$.
Affinities are fixed as reciprocals of costs, i.e.~$a_{ij} = 1/c_{ij}$, and reference transition probabilities (as explained earlier) as the normalized affinities, $\prw_{ij} = a_{ij} / \sum_{k} a_{ik}$.

The third graph used for validating our results is a weighted directed graph generated with the Lanchichinetti-Fortunato-Radicchi (LFR) algorithm~\cite{lancichin2009benchmarks} designed for generating artificial complex networks with a community structure.
We simply generated one LFR graph with 400 nodes forming 5 communities, with the mixing parameters for nodes and weights set to $0.2$, the degree sequence exponent to $4$ and the average degree of nodes set to $4$.
A visualization of the resulting LFR graph used in the experiments is depicted in Figure~\ref{fig:LFR_graph}, where the node colors represent the community of each node.
The weighted-network version of the LFR algorithm generates edge weights $w_{ij}$ that reflect the strength of connection between nodes respecting the community structure.
In order to study the effect of the possible independence between edge affinities and costs on the estimation results, we defined the affinities on LFR graph simply as $a_{ij} = 1$ for all $(i,j) \in E$, and used the weights $w_{ij}$ generated by the LFR algorithm for defining costs by taking the reciprocal, i.e., $c_{ij} = 1/w_{ij}$.
Thus, considering this graph in the validation experiments, in addition to the two grid graphs introduced above, ensures that the methods developed in this work also work properly with other network structures as well as with independent edge affinities and costs.

Using a set of values for $\beta$, 200 paths were generated on both landscapes based on each studied value of $\beta$.
Each path was generated by first drawing an $s$-$t$-pair uniformly randomly on the grid, however only accepting node pairs that were at least 3 steps apart on the grid. 
A path between each $s$-$t$-pair was then generated by using the biased transition probabilities from Equation~(\ref{eq:Pbias}).
Then, for each such set of paths $\Omega$, the value $\betaMLE(\Omega)$ was inferred by finding, by a simple line search, the value of $\beta$ that satisfied Equation (\ref{eq:MultipleFullPathsLikelihoodDerivative}).
The above procedure was repeated 10 times for each studied value of $\beta$ and the mean and standard deviation of the MLEs was recorded.
The results in Table~\ref{tab:MultipleFullPaths} show that the MLEs, on average, indeed closely match the true values of $\beta$ used for generating the paths.

\begin{table}
\begin{center}
{\footnotesize 
\begin{tabular}{r|c c c}
$\beta$
& $\hat{\beta}_{\mathrm{MLE}}$, uniform grid
& $\hat{\beta}_{\mathrm{MLE}}$, simulated landscape
& $\hat{\beta}_{\mathrm{MLE}}$, LFR graph
\\
\hline
$   0.001$ & $  0.00096 \pm 0.00020$ & $  0.00111 \pm 0.00024$ & $  0.00113 \pm 0.00032 $\\
$   0.005$ & $  0.00486 \pm 0.00053$ & $  0.00526 \pm 0.00064$ & $  0.00490 \pm 0.00090 $\\
$    0.01$ & $  0.00970 \pm 0.00085$ & $  0.01029 \pm 0.00115$ & $  0.01056 \pm 0.00140 $\\
$    0.05$ & $  0.04874 \pm 0.00306$ & $  0.04894 \pm 0.00511$ & $  0.05228 \pm 0.00776 $\\
$     0.1$ & $  0.09785 \pm 0.00497$ & $  0.09956 \pm 0.00392$ & $  0.10351 \pm 0.01329 $\\
$     0.5$ & $  0.49601 \pm 0.01908$ & $  0.50897 \pm 0.02236$ & $  0.49371 \pm 0.04843 $\\
$       1$ & $  1.01719 \pm 0.03833$ & $  0.99922 \pm 0.02422$ & $  0.97016 \pm 0.07956 $\\
$       5$ & $  5.07901 \pm 0.23531$ & $  4.99453 \pm 0.17301$ & $  5.10122 \pm 0.38010 $\\
$      10$ & $ 10.08117 \pm 1.04427$ & $ 10.05533 \pm 0.35628$ & $ 10.32488 \pm 0.79659 $\\
\end{tabular}

}
\end{center}
\caption{The MLEs in the experiment with complete observed trajectories. The first column on the left shows the value of $\beta$ used for generating the paths and the two other columns show the mean $\pm$ the standard deviation of the MLEs over 10 repetitions.}
\label{tab:MultipleFullPaths}
\end{table}

\subsection{Estimation of edge costs}
\label{sec:EstimationOfEdgeCost}
So far we have only focused on the problem of estimating the inverse temperature parameter $\beta$ in situations where the graph structure, i.e.~the edge affinities (or the reference transition probabilities) and the edge costs are known a priori.
In this section, however, we consider briefly the problem of estimation of edge costs from trajectories.
Thus, assume that only the edge affinities (or the reference transition probabilities) are known a priori.
If the affinities cannot be quantified, but the graph structure (i.e.\ which nodes are connected by edges) is known, then we can simply assume affinities 
This assumption can be reasonable, as random walk behaviour can be easier to model than 
We then try to estimate the edge costs based on the observed trajectories.

It turns out that this estimation problem actually contains the problem of estimating $\beta$, meaning that solving the edge cost estimation problem also solves the estimation problem of $\beta$.
This is due to the fact that concerning the RSP distribution over paths (Equation (\ref{eq:GibbsBoltzmann})), parameter $\beta$ can be considered as a simple scaling factor of the path costs.
Namely, consider a data set of trajectories is generated according to the RSP distribution with a fixed inverse temperature, say $\beta^{*}$, which is unknown to the user.
But this is the same distribution as one would obtain by considering the RSP distribution with $\beta=1$, but on a modified graph with edge costs 
$\hat{c}_{ij} = \beta^{*} c_{ij}$, 
as then, for each $\wp$,
\begin{equation}
\label{eq:ScaledPathCostIsCostOfScaledEdgeCosts}
\beta^{*}\tc(\wp) 
= 
\beta^{*}
\!\! 
\sum_{l=1}^{L(\wp)} 
\!\! 
c_{\wp(l-1),\wp(l)} 
=
\sum_{l=1}^{L(\wp)} 
\!\! 
\hat{c}_{\wp(l-1),\wp(l)} 
=
\hat{c}(\wp).
\end{equation}
Expressed conversely, if we try to estimate the costs from the trajectories, we may simply assume that $\beta=1$, as a result of which we should end up with cost estimates $\hat{c}_{ij}$.
But these estimates contain the information of both the ``original'' edge costs and the value $\beta^{*}$ used to generate the data on the original graph.
This shows that the estimation of $\beta$ is in practice only a subproblem of the edge cost estimation problem. 

We now derive the MLE of edge costs in the case of complete trajectories, given a data set $\Omega$, as in Section~\ref{sec:CompleteTrajsMultipleST}.
For this, we assume the value $\beta=1$, and search for edge costs that maximize the likelihood of the trajectories.
\begin{theorem}
Given data set $\Omega = \cup_{s,t\in V} \Omega_{st}$, with $K_{st} = |\Omega_{st}|$, as in Section~\ref{sec:CompleteTrajsMultipleST}, and assuming that $\beta=1$, the MLE of the edge cost parameters $c_{ij}$, for all $(i,j) \in E$ is the value for which the expected number of visits over each edge $(i,j)$ over hitting $s$-$t$-paths, $\expVij(s,t)$, (from Equations (\ref{eq:ExpVij}) and (\ref{eq:ExpVComputation})) satisfy
\begin{equation}
\label{eq:MultipleSTFullPathsCostMLE}
\sum_{s,t \in V}
K_{st}
\expVij(s,t)
= 
\sum_{s,t \in V}
\sum_{\wp_{st} \in \Omega_{st}}
\Vij(\wp_{st}).
\end{equation}
\end{theorem}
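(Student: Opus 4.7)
The plan is to mirror the proof of Theorem~\ref{thm:CompleteTrajs}, but take partial derivatives with respect to each edge cost $c_{ij}$ rather than with respect to $\beta$. Since the affinities (and hence the reference transition probabilities $\prw_{ij}$) are assumed known and held fixed, the factors $\Prw(\wp)$ in the path probabilities do not depend on the $c_{ij}$'s being optimized, which simplifies the differentiation considerably.

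Concretely, first I would write the log-likelihood for $\beta=1$ in the same form as Equation~(\ref{eq:MultipleFullPathsLogLikelihood}):
\begin{equation*}
\logL(\{c_{ij}\} \giv \Omega)
=
\sum_{s,t\in V}\mleft(
- K_{st}\log\PF
+ \sum_{\wp_{st}\in\Omega_{st}}
\bigl(\log\Prw(\wp_{st}) - \tc(\wp_{st})\bigr)
\mright).
\end{equation*}
Next, I would differentiate with respect to a single edge cost $c_{ij}$. Two ingredients carry the argument: Equation~(\ref{eq:VisitsOverP}), which gives $\partial\tc(\wp)/\partial c_{ij}=\Vij(\wp)$, and Equation~(\ref{eq:ExpVij}) evaluated at $\beta=1$, which gives $\partial\log\PF/\partial c_{ij}=-\expVij(s,t)$. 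Substituting these and setting the resulting derivative to zero produces exactly Equation~(\ref{eq:MultipleSTFullPathsCostMLE}).

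The routine part is this termwise differentiation; the substantive issue I expect to be the main obstacle is not the derivation of the first-order condition itself but its interpretation and use. Unlike the scalar case for $\betaMLE$, the condition is a coupled system of $|E|$ nonlinear equations, because each $\expVij(s,t)$ depends on all the edge costs through the matrices $\mathbf{W}$ and $\mathbf{Z}$ of Section~\ref{sec:ComputationOfMainQuantities}. A direct line search is therefore unavailable, and questions of identifiability arise: edges that are never traversed by any trajectory have zero empirical count, yet their RSP expected traversals are strictly positive, so the system can only be satisfied by driving some costs to infinity. A practical approach would be to restrict optimization to the subgraph of edges actually visited, or to add a regularizer; in either case the first-order condition above is the key structural output of the theorem, while numerical solution is left to downstream iterative schemes.
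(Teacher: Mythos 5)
Your proposal is correct and follows essentially the same route as the paper: differentiate the complete-trajectory log-likelihood (with $\beta=1$) with respect to each $c_{ij}$, using Equations~(\ref{eq:VisitsOverP}) and (\ref{eq:ExpVij}) to identify the derivative as $\sum_{s,t} \bigl( K_{st}\expVij(s,t) - \sum_{\wp_{st}\in\Omega_{st}}\Vij(\wp_{st}) \bigr)$, and set it to zero; the paper merely phrases this first for a single $s$-$t$-pair before summing over pairs. Your added remarks on the coupled nature of the system and identifiability of unvisited edges go beyond the paper's proof but do not affect its validity.
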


\begin{proof}
Recall the log-likelihood of observing a set $\Omega_{st}$ of $K$ trajectories from $s$ to $t$, from Equation (\ref{eq:FullSinglePathLogLikelihood}) (now with $\beta=1$).
Using Equations~(\ref{eq:ExpVij}) and (\ref{eq:VisitsOverP}), the partial derivative of the log-likelihood w.r.t.\ an edge cost $c_{ij}$ is 
\begin{equation}
\label{eq:FullPathLogLCostDerivative}
\dfrac{\partial \logL}{\partial c_{ij}}
= 
K 
\expVij(s,t)
-
\sum_{\wp \in \Omega_{st}}
\Vij(\wp),
\end{equation}
where, for recollection, $\Vij(\wp)$ is the number of traversals over edge $(i,j)$ along path $\wp$ and $\expVij(s,t)$ is its expectation with respect to the RSP probability distribution over hitting paths from $s$ to $t$.

Setting the derivative in Equation~(\ref{eq:FullPathLogLCostDerivative}) to zero, we see that the MLE of an edge cost $c_{ij}$ should satisfy
\begin{equation}
\label{eq:FullPathsCostMLE}
\expVij(s,t)
= \dfrac{1}{K}
\sum_{\wp \in \Omega_{st}}
\Vij(\wp).
\end{equation}
The theorem follows directly by considering the data set as $\Omega = \cup_{s,t} \Omega_{st}$, with $K_{st} = |\Omega_{st}|$ where ${\Omega_{st} = \big\{ \wp \in \Omega \giv \wp \in \Pset \big\}}$ for all $s,t \in V$.
\end{proof}

Again, as Theorems~\ref{thm:CompleteTrajs_st} and \ref{thm:CompleteTrajs}, this result is analoguous to corresponding results in maximum entropy modelling~\cite{kapur1992entropy}, but appear here for the first time in the context of RSPs.
Unfortunately, as was the case with the MLE of $\beta$, the above cannot be solved analytically for $c_{ij}$.
Moreover, the dimension of the problem is now the number of edges, so the solution cannot be found only by a simple line search.
The MLE of the edge costs can, instead, be sought e.g.\ by performing a gradient ascent towards the direction given by the edge derivatives in Equation (\ref{eq:FullPathLogLCostDerivative}) above.
We leave the investigation of this idea for future work, and focus in the remainder of the paper on the estimation of $\beta$.

\section{Maximum likelihood estimation with incomplete trajectories}
\label{sec:IncompleteTrajectories}
This section deals with the case where data consists of \emph{incomplete trajectories} from $s$ to $t$.
For a path $\wp \in \Pset$, an incomplete trajectory means a subsequence of either the edge sequence or the node sequence constituting $\wp$.
For each such trajectory, we assume that the starting and target nodes, $s$ and $t$ are known and $t$ is considered as absorbing; this assumption is discussed in more depth later on.

In this section, we derive methods for computing the likelihood of the inverse temperature parameter $\beta$ given an observed data set $\Omega$ of incomplete trajectories, which constitute the most relevant contribution of this work.
These methods provide useful tools for many applications, because often in empirical tracking studies trajectories can only be recorded partially.
We derive the likelihood computation methods from the simplest case to the general one.
First we consider the cases where only one edge or one node is observed from one trajectory, and then generalize to cases where multiple edges or multiple nodes are observed over several trajectories.
We also demonstrate experimentally the behavior of the likelihood and validity of the methods.

The methods developed here for likelihood computation for incomplete trajectories are general in the sense that they make the most na\"{i}ve possible assumption on the method of observation, i.e.~the sampling of edges or nodes along the trajectories.
In more detail, we assume that the observed edges have been drawn from a uniform distribution over the actual trajectory.
Although simplistic, this assumption makes the method applicable to many different scenarios.
The sampling assumption is discussed in more depth in Section~\ref{sec:MultipleEdgeLikelihood}, where we also consider, as an alternative, that the number of observations along a trajectory is binomially distributed.
However, this binomial prior is not investigated in the experiments in the paper, but they rely on the uniformity assumption.

\subsection{Determining $s$ and $t$}
\label{sec:Determining_s_and_t}
In the case of incomplete trajectories, determining or inferring the starting node $s$ and the absorbing target node $t$ is not necessarily obvious.
In some cases it can be resolved trivially by determining the last observed node on the trajectory as the target.
In others, the target node might be evident in the network; for instance, considering movement in a computer game, the target node could be determined as the state where the game is compeleted or won.
Sometimes it can also be meaningful to select an area or a subset of nodes of the network as a set of absorbing targets.
Such a strategy can be more suitable in cases where the observed incomplete trajectories are sparse.

Another option, which we also use in the experiments in Section~\ref{sec:ApplicationToWildAnimalMovement}, is to consider, as in the previous strategy, an area of the network as the destination of movement, but decide the target $t$ for each trajectory as the first observed node of the trajectory within this area or subset (and neglect the remaining part of the trajectory).
The motivation for this is that the random walker is assumed to move according to the RSP distribution only until it reaches a particular part of the network, after which its movement behavior might change.
This assumption evidently holds for the use case in Section~\ref{sec:ApplicationToWildAnimalMovement}, where the data is based on seasonal migration of wild reindeer from one habitat to another.

For more sophisticated methods, one could also incorporate the \emph{bag-of-paths (BoP) models}~\citep{francoisse2017bag,lebichot2014semisupervised} in different ways.
The standard BoP framework is simply an extension of RSPs where the Gibbs-Boltzmann distribution is defined over the set of paths between all node pairs, instead of one $s$-$t$-pair at a time.
This formulation leads to the \emph{BoP probability} distribution over all $s$-$t$-pairs, meaning the probability that a given node-pair, $s$ and $t$, are the source and target nodes of a path sampled from the ``bag of paths''.
These probabilities are determined by the overall costs of paths between a given $s$ and $t$, compared to the costs of paths between other node-pairs.
A similar model has been considered in the context of transportation research in \citep{ryu2014dual}.
An even more sophisticated approach would be the \emph{margin-constrained BoP} model~\citep{guex2019randomized}, where two distributions on the nodes of the network, $k \in V$, are given as input; one fixes the probability that $k$ is a source node of a path, while the other fixes the probability that $k$ is a target node of a path.
The model then computes flows based on these fixed margins and the Gibbs-Boltzmann distribution over the set of all (hitting or regular) paths from the sources to the targets.
In other words, the margin-constrained BoP model is applicable when the source and target distributions can be determined from data.
Finally, we leave the consideration of incorporating these ideas in the maximum likelihood model for future research.

\subsection{One observed edge}
\label{sec:OneEdgeLikelihood}
Consider a situation where the data set $\Omega$ consists of only one edge $(i,j)$ that has been observed from one trajectory from $s$ to $t$, where $s$ and $t$ are known beforehand.
In this section, we present a computable expression of the likelihood function of $\beta$ in such a case, discuss the intuition behind this expression, and also present an example illustrating the behavior of the likelihood function.

\subsubsection{Derivation of the likelihood function}
\begin{theorem}
\label{thm:OneEdgeLikelihood}
Let $\wp$ be a hitting path from $s$ to $t$ drawn from the RSP distribution (Equation~(\ref{eq:GibbsBoltzmann})), and let $i$ be a node sampled from a uniform distribution over the node sequence of $\wp$.
The likelihood function of $\beta$ in such a situation can be computed as
\begin{equation}
\label{eq:IncompletePathsLikelihood}
\Lhood(\beta \giv (i,j); s, t) 
= \dfrac{\mleft[ 
  \mathbf{L}_{ij} 
  \mright]_{s,n+t}}{\PF},
\end{equation}
where $\mleft[\mathbf{L}_{ij}\mright]_{s,n+t}$ is the element $(s, n+t)$ of the $(2n \times 2n)$ matrix that can be expressed using the \emph{matrix logarithm} \citep{higham2008functions}, $\mathbf{log}$, as
\begin{equation}
\mathbf{L}_{ij} 
= 
\sum_{k=1}^{\infty} \frac{\mathbf{Q}_{ij}^{k}}{k} 
= 
-\mathbf{log}(\mathbf{I} - \mathbf{Q}_{ij}),
\end{equation}
where $\mathbf{Q}_{ij}$ is the $(2n \times 2n)$ block matrix 
\begin{equation}
\mathbf{Q}_{ij} 
= 
\mleft[
\begin{array}{cc}
\Wt
& 
\mathbf{W}_{ij} 
\\
\mathbf{0}
&
\Wt
\end{array}
\mright].
\label{eq:QijDef}
\end{equation}
Recall that $\Wt$ is the matrix from Equation~(\ref{eq:Wt}) obtained by setting row $t$ of matrix $\mathbf{W}$ to zero, and $\mathbf{W}_{ij}$ is the matrix containing value $w_{ij}$ at element $(i,j)$ and zero elsewhere.
\end{theorem}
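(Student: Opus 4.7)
The plan is to write the likelihood as a weighted sum over hitting paths, expand it by path length and by the position of the observed edge, and then recognise the resulting generating function as the matrix logarithm of the block matrix $\mathbf{Q}_{ij}$.

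First, conditioning on the (unobserved) path and using uniform edge sampling, the probability of observing the edge $(i,j)$ given $\wp$ equals $\Vij(\wp)/L(\wp)$. Marginalising over $\wp \sim \Prob$ and invoking the explicit form~\eqref{eq:GibbsBoltzmann} gives
\[
\Lhood(\beta \giv (i,j); s,t) = \frac{1}{\PF} \sum_{\wp \in \Pset} \tw(\wp) \, \frac{\Vij(\wp)}{L(\wp)}.
\]
I would split this sum by $k = L(\wp)$ and by the position $l \in \{1,\dots,k\}$ at which the edge is traversed. Using the edge factorisation~\eqref{eq:PathLikelihoodFromEdgeLikelihoods} together with the fact that powers of $\Wt$ enumerate exactly the walks that avoid $t$ before the final step (the same device underlying~\eqref{eq:PFFromWt}), the contribution of hitting paths of length $k$ passing through $(i,j)$ at position $l$ equals $[\Wt^{l-1}]_{si}\, w_{ij}\, [\Wt^{k-l}]_{jt}$.

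Second, I would identify the convolution $\sum_{l=1}^{k}\Wt^{l-1}\mathbf{W}_{ij}\Wt^{k-l}$ as the upper-right block of $\mathbf{Q}_{ij}^{k}$. A short induction on $k$ using block multiplication of the upper-triangular matrix in~\eqref{eq:QijDef} gives
\[
\mathbf{Q}_{ij}^{k} = \begin{pmatrix} \Wt^{k} & \sum_{l=1}^{k} \Wt^{l-1}\, \mathbf{W}_{ij}\, \Wt^{k-l} \\ \mathbf{0} & \Wt^{k} \end{pmatrix},
\]
so that the $(s,t)$-entry of its upper-right block — equivalently the $(s,n+t)$-entry of the full $2n \times 2n$ matrix — is exactly the desired quantity summed over $l$. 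Dividing by $k$ and summing over $k \geq 1$ produces $\bigl[\sum_{k \geq 1}\mathbf{Q}_{ij}^{k}/k\bigr]_{s,n+t}$. Because $\mathbf{Q}_{ij}$ is block upper-triangular with both diagonal blocks equal to $\Wt$, its spectrum coincides with that of $\Wt$; hence $\rho(\mathbf{Q}_{ij}) = \rho(\Wt) < 1$ by the Perron--Frobenius argument recalled after~\eqref{eq:Zmatrix}, and the series converges to $-\mathbf{log}(\mathbf{I} - \mathbf{Q}_{ij})$. Dividing by $\PF$ yields~\eqref{eq:IncompletePathsLikelihood}.

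The main obstacle is the guess rather than the verification in the second step: the observation that the harmonic weighting $1/k$ coming from uniform edge sampling is exactly what converts the generating function of the convolution into a matrix logarithm, and that this convolution is encoded by the super-diagonal block of powers of an upper-triangular block matrix. Once this structural insight is in place, everything else — block multiplication, the factorisation of path likelihoods, and the spectral-radius bound — is routine given the machinery of Section~\ref{sec:ComputationOfMainQuantities}.
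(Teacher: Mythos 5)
Your proposal is correct and follows essentially the same route as the paper: the same reduction of the likelihood to $\frac{1}{\PF}\sum_{\wp\in\Pset}\tw(\wp)\,\Vij(\wp)/L(\wp)$, the same convolution $\sum_{l=1}^{k}\Wt^{l-1}\mathbf{W}_{ij}\Wt^{k-l}$ identified as the upper-right block of $\mathbf{Q}_{ij}^{k}$, and the same harmonic $1/k$ weighting yielding the matrix logarithm. The only (cosmetic) difference is that the paper produces this convolution by differentiating an auxiliary partition function $\widetilde{\mathcal{Z}}_{st}=\sum_{\wp}\tw(\wp)/L(\wp)$ with respect to $c_{ij}$, whereas you expand the sum directly over the position of the observed edge, which is the same computation without the $-\beta$ bookkeeping.
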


\begin{proof}
Let us denote by $\rho$ the random variable corresponding to the drawing of a trajectory from $\Pset$ along which the edge is observed.
We then make the fundamental assumption that the probability distribution of $\rho$ is the RSP distribution (see Equations~(\ref{eq:GibbsBoltzmann}) and (\ref{eq:PathLikelihood})): \begin{equation}
\mathrm{P}(\rho = \wp)
=
\ProbRSP(\wp)
=
\dfrac{\tw(\wp)}{\PF}
.
\end{equation}
Furthermore, let $\varepsilon$ be the random variable corresponding to the drawing of the observed edge.
As discussed above, assume that $\varepsilon$ is sampled uniformly from the edge sequence of the observed trajectory.
Then the conditional probability of drawing edge $(i,j)$, given a particular path $\wp$, is given by the number of times $\wp$ traverses edge $(i,j)$, $\Vij$, divided by the total number of edges traversed along the $\wp$, i.e.~its length, $L(\wp)$:
\begin{equation}
\mathrm{P}
(\varepsilon = (i,j) | \rho = \wp)
= 
\dfrac{\Vij(\wp)}{L(\wp)}.
\label{eq:ConditionalEdgeProbability}
\end{equation}
Thus, the joint probability of observing edge $(i,j)$ along path $\wp$ is
\begin{equation}
\label{eq:SingleEdgeJointProbability}
\mathrm{P}
\mleft(
\varepsilon = (i,j), \rho = \wp
\mright) 
= 
\mathrm{P}
\mleft(
\rho = \wp
\mright) 
\mathrm{P}
\mleft(
\varepsilon = (i,j
\mright) 
\giv 
\rho = \wp)
= 
\dfrac{\tw(\wp) \Vij(\wp)}{\PF L(\wp)}.
\end{equation}

Note that the partial derivative of the path likelihood $\widetilde{w}(\wp)$ (Equation~(\ref{eq:PathLikelihood})) can be expressed, using Equation~(\ref{eq:VisitsOverP}), as
\begin{equation}
\label{eq:wDerivative}
\frac{\partial \widetilde{w}(\wp)}{\partial c_{ij}}  
= 
\Prw(\wp) 
\frac{\partial \exp(-\beta \tc(\wp))}{\partial c_{ij}} 
= 
-\beta 
\widetilde{w}(\wp) 
\Vij(\wp).
\end{equation}
Based on this, the probability of observing edge $(i,j)$, which is also the likelihood function of $\beta$, is obtained from (\ref{eq:SingleEdgeJointProbability}) by marginalizing out $\rho$:
\begin{align}
\Lhood(\beta \giv (i,j); s, t) 
&= \mathrm{P}(\varepsilon = (i,j) ; \beta)
   \\
&= \Psum
   \mathrm{P}(\varepsilon = (i,j), \rho = \wp; \beta)
  \\
&= \dfrac{1}{\PF} 
   \sum_{\wp \in \Pset} 
   \dfrac{\widetilde{w}(\wp) \Vij(\wp)}
   {L(\wp)} 
= 
-\dfrac{1}{\beta \PF} 
\frac{\partial \widetilde{\mathcal{Z}}_{st}}{\partial c_{ij}},
\label{eq:OneEdgeLikelihoodLast}
\end{align}
where we define a new partition function
\begin{equation}
\label{eq:NewPartitionFunction}
\widetilde{\mathcal{Z}}_{st} 
= \sum_{\wp \in \Pset} 
  \dfrac{\widetilde{w}(\wp)}{L(\wp)}
= \sum_{k=1}^{\infty}
  \sum_{\wp_{k} \in \Pset^{(k)}} 
  \dfrac{\widetilde{w}(\wp_{k})}{k},
\end{equation}
with $\Pset^{(k)}$ being the set of hitting paths from $s$ to $t$ of length $k$.
Recalling Equation (\ref{eq:WPowers}) and that $\Wt$ is the matrix obtained from matrix $\mathbf{W}$ by setting row $t$ to zero, we see that the new partition function can be expressed using the matrix logarithm:
\begin{equation}
\label{eq:NewPFComputation}
\widetilde{\mathcal{Z}}_{st}
= 
\mleft[
\Wt 
+ 
\frac{\Wt^2}{2} 
+ 
\frac{\Wt^3}{3} 
+ 
\cdots 
\mright]
_{st}
= 
\mleft[ 
-\mathbf{log}
(\mathbf{I}-\Wt) 
\mright]
_{st},
\end{equation}
which exists, as $\rho(\Wt) < 1$.

Next we show how the derivative of the above quantity, needed for the computation of the log-likelihood (\ref{eq:OneEdgeLikelihoodLast}), can be computed. 
For this, we use the shorthand notation $\partial_{c_{ij}} X = \partial X / \partial c_{ij}$ for the partial derivative of a quantity $X$ (e.g.\ a function or a matrix).
First of all, for any $(i,j) \in E$ such that $i\neq t$,
\begin{equation}
\label{eq:WDerivative}
\partial_{c_{ij}} 
\Wt 
= 
\mleft(
\partial_{c_{ij}} 
w_{ij} 
\mright) 
\mathbf{I}_{ij} 
= 
\mleft(
\partial_{c_{ij}} 
\prw_{ij} 
\exp(-\beta c_{ij}) 
\mright) 
\mathbf{I}_{ij} 
=
-\beta
\mathbf{W}_{ij},
\end{equation}
where $\mathbf{I}_{ij} = \mathbf{e}_{i}\mathbf{e}_{j}^{\mathsf{T}}$ is the matrix with 1 at element $(i,j)$ and zero elsewhere, and where we have defined $\mathbf{W}_{ij} = w_{ij}\mathbf{I}_{ij}$.
We can thus write the derivative of $\Wt^{k}$ as
\begin{align}
\frac{\partial \Wt^k}{\partial c_{ij}} 
&= 
(\partial_{c_{ij}} \Wt) 
\Wt^{k-1}
+ 
\Wt 
(\partial_{c_{ij}} 
\Wt^{k-1}) 
\nonumber
\\
&= 
-\beta 
\mleft( \mathbf{W}_{ij} \Wt^{k-1} + \Wt 
\mleft( \mathbf{W}_{ij} \Wt^{k-2} + \Wt 
\mleft( \mathbf{W}_{ij} \Wt^{k-3} + \cdots \mright) 
\mright) 
\mright) 
\nonumber
\\
&= 
-\beta 
\sum_{l=1}^{k} 
\Wt^{l-1} 
\mathbf{W}_{ij} 
\Wt^{k-l}
\triangleq 
-\beta 
\mathbf{S}_{ij}^{(k)},
\label{eq:WkDerivative}
\end{align}
where we have defined, for convenience,
\begin{equation}
\mathbf{S}_{ij}^{(k)} 
= 
\sum_{l=1}^{k} 
\Wt^{l-1} 
\mathbf{W}_{ij} 
\Wt^{k-l}.
\label{eq:SijkDef}
\end{equation}
This matrix can be computed with the help of the auxiliary $(2n \times 2n)$ block matrix
\begin{equation}
\mathbf{Q}_{ij} 
= 
\mleft[
\begin{array}{cc}
\Wt
& 
\mathbf{W}_{ij} 
\\
\mathbf{0}
&
\Wt
\end{array}
\mright],
\end{equation}
as the $k$-th power of this matrix can be shown to be:
\begin{equation}
\mathbf{Q}_{ij}^{k} 
= 
\mleft[
\begin{array}{cc}
\Wt^{k} 
& 
\mathbf{S}_{ij}^{(k)} 
\\
\mathbf{0} 
& 
\Wt^{k} 
\end{array}
\mright],
\label{eq:QijPowerDef}
\end{equation}

Accordingly, the derivative of the new partition function, required in Equation~(\ref{eq:OneEdgeLikelihoodLast}), can be computed as
\begin{align}
\dfrac{\partial \widetilde{\mathcal{Z}}_{st}}{\partial c_{ij}  }
&= 
\mathbf{e}_{s}^{\mathsf{T}} 
\sum_{k=1}^{\infty} 
\mleft(
\dfrac{
\partial 
\Wt^{k} / k 
}{\partial c_{ij}  }
\mright)
\mathbf{e}_{t} 
\label{eq:PFDerivativeFirst}
\\
&= - \beta \mathbf{e}_{s}^{\mathsf{T}} \sum_{k=1}^{\infty} \frac{\mathbf{S}_{ij}^{(k)}}{k} \mathbf{e}_{t} \\
&= -\beta 
   \mathbf{e}_{s}^{\mathsf{T}} 
   \sum_{k=1}^{\infty} 
   \frac{\mathbf{Q}_{ij}^{k}}{k} 
   \mathbf{e}_{n+t}
   \label{eq:PFDerivativeIntermediate}
   \\
&= -\beta 
   \mathbf{e}_{s}^{\mathsf{T}} 
   \mathbf{L}_{ij}
   \mathbf{e}_{n+t} 
   \\
&= -\beta 
   \mleft[ \mathbf{L}_{ij} \mright]_{s, n+t},
\label{eq:PFDerivative}
\end{align}
where 
\begin{equation}
\mathbf{L}_{ij} = \sum_{k=1}^{\infty} \frac{\mathbf{Q}_{ij}^{k}}{k} = -\mathbf{log}(\mathbf{I} - \mathbf{Q}_{ij}).
\end{equation}

Finally, combining (\ref{eq:OneEdgeLikelihoodLast}) and (\ref{eq:PFDerivative}), the likelihood function can be expressed as
\begin{equation}
\Lhood(\beta \giv (i,j); s, t) 
= -\dfrac{1}{\beta \PF} 
  (-\beta 
  \mleft[ \mathbf{L}_{ij} \mright]_{s,n+t})
= \dfrac{\mleft[ \mathbf{L}_{ij} \mright]_{s,n+t}}{\PF}.
\end{equation}
\end{proof}

It is possible to derive the derivative with respect to $\beta$ of the above likelihood function.
However, the form of the derivative is rather complicated, involving the matrix logarithm of a $(4n \times 4n)$ block matrix.
More importantly, the root of the derivative cannot be solved in closed form, unlike in the case with complete trajectories in Section~\ref{sec:CompleteTrajectories}.
Thus, for finding $\betaMLE$, it is more straightforward to search directly for a value of $\beta$ that maximizes (\ref{eq:IncompletePathsLikelihood}).

\subsubsection{Intuition behind the computation of the likelihood function}

The computation of the single-edge likelihood can also be understood from another, more informal, perspective.
Namely, the matrix $\mathbf{Q}_{ij}$ in Equation~(\ref{eq:QijDef}) can be interpreted as a likelihood matrix defining a new graph consisting of the original graph $G$ (with links going out of the target node $t$ deleted) augmented with its copy $G'$ and with a directed edge from node $i$ of subgraph $G$ to node $j$ of subgraph $G'$ (i.e.~node $n+j$ of the new graph) with edge likelihood $w_{ij}$.
This idea is illustrated in Figure~\ref{fig:OneCopyGraph}.
Element $(s,n+t)$ of the $k$-th power of matrix $\mathbf{Q}_{ij}$ thus enumerates all $s$-$t$-paths of length $k$ that traverse edge $(i,j)$ and cumulates the corresponding path likelihoods.

Concerning the computation in practice, the current standard for matrix logarithm computations is the algorithm developed in~\citep{al2013computing}, which is implemented, for instance, in Matlab and SciPy as the \texttt{logm} function.
However, we are only interested in computing one element of the matrix logarithm, and for this purpose we found the Matlab implementation numerically too unreliable.
More exactly, the algorithm produces inaccurate values of the element $(s, n+t)$ of matrix $\mathbf{L}_{ij}$ (see Equation (\ref{eq:PFDerivative})) when $\beta$ is relatively large, causing the values $[\mathbf{L}_{ij}]_{s,n+t}$ to be very small.
To get more reliable values, we compute the matrix logarithm element iteratively based on the power series expression of the logarithm, by computing a finite sum according to Equation (\ref{eq:NewPFComputation}). 

For future research, it is worth mentioning that the above computational techniques involving the matrix logarithm could also be used, for example, for defining a network centrality measure based on the probability of uniformly sampling an edge or node over paths between nodes.
This could be done considering the RSP distribution, but also with the natural random walk distribution.
Namely, for the observation probability over natural random $s$-$t$-walks, one can simply perform the above computations by setting $\beta=0$, which corresponds to replacing the matrix $\Wt$ with the matrix $\PRW - \mathbf{I}_{tt}\PRW$, i.e.~the random walk transition probability matrix with row $t$ set to zero.

\subsubsection{Example of the one-edge likelihood}

We demonstrate the behavior of the one-edge likelihood, expressed in Equation~(\ref{eq:IncompletePathsLikelihood}), with a simple example illustrated in Figure \ref{fig:ExampleGrid}.
We consider a $5 \times 5$ grid with links to diagonal neighbors included.
The edge costs are $1$ for horizontal and vertical edges, and $\sqrt{2}$ for diagonal edges and affinities as inverse costs, $a_{ij} = 1/c_{ij}$.
We consider paths from node 7, in the lower left corner, to node 25 in the upper and right-most corner.
We then compute, for different values of $\beta$, the likelihood of observing each of the edges leaving from node 7 as an edge of such a path.

\begin{figure}
\centering
\includegraphics[width=0.6\textwidth]{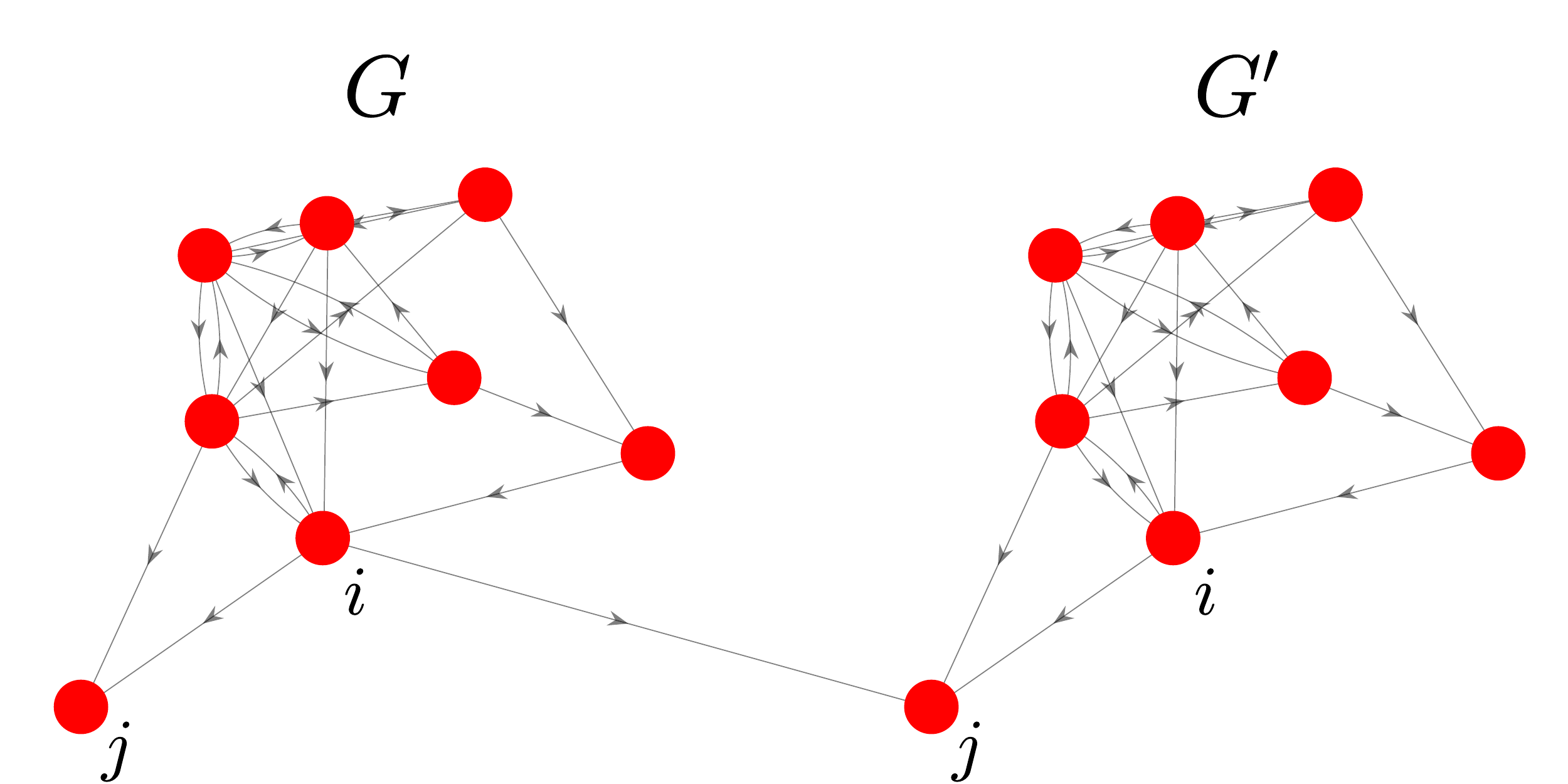}
\caption{The computation of the one-edge likelihood can be interpreted as making a copy $G'$ of the original graph $G$ and adding an edge from node $i$ of $G$ to node $j$ of $G'$ and considering likelihoods of paths from node $s$ of $G$ to node $t$ of $G'$.}
\label{fig:OneCopyGraph}
\end{figure}

The results are plotted in Figure \ref{fig:GridExamplePlots}, which shows the likelihoods for each edge as a function of $\beta$ as separate curves.
Moreover, the dashed vertical lines indicate the peaks of the curves, i.e.~the value $\betaMLE$ corresponding to observing the edge in question.
As is expected, the highest $\betaMLE$ value is obtained for the edge $(7,13)$, which lies on the least cost path from node $7$ to node $25$.%
\footnote{In fact, in this case, the likelihood can be seen to increase indefinitely, indicating that $\betaMLE = \infty$. Indeed, as $\beta$ increases, the RSP probability of the least cost path along edge $(7,13)$ increases, and the probability of observing the edge $(7,13)$ increases likewise. This also leads to the increase of the likelihood function indefinitely as $\beta$ increases.}
In addition, the $\betaMLE$ values decrease as we consider edges that move more and more away from the shortest path.

\begin{figure}
	\centering
	\begin{subfigure}[b]{0.42\textwidth}
		\includegraphics[width=\textwidth]{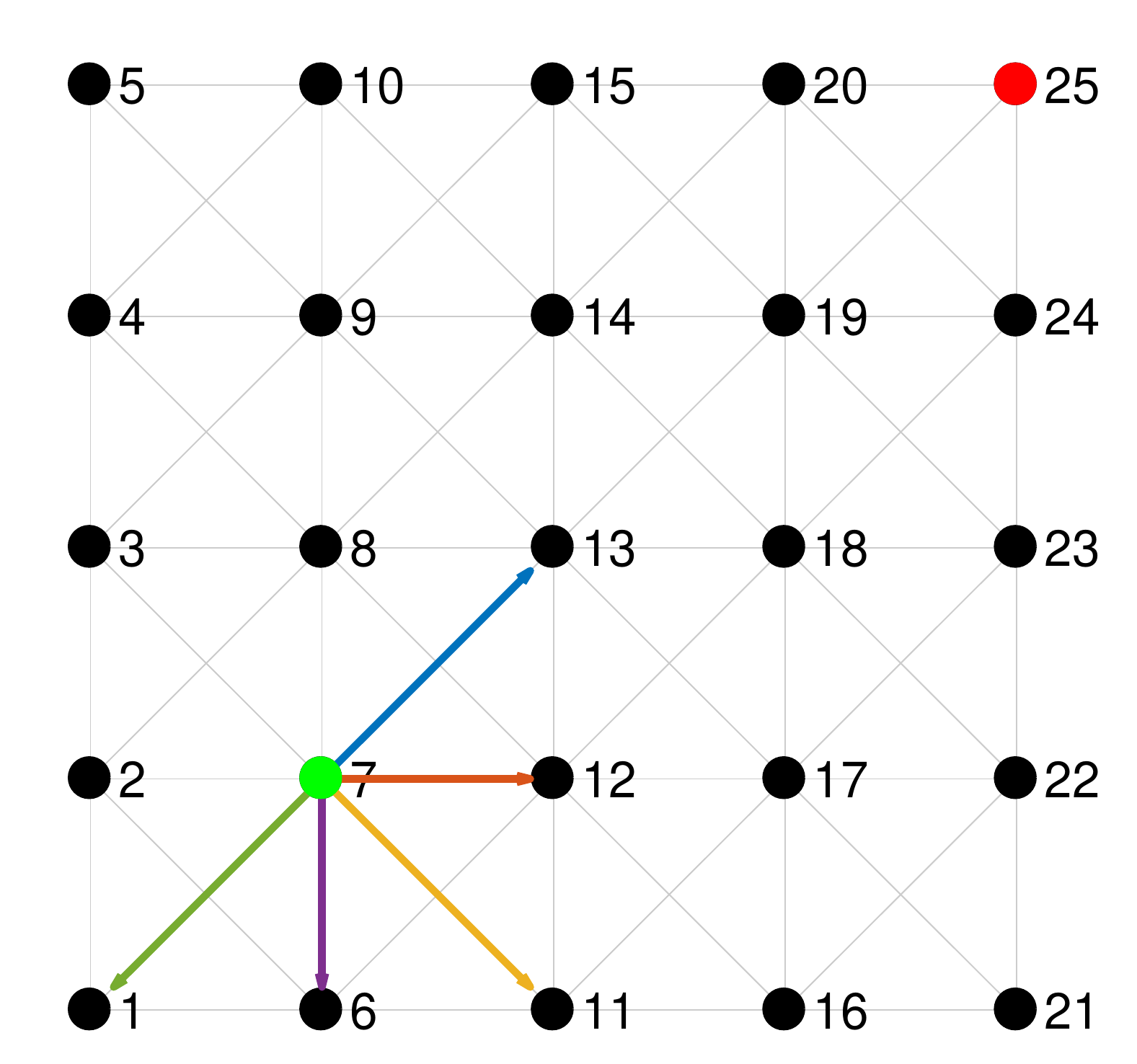}
		\caption{The grid}
		\label{fig:ExampleGrid}
	\end{subfigure}
	\begin{subfigure}[b]{0.57\textwidth}
		\includegraphics[width=\textwidth]{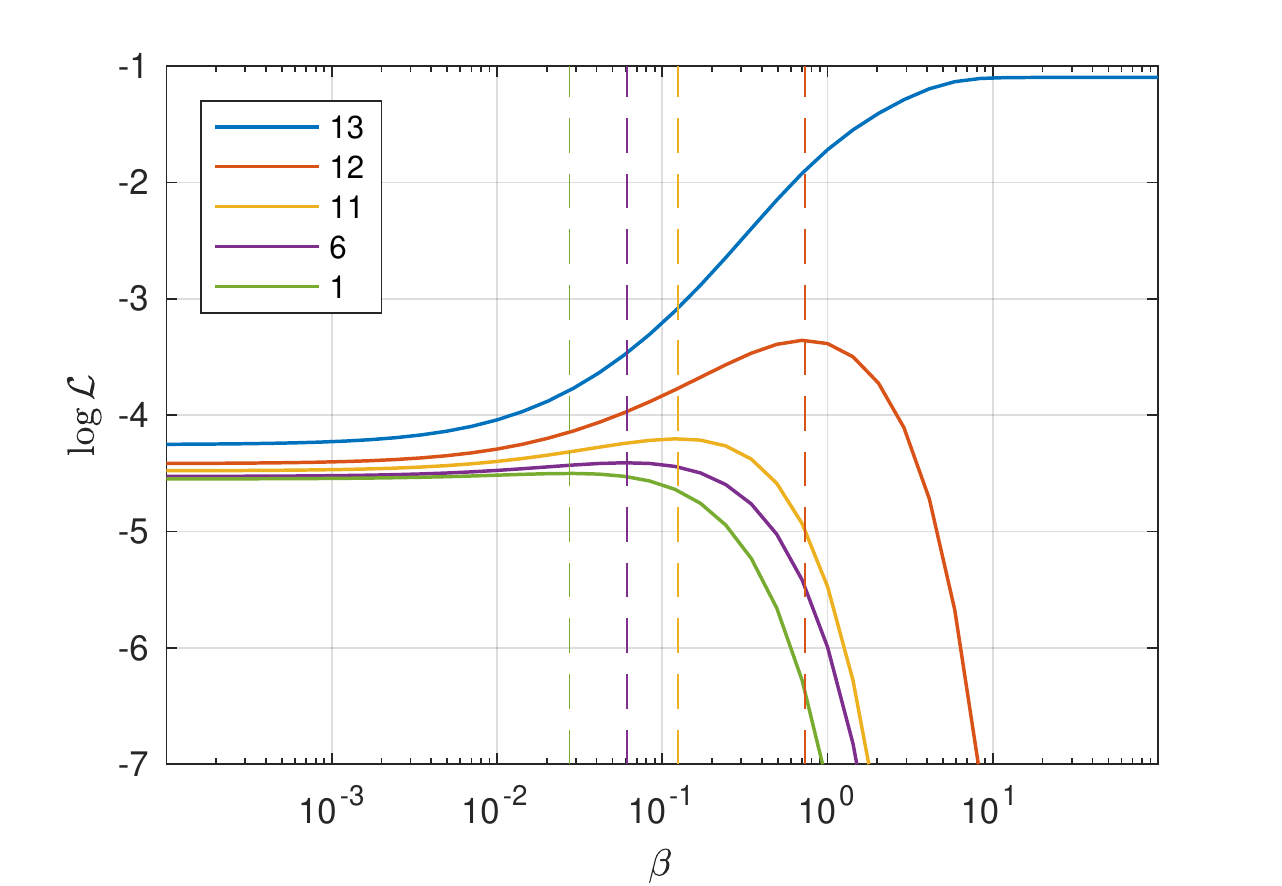}
		\caption{The likelihood functions w.r.t.\ $\beta$}
		\label{fig:GridExamplePlots}
	\end{subfigure}
	\caption{The $5 \times 5$ grid, where the green node 7 is the starting node, $s$, the red node 25 is the absorbing target node, $t$ (A); the log-likelihoods of observing each of the coloured edges for different values of $\beta$ (B). Note the logarithmic scale on the horizontal axis.}
	\label{fig:GridExample}
\end{figure}

\subsection{One observed node}
\label{sec:OneNodeLikelihood}

The likelihood of $\beta$ can also be computed in the case where a trajectory is observed to visit a particular node instead of an edge.
We derive the \emph{one-node likelihood} using the fact that the number of visits to a node $i$ along a path $\wp$, $n_{i}(\wp)$, can be expressed as the sum of travelsals out of that node: $n_{i}(\wp) = \sum_{j \in Succ(i)} \Vij(\wp)$.
Note that this definition of $n_{i}(\wp)$ directly means that the observed node cannot be the absorbing terminal node $t$, i.e.~that the observation is made before the end of the path.

However, the possibility that the intermediate node, drawn from a path $\wp \in \Pset$, is actually node $s$, must be considered more carefully.
In this work, we assume that the outcome of the observation can indeed be node $s$, but that the observation is done \emph{after the first step}, i.e.~that the observation is made from the subsequence $\wp(1, \ldots, L(\wp) - 1)$, excluding the first node $\wp(0)=s$ and the last node $\wp(L(\wp)) = t$.
Moreover, we assume that the node is sampled uniformly from this subsequence.
The exclusion of the first node $\wp(0)$ from consideration complicates the derivation a bit compared to the derivation of the single-edge likelihood.
Because of this, the full derivation of the result is presented separately in Appendix~\ref{app:OneNodeLikelihood}.

\begin{theorem}
\label{thm:OneNodeLikelihood}
Let $\wp$ be a hitting path from $s$ to $t$ drawn from the RSP distribution (Equation~(\ref{eq:GibbsBoltzmann})), and let $i$ be a node sampled from a uniform distribution over the node sequence of $\wp$.
Then, the likelihood function of $\beta$ can be computed as
\begin{equation}
\Lhood(\beta \giv i; s, t)
=
\dfrac{1}{\PF}
\sum_{u \in Succ(s)}
w_{su}
\mleft[
\mathbf{L}_{i}
\mright]_{u, n+t},
\label{eq:OneNodeLikelihoodFinal}
\end{equation}
which contains, again, a matrix logarithm,
$\mathbf{L}_{i} = -\mathbf{log}(\mathbf{I}-\mathbf{Q}_{i})$,
where
\begin{equation}
\label{eq:QiDef}
\mathbf{Q}_{i} = 
\mleft[\begin{array}{cc}\Wt & \mathbf{W}_{\mathrm{r}(i)} \\
\mathbf{0} & \Wt 
\end{array}\mright],
\end{equation}
and 
$\mathbf{W}_{\mathrm{r}(i)} = \mathbf{I}_{ii} \mathbf{W}$
is the matrix containing the $i$-th row of matrix $\mathbf{W}$ on its $i$-th row, but zeros elsewhere.
\end{theorem}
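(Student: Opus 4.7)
The plan is to mirror the proof of Theorem~\ref{thm:OneEdgeLikelihood}, with two modifications: the sampled quantity is now a node rather than an edge, and the uniform sampling is restricted to the sub-sequence $\wp(1),\ldots,\wp(L(\wp)-1)$, which excludes the source $\wp(0)=s$ (and, trivially, the target $\wp(L(\wp))=t$). Denote by $\rho$ the random hitting $s$-$t$-path drawn from the RSP distribution and by $\nu$ the sampled node, and let $\hat{n}_{i}(\wp)$ be the number of occurrences of $i$ in the admissible sub-sequence. The uniform-sampling hypothesis gives $\mathrm{P}(\nu=i\mid\rho=\wp)=\hat{n}_{i}(\wp)/(L(\wp)-1)$, so that marginalizing over $\rho$ yields
\begin{equation}
\Lhood(\beta\mid i;s,t)=\frac{1}{\PF}\sum_{\wp\in\Pset}\tw(\wp)\,\frac{\hat{n}_{i}(\wp)}{L(\wp)-1}.
\end{equation}

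The first key manoeuvre is to absorb the exclusion of $\wp(0)$ by peeling off the initial edge of $\wp$. Writing $\wp$ as the concatenation of $(s,u)$, with $u=\wp(1)\in Succ(s)$, and a hitting $u$-$t$-path $\wp'$, we have $\tw(\wp)=w_{su}\tw(\wp')$, $L(\wp)-1=L(\wp')$, and---since $i\neq t$---$\hat{n}_{i}(\wp)$ equals the number of occurrences of $i$ at positions $0,\ldots,L(\wp')-1$ of $\wp'$. The likelihood is thereby rewritten as $\PF^{-1}\sum_{u\in Succ(s)}w_{su}T_{u}$, where $T_{u}$ is the weighted sum over hitting $u$-$t$-paths of $\tw(\wp')$ times the visit-count of $i$ at positions $0,\ldots,L(\wp')-1$, divided by $L(\wp')$.

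The second step is to express $T_{u}$ as a single entry of a matrix logarithm, in direct analogy with the treatment of $\widetilde{\mathcal{Z}}_{st}$ in the one-edge proof. A visit of a $k$-step hitting path to $i$ at position $l-1$ is uniquely witnessed by the subsequent outgoing transition from $i$ (which exists since $l\le k$ and $i\neq t$); summing over all possible outgoing edges is exactly what the row-matrix $\mathbf{W}_{\mathrm{r}(i)}=\mathbf{I}_{ii}\mathbf{W}$ does, and this is the natural replacement for the edge-insertion block $\mathbf{W}_{ij}$. Defining $\mathbf{S}_{i}^{(k)}=\sum_{l=1}^{k}\Wt^{l-1}\mathbf{W}_{\mathrm{r}(i)}\Wt^{k-l}$ in parallel with Equation~(\ref{eq:SijkDef}), its $(u,t)$ element sums $\tw(\wp')$ over length-$k$ hitting $u$-$t$-paths weighted by the number of visits to $i$ at positions $0,\ldots,k-1$. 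The same $(2n\times 2n)$ block-matrix identity used before gives $\mathbf{Q}_{i}^{k}$ with $\mathbf{S}_{i}^{(k)}$ in the upper-right block and $\Wt^{k}$ on the diagonal; dividing by $k$, summing over $k\ge 1$, and invoking $\rho(\Wt)<1$ identifies $T_{u}$ with $[\mathbf{L}_{i}]_{u,n+t}$ where $\mathbf{L}_{i}=-\mathbf{log}(\mathbf{I}-\mathbf{Q}_{i})$, which combined with the $w_{su}$ factor and the $\PF^{-1}$ prefactor is the claimed formula.

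The main obstacle, compared to the one-edge case, is the asymmetric sampling support: excluding $\wp(0)=s$ forces the appearance of the first-step sum $\sum_{u\in Succ(s)}w_{su}(\cdot)$ in place of a clean $(s,n+t)$-element, and it is precisely this peeling that realigns the counting window to $\{0,\ldots,L(\wp')-1\}$ so that the matrix-logarithm machinery can be reused verbatim. A smaller but non-trivial check is that $\mathbf{W}_{\mathrm{r}(i)}$, which collapses an entire row of $\mathbf{W}$, counts visits rather than overcounting them; this follows from the one-to-one correspondence between intermediate occurrences of $i$ in a hitting path and the outgoing steps immediately following them.
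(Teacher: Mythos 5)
Your proposal is correct and follows essentially the same route as the paper's proof in Appendix~\ref{app:OneNodeLikelihood}: counting node visits via outgoing edges so that $\mathbf{W}_{\mathrm{r}(i)}=\sum_{j\in Succ(i)}\mathbf{W}_{ij}$ replaces the edge block, and peeling off the first step $(s,u)$, $u\in Succ(s)$, to handle the exclusion of $\wp(0)$, yielding the factor $\sum_{u\in Succ(s)}w_{su}[\mathbf{L}_i]_{u,n+t}$. The only difference is ordering (the paper first derives the auxiliary ``first-node-included'' distribution $\nu_0$ and then splits the path, while you split first and then apply the block-matrix series), which is immaterial.
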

\begin{proof}
In Appendix~\ref{app:OneNodeLikelihood}
\end{proof}

\subsection{Multiple observed edges}
\label{sec:MultipleEdgeLikelihood}

In many data sets of incomplete trajectories, the trajectories contain several observations, meaning that the trajectories have been observed to pass through more than only one edge or node of the network.
Here, the computation of the likelihood of $\beta$ given such a sequence of multiple observed edges is presented.
For this, consider a data set containing one sequence
$
\tilde{e} = (e_{1}, \ldots, e_{M})
$
of $M$ edges observed from one trajectory (in the corresponding order).
Let us also denote $(i_m, j_m) = e_m$ for all $m = 1,\ldots, M$ and assume that all $i_m$ and $j_m$ (except possibly $j_M$) are different from the target node $t$. 

One main difference with the results presented earlier, for the one-edge and one-node likelihoods, is that now the number of observations from a trajectory also needs to be considered as a random variable.
Let us denote this random variable by $\mu$.
Furthermore, let $\tilde{\varepsilon} = (\varepsilon_{1}, \ldots, \varepsilon_{M})$ be the random vector corresponding to the observed edge sequence of given length $M$ consisting of the random variables corresponding to the observed individual edges. 
Similarly to the uniformity assumption used in the cases of observing only one edge or node, we also assume a uniform distribution both for $\mu$ and for $\tilde{\varepsilon}$, in addition to which we assume independence between the random variables $\varepsilon_{1}, \ldots, \varepsilon_{M}$.
One justification for considering the uniform distribution in both cases is that they are, in this setting, the maximum entropy distributions, which implies that these assumptions are the most na\"ive and generic ones possible.

In Section~\ref{sec:Binomial} we will also briefly consider an alternative where $\mu$ is binomially distributed.
Assuming other distributions may also make the model more accurate in cases where the actual sampling is known to behave in a certain way, but the na\"ive uniformity assumption can be useful when the sampling process is not known or is very irregular.
Such is the case, for instance, with the real data example in Section \ref{sec:ApplicationToWildAnimalMovement}, where we estimate $\beta$ when fitting the RSP model to a data set of incomplete trajectories of wild reindeer.
In that data, although the locations of the animals are mostly measured at constant time intervals, there are also missing observations causing long gaps between measurements.

Let us now derive the likelihood using the uniformity assumptions. 
Thus, assume that the number of oberved edges, given a path $\wp$, is uniformly distributed, i.e.~%
\begin{equation}
\label{eq:MuConditional}
\mathrm{P}(\mu = M \giv \rho = \wp) = 1/L(\wp).
\end{equation}
Similarly, assume that given a path $\wp$ and a number of observations $M$ the probability of observing the edge sequence $\tilde{e}$ of $M$ edges from $\wp$ is uniformly distributed over all $\binom{L(\wp)}{M}$ possible subsequences of $M$ edges that can be drawn from $\wp$.
For this, we denote by 
$
\V_{\tilde{e}}(\wp)
$
the number of times that the edge sequence of $\wp$ contains $\tilde{e}$ as a subsequence.
Then, the distribution of $\tilde{\varepsilon}$, conditional on the number of observations, is
\begin{equation}
\label{eq:VarepsilonConditional}
\mathrm{P}
(
\tilde{\varepsilon} 
= 
\tilde{e} 
\giv 
\mu = M, 
\rho = \wp
) 
= 
\dfrac{\V_{\tilde{e}}(\wp)}{\binom{L(\wp)}{M}}.
\end{equation}

\begin{theorem}
\label{thm:MultiEdgeLikelihood}
Let $\wp$ be a hitting path from $s$ to $t$ drawn from the RSP distribution (Equation~(\ref{eq:GibbsBoltzmann})), and let $\tilde{e}$ be a sequence of $M$ edges, sampled from a uniform distribution over the edge sequence of $\wp$, where, furthermore, $M$  is drawn from a uniform distribution over $\{1, \ldots, L(\wp)\}$.
Then, the likelihood function of $\beta$ can be computed as
\begin{equation}
\label{eq:MultipleEdgeFullLhood}
\Lhood(\beta \giv \tilde{e}; s, t) 
= 
\dfrac{
\mleft[ 
\mathbf{L}_{\tilde{e}} 
\mright]_{s,Mn + t}}{\PF}.
\end{equation}
where
\begin{equation}
\mathbf{L}_{\tilde{e}} 
= 
\sum_{k=M}^{\infty} 
\dfrac{\mathbf{Q}_{\tilde{e}}^k}{\binom{k}{M}k},
\label{eq:OmegaMatrixLogarithm}
\end{equation}
and
\begin{equation}
\label{eq:QOmegaDef}
\mathbf{Q}_{\tilde{e}}
= \mleft[\begin{array}{cccccc}
\Wt & \mathbf{W}_{i_{1}j_{1}} & \mathbf{0} & \mathbf{0} & \ldots & \mathbf{0}\\
\mathbf{0} & \Wt & \mathbf{W}_{i_{2}j_{2}} & \mathbf{0} & \ldots & \mathbf{0}\\
\mathbf{0} & \mathbf{0} & \Wt & \ddots &  & \vdots \\
\vdots & \vdots & \ddots &  \ddots & & \mathbf{0} \\
 & & &  \mathbf{0} & \Wt &  \mathbf{W}_{i_{M}j_{M}}\\
\mathbf{0} & \ldots &  &  \ldots & \mathbf{0} & \Wt
\end{array}\mright].
\end{equation}

\end{theorem}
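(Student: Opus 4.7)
The plan is to proceed in direct analogy with the one-edge case (Theorem~\ref{thm:OneEdgeLikelihood}), but with two additional ingredients: marginalizing over the number of observations $\mu$ and decoding the combinatorial structure of the block matrix $\mathbf{Q}_{\tilde{e}}$. First I would write the joint distribution of $(\tilde{\varepsilon}, \mu, \rho)$ using the chain rule together with the RSP distribution of $\rho$ (Equation~(\ref{eq:GibbsBoltzmann})), the uniform conditional (\ref{eq:MuConditional}), and the conditional (\ref{eq:VarepsilonConditional}):
\begin{equation*}
\mathrm{P}(\tilde{\varepsilon} = \tilde{e}, \mu = M, \rho = \wp)
= \frac{\tw(\wp)}{\PF} \cdot \frac{1}{L(\wp)} \cdot \frac{\V_{\tilde{e}}(\wp)}{\binom{L(\wp)}{M}}.
\end{equation*}
Summing over $\wp \in \Pset$ gives
\begin{equation*}
\Lhood(\beta \giv \tilde{e}; s, t)
= \frac{1}{\PF}\sum_{k \geq M} \frac{1}{k \binom{k}{M}} \sum_{\wp \in \Pset^{(k)}} \tw(\wp)\, \V_{\tilde{e}}(\wp),
\end{equation*}
so the entire task reduces to identifying $\sum_{\wp \in \Pset^{(k)}} \tw(\wp)\, \V_{\tilde{e}}(\wp)$ with an entry of $\mathbf{Q}_{\tilde{e}}^{k}$.

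Next I would establish the key block-matrix identity. Because $\mathbf{Q}_{\tilde{e}}$ is block upper triangular with $\Wt$ on the diagonal and the off-diagonal entries $\mathbf{W}_{i_m j_m}$ on the $(m,m+1)$ superdiagonal, any walk in $\mathbf{Q}_{\tilde{e}}^{k}$ from block $0$ to block $M$ must use the off-diagonal blocks exactly $M$ times, at some $M$ increasing step indices $1\leq l_1<\ldots<l_M\leq k$, and otherwise step within a single block via $\Wt$. Expanding the $k$-th power yields
\begin{equation*}
\bigl[\mathbf{Q}_{\tilde{e}}^{k}\bigr]_{s,\,Mn+t}
= \sum_{1 \leq l_1 < \ldots < l_M \leq k}
\bigl[\Wt^{l_1-1} \mathbf{W}_{i_1 j_1} \Wt^{l_2-l_1-1} \cdots \mathbf{W}_{i_M j_M} \Wt^{k-l_M}\bigr]_{st}.
\end{equation*}
Each term on the right is precisely the sum of $\tw(\wp)$ over hitting $s$-$t$-paths $\wp$ of length $k$ whose edge at position $l_m$ equals $(i_m,j_m)$ for $m=1,\ldots,M$. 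Summing over the ordered positions $(l_1,\ldots,l_M)$ therefore counts each $\wp$ with multiplicity equal to the number of ways to embed $\tilde{e}$ as an ordered subsequence into its edge sequence, which is exactly $\V_{\tilde{e}}(\wp)$. This establishes
\begin{equation*}
\bigl[\mathbf{Q}_{\tilde{e}}^{k}\bigr]_{s,\,Mn+t}
= \sum_{\wp \in \Pset^{(k)}} \tw(\wp)\, \V_{\tilde{e}}(\wp).
\end{equation*}

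Finally I would assemble the pieces: substituting this back gives
\begin{equation*}
\Lhood(\beta \giv \tilde{e}; s, t)
= \frac{1}{\PF} \sum_{k \geq M} \frac{\bigl[\mathbf{Q}_{\tilde{e}}^{k}\bigr]_{s,\,Mn+t}}{k\binom{k}{M}}
= \frac{[\mathbf{L}_{\tilde{e}}]_{s,\,Mn+t}}{\PF},
\end{equation*}
with $\mathbf{L}_{\tilde{e}}$ as in (\ref{eq:OmegaMatrixLogarithm}). A brief convergence remark would close the argument: since $\mathbf{Q}_{\tilde{e}}$ is block upper triangular with $\Wt$ on every diagonal block, its spectral radius equals $\rho(\Wt) < 1$, so the series defining $\mathbf{L}_{\tilde{e}}$ converges, just as the series defining $\widetilde{\mathcal{Z}}_{st}$ does in Equation~(\ref{eq:NewPFComputation}).

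The main obstacle is the combinatorial identification in the second step, since one must verify carefully that summing the ordered interleavings of $M$ off-diagonal jumps with $k-M$ diagonal steps does reproduce $\V_{\tilde{e}}(\wp)$ rather than, say, a counting of consecutive occurrences; this hinges on reading ``subsequence'' in the problem statement (\ref{eq:VarepsilonConditional}) consistently with the structure of $\mathbf{Q}_{\tilde{e}}$. Once that combinatorial dictionary is pinned down, the rest of the argument is a mechanical generalization of the one-edge derivation, and in particular no new analytic difficulty beyond the matrix-logarithm convergence already handled there arises.
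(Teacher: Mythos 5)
Your proposal is correct, and its first step (the chain-rule factorization over $(\rho,\mu,\tilde{\varepsilon})$ and marginalization over $\wp$) is exactly the reduction the paper performs in Equations~(\ref{eq:MultiEdgeLikelihood2})--(\ref{eq:MultipleEdgeLikelihood}). Where you genuinely diverge is in how you prove the central identity $\bigl[\mathbf{Q}_{\tilde{e}}^{k}\bigr]_{s,Mn+t} = \sum_{\wp_k \in \Pset^{(k)}} \tw(\wp_k)\,\V_{\tilde{e}}(\wp_k)$ (the paper's Equation~(\ref{eq:Q_e_power})). The paper proves this rigorously only for $M=2$ (Appendix~\ref{app:TwoEdgeLikelihood}): it decomposes $\V_{e_1 \leadsto e_2}$ via Iverson brackets, rewrites the occurrence counts as cost-derivatives of $\Wt^{k}$ using Equation~(\ref{eq:WkDerivative}), packages these into the matrix $\mathbf{S}^{(k)}_{e_1\leadsto e_2}$, and identifies it with the $(1,3)$-block of $\mathbf{Q}_{e_1\leadsto e_2}^{k}$ by induction on $k$; the extension to general $M$ is only asserted, with the ``copies of the graph'' picture as informal justification. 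You instead read off the $(0,M)$-block of $\mathbf{Q}_{\tilde{e}}^{k}$ directly as a sum over interleavings of $M$ superdiagonal jumps (forced to occur in the order $e_1,\dots,e_M$ by the bidiagonal block structure) with $k-M$ diagonal $\Wt$-steps, and then observe that summing over the increasing positions $l_1<\cdots<l_M$ counts each path with multiplicity equal to the number of embeddings of $\tilde{e}$ as a subsequence, i.e.\ $\V_{\tilde{e}}(\wp)$ --- which is consistent with the paper's definition preceding Equation~(\ref{eq:VarepsilonConditional}), and, when specialized to $M=2$, reproduces the paper's $\mathbf{S}^{(k)}_{e_1\leadsto e_2}$ after substituting $\partial_{c_{ij}}\Wt = -\beta\mathbf{W}_{ij}$. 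Your route avoids the derivative machinery and the induction entirely and, more importantly, handles arbitrary $M$ in one stroke, so it actually closes the gap the paper leaves open (``the derivation for two edges extends naturally\dots the proof of this is omitted''); the paper's derivative-based route, on the other hand, connects the identity to the cost-gradient formalism (Equations~(\ref{eq:ExpVij}) and (\ref{eq:wDerivative})) used throughout the rest of the paper. Your closing remarks on convergence via $\rho(\mathbf{Q}_{\tilde{e}}) = \rho(\Wt) < 1$ and on the subsequence-versus-consecutive-occurrence reading are exactly the right points to pin down.
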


\begin{proof}
We present an exact proof of the theorem only in the special case of observing two edges in Appendix~\ref{app:TwoEdgeLikelihood}. 
Below, we sketch the intuition behind the proof in an informal way.
\end{proof}

Using Equations~(\ref{eq:MuConditional}) and (\ref{eq:VarepsilonConditional}) the likelihood  can be written as
\begin{align}
\Lhood(\beta \giv \tilde{e}; s, t)
&=
\mathrm{P}
\mleft(
\tilde{\varepsilon}
= 
\tilde{e},
\mu = M
\mright)
\\
&=
\psum
\Prob(\wp)
\mathrm{P}
\mleft(
\tilde{\varepsilon} = 
\tilde{e}
\giv
\mu = M,
\rho = \wp
\mright)
\mathrm{P}
(\mu = M
\giv
\rho=\wp)
\label{eq:MultiEdgeLikelihood2}
\\
&=
\psum
\Prob(\wp)
\dfrac{\V_{\tilde{e}}(\wp)}{\binom{L(\wp)}{M}}
\cdot
\dfrac{1}{L(\wp)}
\\
&=
\dfrac{1}{\PF}
\sum_{k=M}^{\infty}
\dfrac{
1
}{
\binom{k}{M}k
}
\sum_{\wp_{k} \in \Pset^{(k)}}
\widetilde{w}(\wp_{k})
\V_{\tilde{e}}(\wp_{k})
.
\label{eq:MultipleEdgeLikelihood}
\end{align}
The exact manipulation of the above expression is fairly involved, and is presented in Appendix~\ref{app:TwoEdgeLikelihood} only for the case of observing two edges on a trajectory.
The derivation for two edges extends naturally to the case of an arbitrary number of observed edges, although the proof of this is omitted for brevity.
Here we only give an intuitive description of the derivation of the multiple-edge likelihood based on the idea of copying the graph, used also earlier for explaining the computation of the one-edge likelihood, in Section~\ref{sec:OneEdgeLikelihood} and illustrated in Figure~\ref{fig:OneCopyGraph}.

\begin{figure}
\centering
\includegraphics[width=0.95\textwidth]{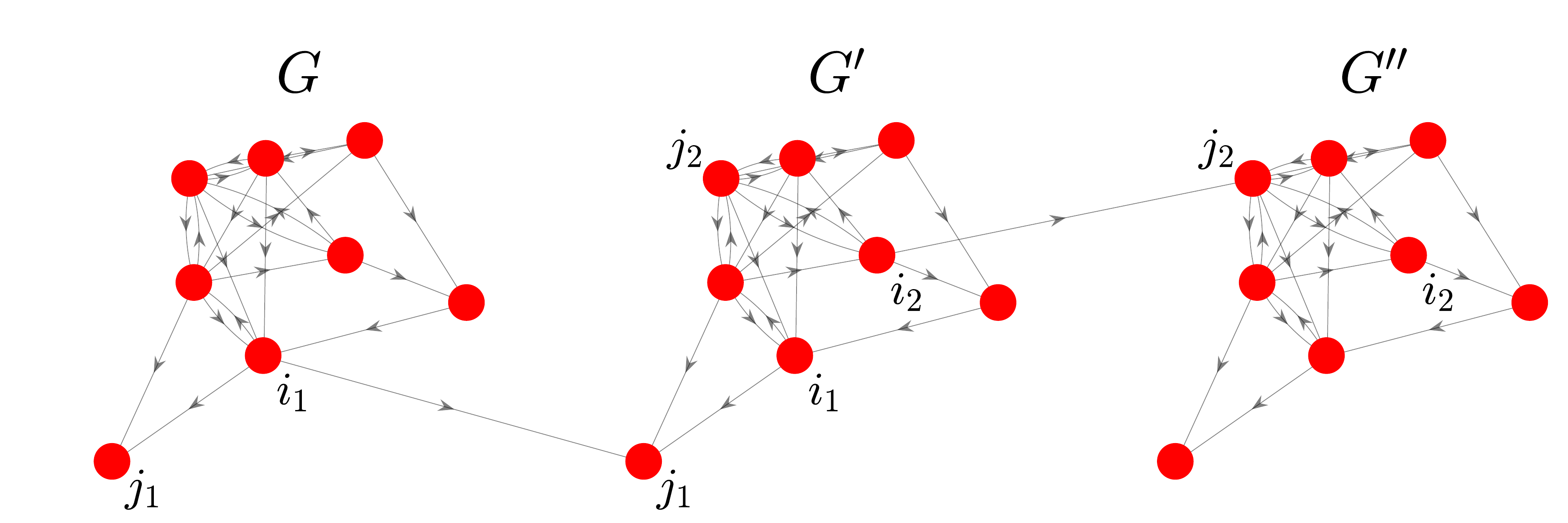}
\caption{The computation of the two-edge likelihood can be interpreted as making two copies, $G'$ and $G''$, of the original graph $G$ and adding the required edges between the copies.}
\label{fig:TwoCopiesGraph}
\end{figure}

Namely, for the edge sequence $\tilde{e}$, we may consider producing $M$ duplicates of the original graph $G$ and connect the duplicates via the edges of the sequence $\tilde{e}$. 
This is illustrated for the case of two observed edges in Figure~\ref{fig:TwoCopiesGraph}.
Indeed, the extended likelihood matrix $\mathbf{Q}_{\tilde{e}}$ from Equation~(\ref{eq:QOmegaDef}) represents such an extended graph of $M+1$ copies of the original graph.

Now, it can be shown that the inner sum in the expression in (\ref{eq:MultipleEdgeLikelihood}) is given by element $(s,Mn + t)$ of the $k$-th power of the matrix 
$\mathbf{Q}_{\tilde{e}}$; more exactly
\begin{equation}
\label{eq:Q_e_power}
\sum_{\wp_{k} \in \Pset^{(k)}}
\tw(\wp_{k}) \Vij(\wp_{k})
=
\mleft[
\mathbf{Q}_{\tilde{e}}^{k}
\mright]_{s, Mn + t}.
\end{equation}
This result is proved formally for the two-edge likelihood in Appendix~\ref{app:TwoEdgeLikelihood}.
The case of observing $M$ edges follows from there naturally.
Inserting the above Equation~(\ref{eq:Q_e_power}), as well the definition in Equation~(\ref{eq:OmegaMatrixLogarithm}), into Equation~(\ref{eq:MultipleEdgeLikelihood}) leads to the desired result.

Note that unlike in the one-edge or one-node likelihood cases, the expression of the likelihood in Theorem~\ref{thm:MultiEdgeLikelihood} has no direct connection with the matrix logarithm, and that the expression in Equation~(\ref{eq:OmegaMatrixLogarithm}) cannot be written in closed form.
Instead, we compute the likelihood by computing the sum in Equation~(\ref{eq:OmegaMatrixLogarithm}) up to convergence.

\subsection{Multiple observed nodes}
\label{sec:MultipleNodeLikelihood}
The likelihood of $\beta$ given an incomplete node-trajectory can be computed similarly to the likelihood of the incomplete edge-trajectory.
The derivation of the computation is similar to the extension from the one-edge likelihood to the one-node likelihood presented in Appendix~\ref{sec:OneNodeLikelihood}. 
Because of this, we omit the derivation and only state the result.
\begin{theorem}
Again, let $\wp$ be a hitting path from $s$ to $t$ drawn from the RSP distribution (Equation~(\ref{eq:GibbsBoltzmann})), and now, let $\tilde{v}$ be a sequence of $M$ nodes, sampled from a uniform distribution over the node sequence of $\wp$ (excluding the first and last nodes of the sequence)\footnote{Again, as in the one-node likelihood case, studied in Section~\ref{sec:OneNodeLikelihood}, the possibility of the first node being $s$ must be considered separately.
As was done there, we here also make the assumption that the first observed node can be $s$, but it cannot be the first node, $\wp(0)$, of the observed path $\wp$. 
}, where, furthermore, $M$  is drawn from a uniform distribution over $\{1, \ldots, L(\wp)-1\}$.
With these assumptions, the multiple-node likelihood is given, similarly to Equations~(\ref{eq:OneNodeLikelihoodFinal})~and~(\ref{eq:MultipleEdgeFullLhood}), by
\begin{equation}
\label{eq:MultipleNodeLikelihood}
\Lhood(\beta \giv \tilde{v}) 
= 
\dfrac{1}{\PF}
\sum_{h \in Succ(s)}
w_{su}
\mleft[ \mathbf{L}_{\tilde{v}} \mright]_{h,Mn + t},
\end{equation}
where the matrix $\mathbf{L}_{\tilde{v}}$ is defined equivalently to Equation (\ref{eq:OmegaMatrixLogarithm}), but by replacing matrix $\mathbf{Q}_{\tilde{e}}$ with matrix $\mathbf{Q}_{\tilde{v}}$, which, furthermore, is defined equivalently to the definition of the matrix $\mathbf{Q}_{\tilde{e}}$ in Equation~(\ref{eq:QOmegaDef}), however, replacing the matrices 
$\mathbf{W}_{i_{m}j_{m}}$ with the node-based matrices 
$\mathbf{W}_{\mathrm{r}(i_{m})} = \mathbf{I}_{i_{m}i_{m}} \mathbf{W}$, which contain row $i_{m}$ of $\mathbf{W}$ on row $i_{m}$ and zeros elsewhere (see Equation (\ref{eq:QiDef})).
\end{theorem}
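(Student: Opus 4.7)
The plan is to mirror the multi-edge derivation of Theorem~\ref{thm:MultiEdgeLikelihood} while grafting on the first-step decomposition used in passing from Theorem~\ref{thm:OneEdgeLikelihood} to Theorem~\ref{thm:OneNodeLikelihood}. Let $\rho$ denote the underlying unobserved path, $\mu$ the random number of observed nodes, and $\tilde{\nu}$ the observed node sequence. The only change from the edge case in the conditional sampling distributions is that the admissible positions along $\wp$ are the interior ones $\wp(1),\ldots,\wp(L(\wp)-1)$; accordingly $\mathrm{P}(\mu = M \giv \rho = \wp) = 1/(L(\wp)-1)$ and $\mathrm{P}(\tilde{\nu} = \tilde{v} \giv \mu = M, \rho = \wp) = \V_{\tilde{v}}(\wp)/\binom{L(\wp)-1}{M}$, where $\V_{\tilde{v}}(\wp)$ counts the ordered tuples of interior positions of $\wp$ realising $\tilde{v}$. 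Marginalising over $\rho$ yields
\begin{equation*}
\Lhood(\beta \giv \tilde{v}) = \frac{1}{\PF}\sum_{\wp \in \Pset} \frac{\tw(\wp)\,\V_{\tilde{v}}(\wp)}{\binom{L(\wp)-1}{M}(L(\wp)-1)}.
\end{equation*}

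Next, as in the one-node argument in Appendix~\ref{app:OneNodeLikelihood}, I would peel off the first step $s \to u$ by writing $\tw(\wp) = w_{su}\tw(\wp')$, where $\wp' = \wp(1:L(\wp))$ is a hitting path from $u \in Succ(s)$ to $t$ of length $L(\wp') = L(\wp)-1$. The interior positions of $\wp$ are precisely $\wp'(0),\ldots,\wp'(L(\wp')-1)$, so $\V_{\tilde{v}}(\wp)$ agrees with the analogous count on $\wp'$. Grouping $\wp'$ by its length $k$ transforms the likelihood into
\begin{equation*}
\Lhood(\beta \giv \tilde{v}) = \frac{1}{\PF}\sum_{u \in Succ(s)} w_{su}\sum_{k=M}^{\infty}\frac{1}{\binom{k}{M}k}\sum_{\wp' \in \mathcal{P}_{ut}^{(k)}} \tw(\wp')\,\V_{\tilde{v}}(\wp').
\end{equation*}

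The crux is then the combinatorial identity $\sum_{\wp' \in \mathcal{P}_{ut}^{(k)}} \tw(\wp')\V_{\tilde{v}}(\wp') = [\mathbf{Q}_{\tilde{v}}^k]_{u, Mn+t}$, which I would obtain by adapting the two-edge computation in Appendix~\ref{app:TwoEdgeLikelihood}. The only substantive change is that the off-diagonal blocks of $\mathbf{Q}_{\tilde{v}}$ are now $\mathbf{W}_{\mathrm{r}(i_m)} = \mathbf{I}_{i_m i_m}\mathbf{W}$ rather than $\mathbf{W}_{i_m j_m}$. Since $\V_i(\wp') = \sum_{j \in Succ(i)} \V_{ij}(\wp')$ for any $i \neq t$, a visit to node $i_m$ corresponds to a traversal along \emph{any} outgoing edge of $i_m$, which is exactly what placing the whole $i_m$-th row of $\mathbf{W}$ in the off-diagonal block encodes. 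Consequently the $k$-th power of the block-bidiagonal matrix $\mathbf{Q}_{\tilde{v}}$ on the $(M+1)$-fold copy space enumerates $u$-to-$t$ walks of length $k$ in the original graph, weighted by $\tw$ and counted with multiplicity $\V_{\tilde{v}}$. Substituting this identity and recognising $\sum_{k \geq M} \mathbf{Q}_{\tilde{v}}^k/(\binom{k}{M}k) = \mathbf{L}_{\tilde{v}}$ gives Equation~(\ref{eq:MultipleNodeLikelihood}). I expect the main obstacle to lie in that combinatorial identity: one has to induct carefully on $M$ and verify that the block upper-triangular structure of $\mathbf{Q}_{\tilde{v}}$ enforces the \emph{ordered} matching of the node sequence, so that each crossing between consecutive copies contributes exactly one $i_m$-visit factor with neither over- nor under-counting.
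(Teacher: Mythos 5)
Your proposal is correct and follows essentially the route the paper itself indicates: the multi-edge marginalization of Theorem~\ref{thm:MultiEdgeLikelihood} with the normalizations $1/(L(\wp)-1)$ and $1/\binom{L(\wp)-1}{M}$ for interior-node sampling, the first-step peeling $\tw(\wp)=w_{su}\tw(\wp')$ exactly as in the $\nu_0/\nu_1$ argument of Appendix~\ref{app:OneNodeLikelihood}, and the block-matrix identity $\sum_{\wp'\in\mathcal{P}_{ut}^{(k)}}\tw(\wp')\V_{\tilde{v}}(\wp')=[\mathbf{Q}_{\tilde{v}}^{k}]_{u,Mn+t}$ obtained by replacing $\mathbf{W}_{i_m j_m}$ with $\mathbf{W}_{\mathrm{r}(i_m)}$ in the Appendix~\ref{app:TwoEdgeLikelihood} computation. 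This reproduces Equation~(\ref{eq:MultipleNodeLikelihood}) (with the summation index read as $u$), so no genuine gap remains beyond the induction on $M$ that the paper itself also omits.
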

\begin{proof}
This can be shown by extending the proof of Theorem~\ref{thm:MultiEdgeLikelihood} similarly to the extension from the one-edge likelihood in Theorem~\ref{thm:OneEdgeLikelihood} to the one-node likelihood in Theorem~\ref{thm:OneNodeLikelihood}.
However, the presentation is omitted from here for brevity.
\end{proof}

The next section is devoted to verifying the above result with an artificial experiment, and in Section~\ref{sec:ApplicationToWildAnimalMovement} its functionality and usability is tested with real data of incomplete trajectories of wild reindeer.

\subsection{Validation of $\betaMLE$ with incomplete trajectories}
\label{sec:MultiObsValidation}
We evaluated the applicability and precision of the MLE method in the case of incomplete trajectories from $s$ to $t$. 
We only considered incomplete node trajectories and the corresponding method for computing the MLE. 
The setting was similar to the experiment for validating the MLEs for complete trajectories in Section~\ref{sec:CompleteTrajectories}.
Namely, we again generated 200 paths for different values of $\beta$ between uniformly distributed $s$-$t$-pairs, where $s$ and $t$ were at least 3 steps apart from each other.
We used the same three graphs that were used in Section~\ref{sec:CompleteTrajectories}, i.e., a grid uniform costs, a simulated Gaussian landscape and a graph generated with the LFR algorithm.

From the 200 generated paths, we extract a set $\Omega$ of 200 incomplete node trajectories by sampling $M$ nodes, where $M = \min(300, M')$, and $M'$ is drawn uniformly from $1,\ldots,L(\wp)$.
We limit the maximum length of an incomplete trajectory to 300 for computational efficiency.
We then compute $\hat{\beta}_{\mathrm{MLE}}$ by performing a line search on $\beta$ for finding the maximum value of $\logL (\beta \giv \Omega)$.
This is again repeated 10 times and we report the mean and standard deviation of the 10 obtained MLE values.

The results are gathered in Table \ref{tab:IncompletePaths}.
They show that the MLEs are accurate throughout the tested range of values of $\beta$ and thus confirm that the methodology derived in Section \ref{sec:IncompleteTrajectories} can be used in practice.
Moreover, quite surprisingly, the results are almost as accurate as the results in Table~\ref{tab:MultipleFullPaths} in the experiment with complete trajectories of Section~\ref{sec:ValidationOfFullPathMLE}.

\begin{table}
\centering
{\footnotesize 
\begin{tabular}{r|c c c}
$\beta$
& $\hat{\beta}_{\mathrm{MLE}}$, uniform grid
& $\hat{\beta}_{\mathrm{MLE}}$, simulated landscape
& $\hat{\beta}_{\mathrm{MLE}}$, LFR graph
\\
\hline
$   0.001$ & $ 0.00101 \pm 0.00016$ & $ 0.00106 \pm 0.00013$ & $ 0.00108 \pm 0.00033$ \\
$   0.005$ & $ 0.00497 \pm 0.00043$ & $ 0.00510 \pm 0.00069$ & $ 0.00471 \pm 0.00119$ \\
$    0.01$ & $ 0.00980 \pm 0.00070$ & $ 0.00992 \pm 0.00088$ & $ 0.01045 \pm 0.00198$ \\
$    0.05$ & $ 0.05014 \pm 0.00275$ & $ 0.05091 \pm 0.00250$ & $ 0.05297 \pm 0.00860$ \\
$     0.1$ & $ 0.10117 \pm 0.00704$ & $ 0.09433 \pm 0.00807$ & $ 0.10384 \pm 0.01226$ \\
$     0.5$ & $ 0.50810 \pm 0.03167$ & $ 0.49411 \pm 0.02237$ & $ 0.49841 \pm 0.05781$ \\
$       1$ & $ 1.01074 \pm 0.07147$ & $ 0.98349 \pm 0.03991$ & $ 0.98186 \pm 0.07691$ \\
$       5$ & $ 4.92557 \pm 0.27878$ & $ 4.93601 \pm 0.24915$ & $ 5.09714 \pm 0.45950$ \\
$      10$ & $12.73153 \pm 5.07237$ & $10.08324 \pm 0.28333$ & $10.49840 \pm 0.98032$ \\
\end{tabular}

}
\caption{The MLEs in the experiment with incomplete node trajectories. The first column on the left shows the value of $\beta$ used for generating the paths and the two other columns show the mean $\pm$ the standard deviation of the MLEs over 10 repetitions.
}
\label{tab:IncompletePaths}
\end{table}

\subsection{Binomial distribution on number of observations}
\label{sec:Binomial}
So far, we assumed that the number of observations in an incomplete trajectory is a random variable distributed uniformly over the range $[1, L(\wp)]$, where $\wp$ is the trajectory being observed.
In this section we briefly discuss an alternative for the uniformity assumption by considering a binomial distribution over the number of observations, $\mu \sim \mathrm{Bin}(L(\wp), p_{\mu})$ with the condition that $L(\wp)>0$.
We only derive here a way for computing the likelihood from incomplete trajectories with this assumption, but leave further examination of this idea for future work.

Considering that the random variable corresponding to the number of observations of an incomplete trajectory, $\mu$, is binomially distributed, involves an additional parameter, $p_{\mu}$, i.e., the probability of making an observation at each step of an observed path. Thus, using this assumption means that $p_{\mu}$ can be estimated from the data by some means.
Assuming a binomial distribution for $\mu$,
the probability of observing $M \geq 1$ edges from a path $\wp$ is
\begin{align}
\begin{aligned}
\ProbAll(\mu = M \giv \mu \geq 1) 
&=
\dfrac{\ProbAll(\mu = M)}{1-\ProbAll(\mu = 0)}
\\
&=
\dfrac{1}{1-q_{\mu}^{M}}
\binom{L(\wp)}{M}
p_{\mu}^{M}
q_{\mu}^{L(\wp)-M},
\\
&=
\dfrac{
p_{\mu}^{M}
q_{\mu}^{L(\wp)}
}{
q_{\mu}^{M}
(
1-q_{\mu}^{M}
)
}
\binom{L(\wp)}{M}
,
\end{aligned}
\end{align}
where $p_{\mu}$ is a fixed probability that an observation is recorded at any given step of path of $\wp$ and $q_{\mu} = 1-p_{\mu}$. 
Inserting this to the expression of the likelihood in Equation~(\ref{eq:MultiEdgeLikelihood2}) gives
\begin{align}
\Lhood
(\beta \giv \tilde{e})
&=
\dfrac{
p_{\mu}^{M}
}{
q_{\mu}^{M}
(
1-q_{\mu}^{M}
)
}
\psum
\Prob(\wp)
\V_{\tilde{e}}(\wp)
q_{\mu}^{L(\wp)},
\label{eq:LikelihoodBinomialAssumption}
\end{align}
which can be computed by replacing the definition of $\mathbf{L}_{\tilde{e}}$, instead of Equation~(\ref{eq:OmegaMatrixLogarithm}), as
\begin{equation}
\mathbf{L}_{\tilde{e}} 
= 
\sum_{k=1}^{\infty} 
(
q_{\mu}
\mathbf{Q}_{\tilde{e}}
)^k
=
\left(
\mathbf{I}
-
q_{\mu}
\mathbf{Q}_{\tilde{e}}
\right)^{-1}
\label{eq:BinomialOmegaMatrixLogarithm}
\end{equation}
We only need the element 
$
\mleft[ 
\mathbf{L}_{\tilde{e}} 
\mright]_{s,Mn + t}
$
from this matrix, which can be obtained, for instance,~by solving the $(Mn+t)$-th column,
$
\mathbf{l}_{Mn+t}^{\tilde{e}}
$,
of 
$
\mathbf{L}_{\tilde{e}}
$ 
from the linear system
\begin{equation}
\label{eq:BinomialInverse}
\left(
\mathbf{I}
-
q_{\mu}
\mathbf{Q}_{\tilde{e}}
\right)
\mathbf{l}_{Mn+t}^{\tilde{e}}
=
\mathbf{e}_{Mn+t}
\end{equation}
and picking its $s$-th element.
Then, the likelihood is given by replacing Equation~(\ref{eq:MultipleEdgeFullLhood}) with
\begin{equation}
\label{eq:BinomialLikelihoodFinal}
\Lhood
(\beta \giv \tilde{e})
=
\dfrac{
p_{\mu}^{M}
}{
q_{\mu}^{M}
(
1-q_{\mu}^{M}
)
}
\cdot
\dfrac{
\mleft[ 
\mathbf{L}_{\tilde{e}} 
\mright]_{s,Mn + t}
}{\PF}.
\end{equation}

Computing 
$
\mleft[ 
\mathbf{L}_{\tilde{e}} 
\mright]_{s,Mn + t}
$
from the matrix inverse in Equation~(\ref{eq:BinomialInverse}) can be done more efficiently than the computation from Equation~(\ref{eq:OmegaMatrixLogarithm}), which cannot be expressed in closed form.
However, as mentioned earlier, this method requires setting the observation probability $p_{\mu}$ as an additional parameter, which is not needed in the method based on the uniform distribution.
However, as already mentioned, in the remainder of the paper, we will only focus on the uniformity assumption, as described in Section~\ref{sec:MultipleEdgeLikelihood}, and leave the further study of the binomial assumption, discussed here, for future work.

\section{Application to animal movement modelling}
\label{sec:ApplicationToWildAnimalMovement}

\begin{figure}
	\centering
	\subcaptionbox{\label{fig:ReindeerLandscape}}{\includegraphics[width=.4\textwidth, trim={0 2cm 0 1.5cm}, clip]{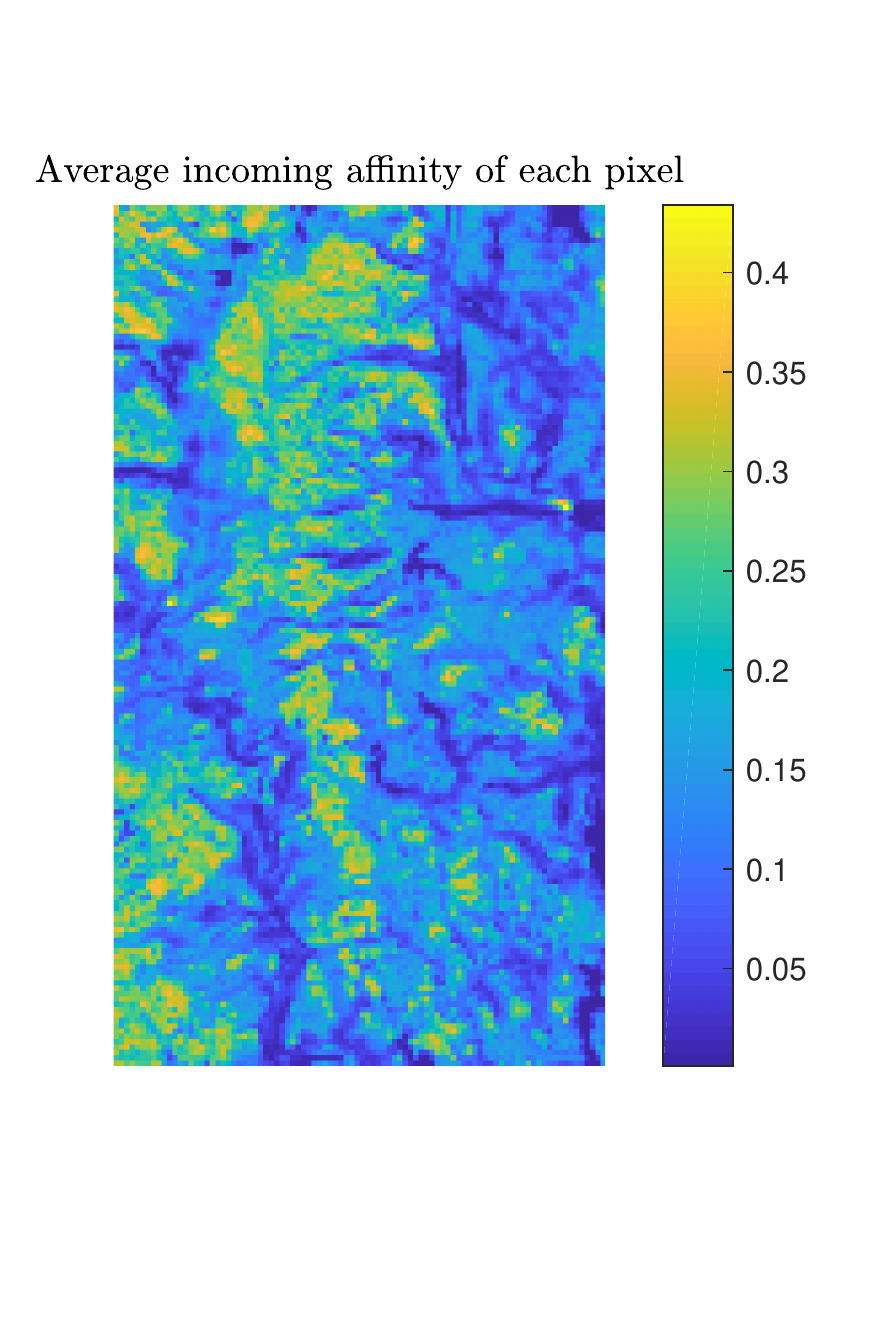}
	}
	\subcaptionbox{\label{fig:ReindeerVisits}}{\includegraphics[width=.4\textwidth, trim={0 2cm 0 1.5cm}, clip]{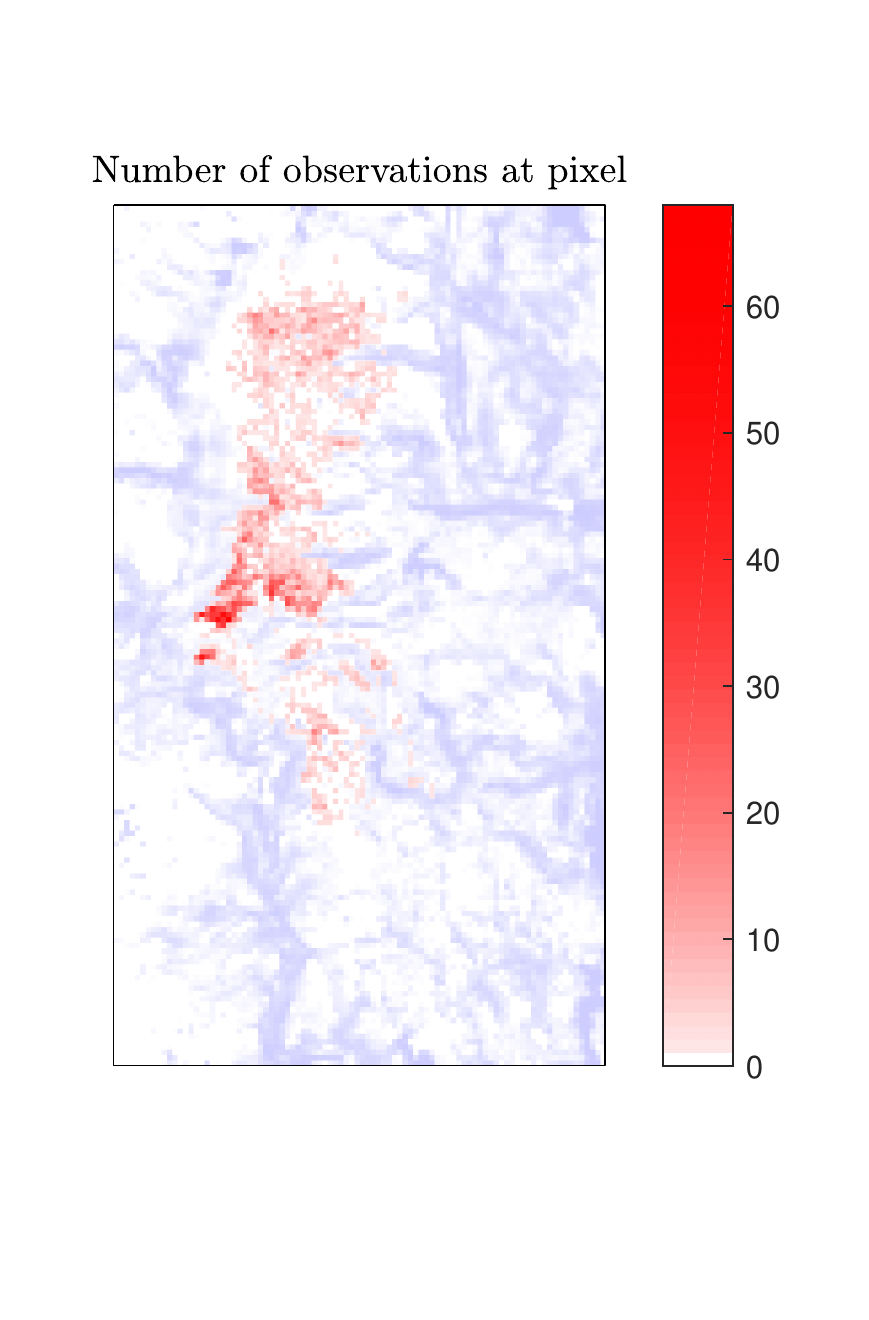}
	}
	\caption{Visualization of the landscape used for studying wild reindeer movement showing the average incoming edge affinity of each pixel as a heatmap~(A). The heatmap of the number of times an animal was observed at each pixel (in red hue; the blue hue shows the low-affinity (i.e.\ high cost) pixels simply in order to display the shape of the landscape on the image)  based on the GPS data (B).}
	\label{fig:LanscapeAndVisits}
\end{figure}

In order to test the applicability of the method for computing the MLE of $\beta$ in a real data setting, we ran the MLE method for a set of GPS trajectory data collected from individual wild mountain reindeer in the Austhei area of Norway.
The reason for studying this data is based on the apparent suitability of RSPs for modelling animal movement.
This holds especially for the study of wild reindeer, which are a migratory species, that often have a good sense of knowledge of their environment.
Accordingly, they can be considered to follow fairly optimal routes when moving in a landscape.
It would be, however, unrealistic to assume that the animals always follow only the optimal path on a static landscape.
Instead their movement decisions may be considered to involve some randomness.
This is exactly the kind of scenario that the RSP framework is designed for.
However, the aim of the experiments reported here was simply to test whether the MLEs are practical to compute for real data with the methods derived above and to see whether the estimates obtained seem sensible.
We leave a more careful analysis of the obtained results, and their applicability in more focused ecological problems, such as actual prediction of movement, for future work.

The data considered here consists of incomplete trajectories recorded during springtime migration, when the reindeer traverse the landscape from north to south crossing over a road passage cutting through the landscape.
The same area and partly the same data was studied previously in the context of RSPs in~\citep{panzacchi2016predicting}, although there the graph was constructed differently compared to the experiment presented here.
In~\citep{panzacchi2016predicting} a method was also devised for estimating the inverse temperature parameter, however based on a very different, more situation-dependent approach, compared to the more generic method introduced in this work.

\begin{figure}
\centering
\subcaptionbox{
\label{fig:Trajectory1}}{
\makebox[0.33\linewidth][c]{
\includegraphics[width=0.23\linewidth]{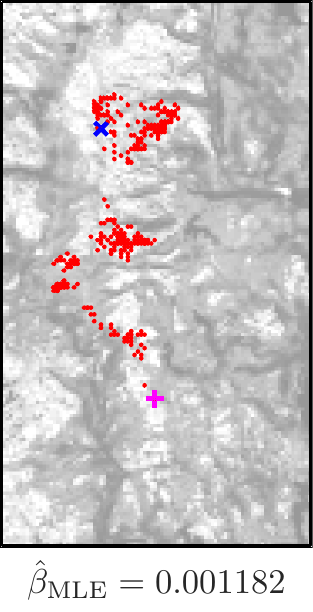}%
}%
}%
\subcaptionbox{
\label{fig:Trajectory2}}{
\makebox[0.33\linewidth][c]{
\includegraphics[width=0.23\linewidth]{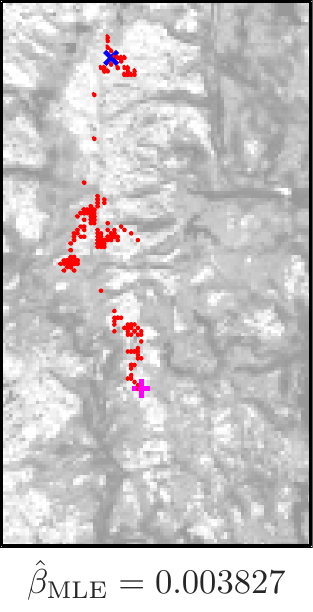}%
}%
}%
\subcaptionbox{
\label{fig:Trajectory3}}{
\makebox[0.33\linewidth][c]{
\includegraphics[width=0.23\linewidth]{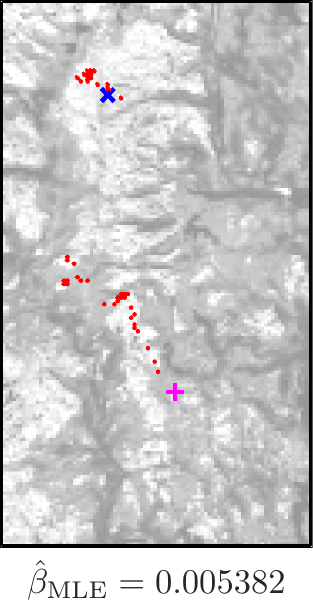}%
}%
}%
\caption{Examples of three trajectories of individual reindeer and the MLE of $\beta$ of each trajectory. The starting point, on the north side of the landscape, is depicted with a blue '\texttt{x}', and the ending point with a magenta '\texttt{+}'. All 32 trajectories in the data have different source and target nodes $s$ and $t$.}
\label{fig:Trajectories}
\end{figure}

The landscape is modeled as a rectangular grid graph, where each node represents a 500 meter wide square pixel of a map image of the landscape.
The size of the landscape is $80.5$ km $\times$ $46$ km, resulting in a graph with $14 812$ nodes.
As before, each pixel was again connected to its 8 surrounding pixels.
First, the edge affinities $a_{ij}$ were defined as Step Selection Probabilities (SSP) \citep{lele2006weighted}, inferred by detecting features on each pixel with remote sensing methods and by measuring the preference of those features for movement based on the reindeer GPS data.
A more detailed explanation of how the affinities were defined is provided in Appendix~\ref{app:SSPF}.
The edge costs were defined simply as the reciprocals of affinities, i.e.~$c_{ij} = 1/a_{ij}$.
The landscape is illustrated in Figure \ref{fig:ReindeerLandscape} as a heatmap where the pixels are colored according to the average incoming edge affinity of each pixel.

The data contains 32 incomplete trajectories recorded during the years 2007-2013 from 14 different individuals.
Figure \ref{fig:ReindeerVisits} visualizes the trajectory data by marking the number of times any of the 32 trajectories were observed at each pixel.
The GPS measurements were made for the most part every 3 hours, 50 \% of trajectories contained all locations. 
However, many trajectories contain gaps of 6 hours to one day, and one trajectory is missing locations for nearly two weeks. 
Trajectories started in the winter range between March 1st and 20th, and ended in the summer range between June 8th and 30th. 
We cut the original trajectories after they have crossed a certain line which can be interpreted as a border of the summer range.
The number of observations in the resulting trajectories was between 57 and 535.
For each observation, we detect the pixel that the animal is in, and use the method derived in Section \ref{sec:IncompleteTrajectories} for computing the multiple-node likelihood.
We consider the trajectories as extracted from hitting paths, where the hitting node is selected to be the first pixel within the summer range where the animal has been observed.
As discussed in Section~\ref{sec:Determining_s_and_t}, more sophisticated methods could be used for determining the hitting target node, such as the \emph{margin-constrained RSP model}~\cite{guex2019randomized}, where the graph nodes are assigned probabilities of being a starting or ending node of a path.

\begin{figure}
\centering
\includegraphics[width=0.6\textwidth]{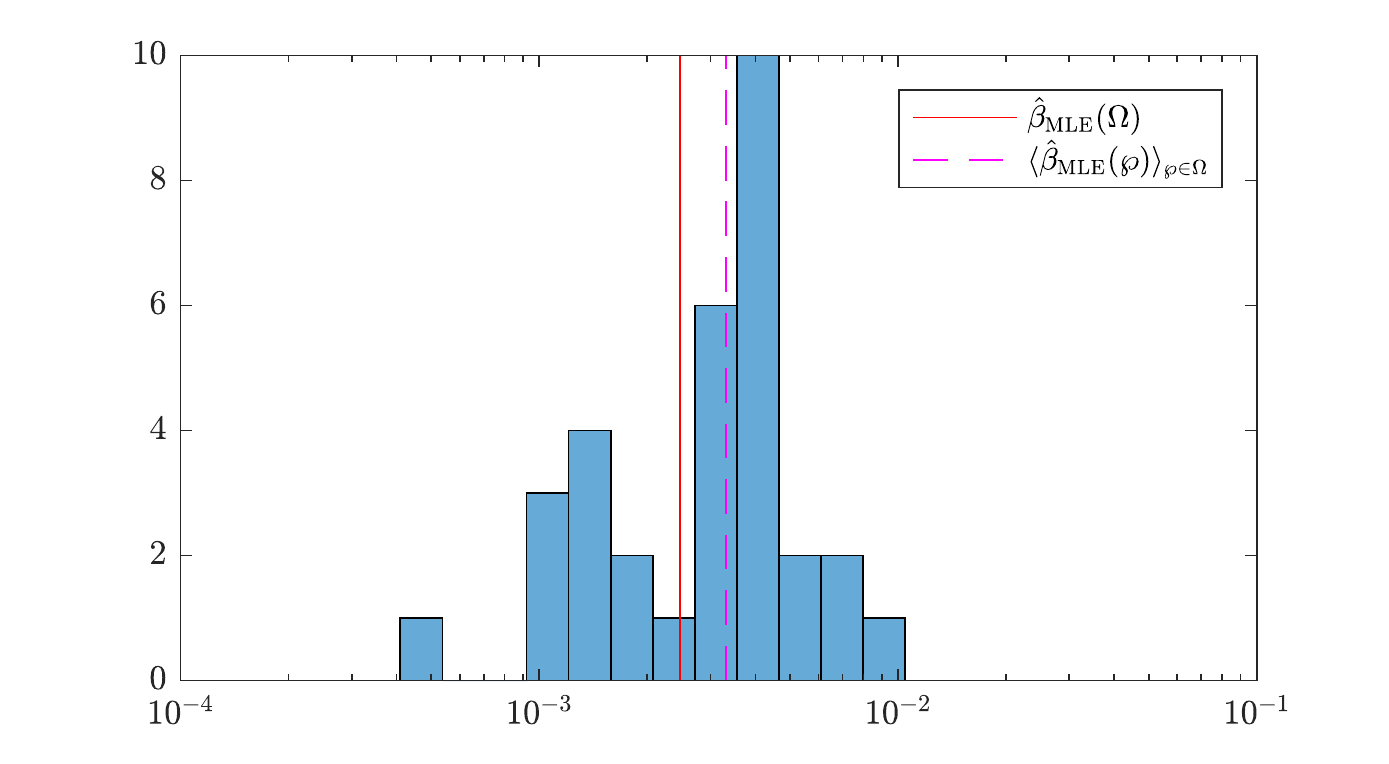}
\caption{Histogram of the MLE values for the 32 trajectories of reindeer. The solid red line marks the MLE obtained with the whole set of trajectories, $\betaMLE (\Omega) = 0.002466$, whereas the magenta dashed line shows the mean of the individual trajectory MLEs, $\langle \betaMLE (\wp) \rangle_{\wp \in \Omega} = 0.003311$.}
\label{fig:MLEHistogram}
\end{figure}

We first computed $\betaMLE$ for each trajectory separately, resulting in 32 MLE values. 
Figure~\ref{fig:Trajectories} shows three examples of trajectories and the corresponding  $\betaMLE$ values.
As can be seen, the MLE values from Figure~\ref{fig:Trajectory1} to Figure~\ref{fig:Trajectory3} increase.
Similarly, the directness of the trajectories, in that order, seemingly increases.
The trajectory in Figure~\ref{fig:Trajectory3} appears to circulate fairly randomly before arriving at the destination, whereas the trajectory in Figure~\ref{fig:Trajectory3} is heads quite straightforwardly towards the destination.
The trajectory in Figure~\ref{fig:Trajectory2} appears as an intermediate type between the two others.
These examples provide a sanity check and show that the MLE method gives reasonable and meaningful estimates of $\beta$.

A histogram of the MLE values of the 32 trajectories is presented in Figure~\ref{fig:MLEHistogram}.
The MLEs remained for the most part in a fairly consistent range of values, between $[ 0.001, 0.01 ]$, with one exceptional trajectory obtaining a MLE below this range, namely $\betaMLE= 0.0004715$.
The mean of the individual MLEs was $\langle \betaMLE (\wp) \rangle_{\wp \in \Omega} = 0.003311$, and is represented in Figure~\ref{fig:MLEHistogram} as the dashed magenta line.

In addition to estimating the individual trajectory MLEs, we also considered the whole set of trajectories as being generated by the same value of $\beta$, and thus computed the population-wide MLE value using the whole set of trajectories.
Note that all the trajectories are between different $s$-$t$-pairs, where $s$ is simply the first observed node of the trajectory, on the winter range, and $t$ is, as explained earlier, the first observed node on the trajectory that is located on the summer range.
We consider the likelihood of the set as the product of the likelihoods of each trajectory, i.e.\ that the trajectories are independent.
Assuming independence between the trajectories is convenient for computation, but is also justified by the fact that the trajectories are mostly collected over different years and can be from individuals belonging to different herds.
The MLE given by the whole set of trajectories was $\betaMLE(\Omega) = 0.002466$, and is marked by the solid red line in Figure~\ref{fig:MLEHistogram}.

\begin{figure}
	\centering
	\subcaptionbox{\label{fig:RSPExpectedVisits1}}{
	\includegraphics[width=.3\textwidth, trim={1cm 2cm 0.7cm 1.5cm}, clip]{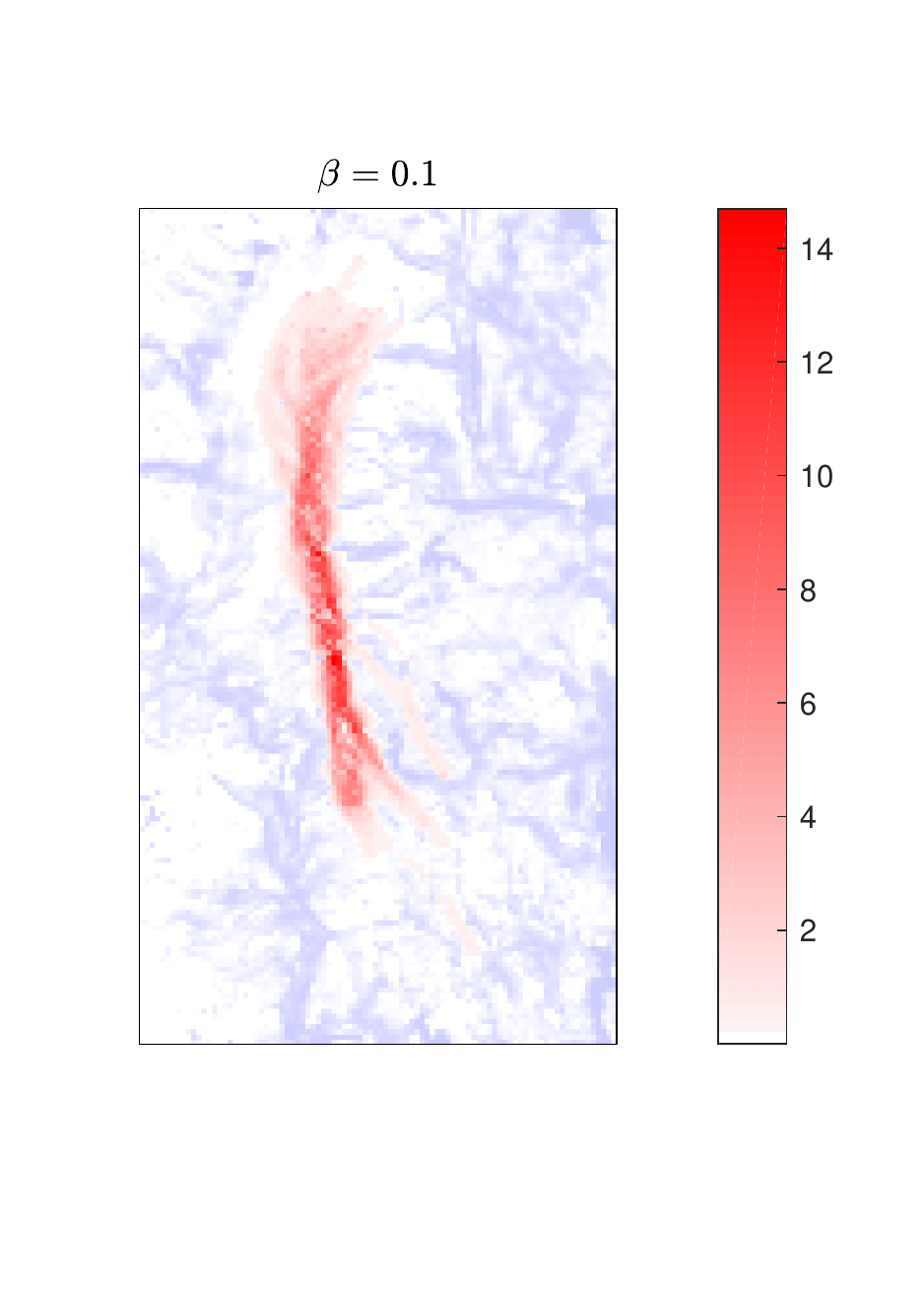}
	}
	\subcaptionbox{\label{fig:RSPExpectedVisits2}}{
	\includegraphics[width=.3\textwidth, trim={1cm 2cm 0.7cm 1.5cm}, clip]{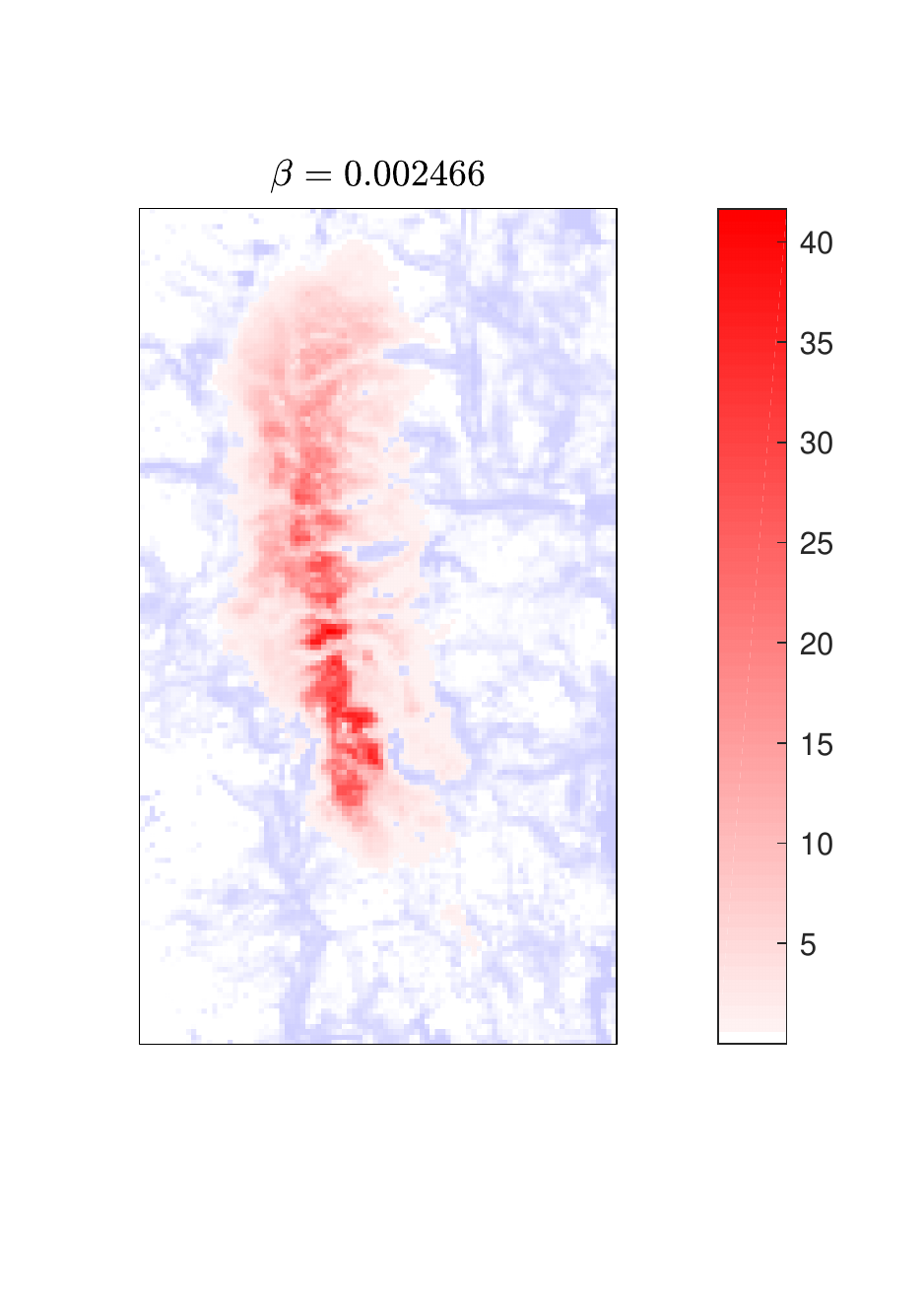}
	}
	\subcaptionbox{\label{fig:RSPExpectedVisits3}}{
	\includegraphics[width=.3\textwidth, trim={1cm 2cm 0.7cm 1.5cm}, clip]{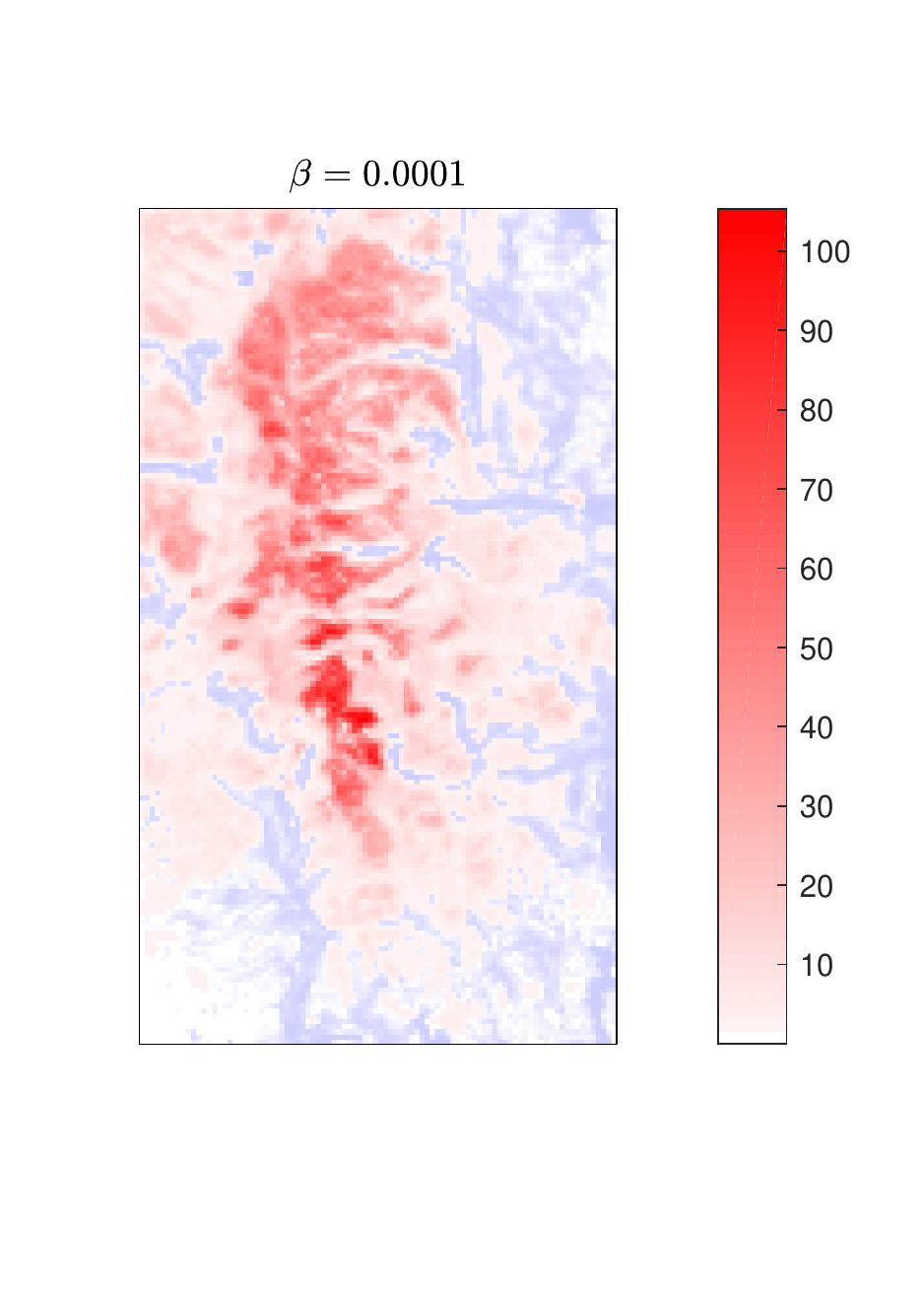}
	}
	\caption{The expected number of visits to each pixel over the RSP probability distributions between each observed $s$-$t$-pair, with three values of $\beta = 0.1$ (a), $\betaMLE(\Omega) = 0.002466$ (b; the MLE based on the set of trajectories $\Omega$), and $\beta=0.0001$ (c). The red hue at pixel $i$ marks the value $\sum_{(s,t) \in X} \expVi(s,t)$, given the value of $\beta$ depicted above the plot. Here, $X$ denotes the set of $s$-$t$-pairs of the trajectories in the data set $\Omega$.}
	\label{fig:RSPExpectedVisits}
\end{figure}

Figure \ref{fig:RSPExpectedVisits} contains plots for the expected numbers of visits according to the RSP model with different values of $\beta$.
These were computed by considering the starting and ending nodes, $s$ and $t$, of each trajectory in the data separately, computing the expected number of visits to each pixel for that $s$-$t$-pair, and by summing the contributions of each $s$-$t$-pair.
In each plot, the blue hue represents the average incoming edge cost of each pixel, and the transparent red hue represents the observed or expected number of visits by an animal to a pixel.

Figure \ref{fig:RSPExpectedVisits1} corresponds to the model computed with  $\beta=0.1$, and Figure \ref{fig:RSPExpectedVisits3} with $\beta=0.0001$ for each $s$-$t$-pair.
In between, Figure \ref{fig:RSPExpectedVisits2} shows the result with the value $\beta = 0.002466$, which is the MLE estimated by using the whole set of trajectories.
We also constructed another plot by using for each $s$-$t$-pair the MLE of $\beta$ of the corresponding trajectory, but it is not presented here, because it resembles almost exactly the plot obtained with the global MLE in Figure \ref{fig:RSPExpectedVisits2}.

A visual inspection of the different plots obtained with the RSP distribution using different values of $\beta$ indicates that the model using the MLE values of $\beta$ has most resemblance with the plot of the actual observations.
Note, however, that the plots in Figure \ref{fig:RSPExpectedVisits} are not exactly expected to resemble the plot in Figure \ref{fig:ReindeerVisits}.
Namely, as explained before, the RSP model does not take into account the temporal aspect of the movement.
The temporal aspect, however, is grained in the trajectories, as the speed of movement and the time interval between observations vary.
For instance, after crossing the road over to the south side, the animals tend to move faster away from the road \citep{panzacchi2013road}. 
This results generally in a fewer number of observations in the pixels south of the road.
In other words, Figure \ref{fig:ReindeerVisits} rather depicts the \emph{time spent} at each pixel by the animals.
Instead, the plots in Figure \ref{fig:RSPExpectedVisits}, given by the RSP model, reflect the importance of each pixel as an intermediate point of movement, which can be more crucial, for instance, for detecting corridors and barriers in a landscape.

\section{Conclusion}
\label{sec:MLE-Discussion}

This paper focused on the estimation of parameters when fitting the RSP model to data containing trajectories on a network, with most focus on the estimation of the inverse temperature parameter $\beta$.
Methods were derived for computing and maximizing the likelihood of values of $\beta$ given a data set of either fully or only partly observed trajectories~(in Sections~\ref{sec:CompleteTrajectories} and \ref{sec:IncompleteTrajectories}, respectively).
The methods were shown to provide accurate estimates of $\beta$ with simulation examples where trajectories were generated on artificially constructed graphs.
In addition to estimation of $\beta$, also maximum likelihood estimates of the edge costs, $c_{ij}, (i,j) \in E$, of a graph were derived, when dealing with complete trajectories.

The MLE method derived for incomplete trajectories was also tested on real data of trajectories of wild mountain reindeer on a landscape area in Norway.
The purpose of the experiment was only to verify further that the MLEs of $\beta$ can be computed in practice and that the estimates seem reasonable and sensible.
This indicates that the MLE method provides a well-founded and functional way for model fitting, when using the RSP model for more specific applications related to animal movement.

The most significant theoretical results of the paper, in Section~\ref{sec:IncompleteTrajectories}, dealt with fitting the parameter $\beta$ to data consisting of incomplete trajectories.
As a computational peculiarity, the derivation of the likelihood function in the special case of observing only one edge or node of a trajectory was shown to involve the matrix logarithm.
In fact, the computational techniques appearing in the derivation could be applied for various other network analysis purposes.
For instance, a similar method can be derived for computing the expected average edge cost of paths, which can be of use in situations where, for instance, two nodes may be considered more reliably connected when there are, on average, no high-cost edges between them. 
Also, as discussed in Section~\ref{sec:IncompleteTrajectories}, the matrix logarithm could be used for computing the probability of observing an edge or node over the natural random walk distribution between an $s$-$t$-pair, assuming the uniform distribution over the path for the sampling process.
This could be developed further to define a new network centrality measure based on the observation probability of edges or nodes.

One further extension of the theory developed in this work is to consider features on the edges or nodes of the network as separate costs, and to fit parameters on those features assuming the RSP model.
Likelihood maximization could also be developed for other observation scenarios than the one dealt in this work.
One example would be a case where some of the nodes or edges of the network contain sensors and the trajectories can only be observed when they visit these sensor nodes or edges.
Also, the RSP framework will in future work be extended to temporal networks, which will bring more challenges also to the parameter estimation problem.
Lastly, the methods developed here will hopefully indicate new principled ways of selecting an appropriate value for $\beta$ for other network data analysis tasks that the RSP framework is used for.

\newpage
\section*{Appendix: Additional material and proofs of main results}
\label{TheAppendix}

\appendix
\section{Proof of Theorem~\ref{thm:OneNodeLikelihood}}
\label{app:OneNodeLikelihood}
As before, let $\rho$ be the random variable corresponding to the drawing of a path from the set $\Pset$ according to the RSP distribution (Equation (\ref{eq:GibbsBoltzmann})).
Given a path 
$\wp = (\wp(0)=s, \wp(1), \ldots, \wp(L(\wp))=t)$, let us consider two random variables: 
\begin{itemize}
\item 
$\nu_{0}$, corresponding to the drawing of an intermediate node, drawn uniformly from the subsequence of $\wp$ excluding the last node, but \emph{including the first node}, i.e.,~%
$(\wp(0), \ldots, \wp(L(\wp) - 1))$, and 
\item
$\nu_{1}$ corresponding to the drawing of an intermediate node, drawn uniformly from the subsequence of $\wp$ \emph{excluding both the first and last node}, i.e.,~%
$(\wp(1), \ldots, \wp(L(\wp) - 1))$.
\end{itemize}
Recalling the assumptions stated in Section~\ref{sec:OneNodeLikelihood}, the one-node likelihood is given by the probability distribution of $\nu_{1}$.
However, we first derive the distribution of $\nu_{0}$ as an intermediate result and use it to derive the distribution of $\nu_{1}$.

Using $\Vi(\wp) = \sum_{j \in Succ(i)} \Vij(\wp)$, as earlier, the conditional probability of node $i$ as the outcome of $\nu_{0}$, given a path $\wp \in \Pset$, is
\begin{equation}
\mathrm{P}
(
\nu_{0} = i 
\giv
\rho = \wp
)
=
\dfrac{\expVi (\wp)}{L(\wp)}
=
\dfrac{\sum_{j\in Succ(i)} \expVij (\wp)}{L(\wp)}.
\label{eq:Nu0ConditionalProbability}
\end{equation}

Then, marginalizing out $\rho$, we can write the distribution of $\nu_{0}$, in similar fashion to Equation~(\ref{eq:OneEdgeLikelihoodLast}) for the one-edge likelihood, as
\begin{align}
\label{eq:Nu0First}
\mathrm{P}
(\nu_{0} = i ; \beta)
&=
\dfrac{1}{\PF}
\sum_{j \in Succ(i)}
\sum_{\wp \in \Pset} 
\dfrac{\tw(\wp) \Vij(\wp)}{L(\wp)}
\\
&= 
-\dfrac{1}{\beta \PF} 
\sum_{j\in Succ(i)} 
\dfrac{\partial \widetilde{\mathcal{Z}}_{st}}{\partial c_{ij}} 
.
\label{eq:OneNodeLikelihood}
\end{align}
This equation is the equivalent of Equation (\ref{eq:OneEdgeLikelihoodLast}) for edges.
Now, similar to Equation (\ref{eq:WkDerivative}), we have, for any $k\geq 1$
\begin{align}
\sum_{j \in Succ(i)} 
\dfrac{\partial \Wt^{k}}{\partial c_{ij}} 
&= 
-\beta \!\! 
\sum_{j \in Succ(i)} 
\sum_{l=1}^{k} 
\Wt^{l-1} 
\mathbf{W}_{ij} 
\mathbf{W}^{k-l} 
\\
&= 
-\beta 
\sum_{l=1}^{k} 
\Wt^{l-1} 
\mleft(
\sum_{j \in Succ(i)} 
\mathbf{W}_{ij} 
\mright) 
\mathbf{W}^{k-l} 
\\
&= 
-\beta 
\sum_{l=1}^{k} 
\Wt^{l-1} 
\mathbf{W}_{\mathrm{r}(i)} 
\Wt^{k-l}
\\
&\triangleq 
-\beta 
\mathbf{S}_{i}^{(k)},
\label{eq:DerivativesOfWk}
\end{align}
where $\mathbf{W}_{\mathrm{r}(i)} = \mathbf{I}_{ii} \mathbf{W}$
is the matrix containing the $i$-th row of matrix $\mathbf{W}$ on its $i$-th row, but zeros elsewhere, and $\mathbf{S}_{i}^{(k)}$ defined analogously to $\mathbf{S}_{ij}^{(k)}$ in Equation (\ref{eq:SijkDef}).

The rest of the derivation of the distribution of $\nu_{0}$ proceeds as in the single-edge likelihood case, but with matrix $\mathbf{W}_{\mathrm{r}(i)}$ in place of matrix $\mathbf{W}_{ij}$.
Namely, the derivative expression appearing in Equation (\ref{eq:OneNodeLikelihood}) can be computed (as in Equations (\ref{eq:PFDerivativeFirst})-(\ref{eq:PFDerivative})), as
\begin{equation}
\sum_{j\in Succ(i)} 
\dfrac{\partial \widetilde{\mathcal{Z}}_{st}}{\partial c_{ij}} 
=
\mathbf{e}_{s}^{\mathsf{T}}
\sum_{k=1}^{\infty} 
\dfrac{\mathbf{S}_{i}^{(k)}}{k}
\mathbf{e}_{t}
\label{eq:NodeDerivativeOfTildeZ},
\end{equation}
which, again, can be computed as element $(s,n+t)$ of the matrix logarithm 
$
\mathbf{L}_{i} 
= 
-\mathbf{log}
(\mathbf{I}-\mathbf{Q}_{i})
$,
with the $(2n \times 2n)$ block matrix (see Equation (\ref{eq:PFDerivativeIntermediate}))
\begin{equation}
\label{eq:QiDefRecall}
\mathbf{Q}_{i} = 
\mleft[
\begin{array}{cc}
\Wt        & \mathbf{W}_{\mathrm{r}(i)} \\
\mathbf{0} & \Wt
\end{array}
\mright].
\end{equation}
In conclusion, we can write the distribution of $\nu_{0}$ as
\begin{equation}
\mathrm{P}
(\nu_{0} = i)
= 
-\dfrac{1}{\beta \PF} 
(-\beta [\mathbf{L}_{i}]_{s,n+t}) 
= 
\dfrac{[\mathbf{L}_{i}]_{s,n+t}}{\PF}.
\label{eq:Nu0DistributionFinal}
\end{equation}

Using the above, we can then derive the one-node likelihood according to the distribution of $\nu_{1}$.
We do this by considering a split of each path $\wp \in \Pset$ into a path consisting of the first step, from $s$ to one of its successor nodes $u \in Succ(s)$, and the rest of the path, from $u$ to $t$.
We denote, generally, by $\wp_{u} \in \Pset$ a path from $s$ to $t$ whose second node is a successor $u \in Succ(s)$ of $s$, and by $\wp_{u}' \in \mathcal{P}_{ut}$ the remainder of path  $\wp_{u}$ after the first step.
The likelihood is then given by
\begin{align}
\Lhood(\beta \giv i)
&=
\mathrm{P}(\nu_{1} = i; \beta)
=
\sum_{\wp \in \Pset}
\Prob (\wp)
\mathrm{P}
(\nu_{1}=i \giv \rho = \wp)
\\
&=
\dfrac{1}{\PF}
\psum
\widetilde{w}(\wp)
\mathrm{P}
(\nu_{1}=i \giv \rho = \wp)
\\
&=
\dfrac{1}{\PF}
\sum_{u \in Succ(s)}
\sum_{\substack{\wp_{u} \in \mathcal{P}_{st} \\ \wp_{u}(1) = u}}
\tw(\wp_{u})
\mathrm{P}
(\nu_{1}=i \giv \rho = \wp_{u})
\\
&=
\dfrac{1}{\PF}
\sum_{u \in Succ(s)}
w_{su}
\sum_{\wp_{u}' \in \mathcal{P}_{ut}}
\tw(\wp_{u}')
\mathrm{P}
(\nu_{0} = i \giv \rho = \wp_{u}')
\\
&=
\dfrac{1}{\PF}
\sum_{u \in Succ(s)}
w_{su}
\mathrm{P}(\nu_{0} = i)
\mathcal{Z}_{ut}
\\
&=
\dfrac{1}{\PF}
\sum_{u \in Succ(s)}
w_{su}
\mleft[
\mathbf{L}_{i}
\mright]_{u,n+t},
\label{eq:Nu1Distribution}
\end{align}
where
\begin{equation}
\mathrm{P}(\nu_{0} = i)
\mathcal{Z}_{ut}
=
\sum_{\wp_{u}' \in \mathcal{P}_{ut}}
w(\wp_{u}')
\mathrm{P}
(\nu_{0} = i \giv \rho = \wp_{u}')
\end{equation}
derives from 
\begin{equation}
\mathrm{P}(\nu_{0} = i)
=
\sum_{\wp_{u}' \in \mathcal{P}_{ut}}
\mathrm{P}
(\rho = \wp_{u}')
\mathrm{P}
(\nu_{0} = i \giv \rho = \wp_{u}')
\end{equation}
when considering paths from $u$ to $t$, instead of from $s$ to $t$.
This is the desired result, as expressed in Equation~(\ref{eq:OneNodeLikelihoodFinal}).

\qed

\section{Proof of Theorem~\ref{thm:MultiEdgeLikelihood} in the case of two observed edges}
\label{app:TwoEdgeLikelihood}
Here we describe a more rigorous derivation, compared to the heuristic justification of Section~\ref{sec:MultipleEdgeLikelihood}, for the computation of the two-edge likelihood, i.e.~the likelihood of observing two edges $e_{1}=(i_{1},j_{1}), e_{2} = (i_{2}, j_{2})$ along a path $\wp$.
The derivation of the likelihood for the case of an arbitrary number of observed edges, as expressed in Equation~(\ref{eq:MultipleEdgeFullLhood}) could be derived in a similar way.
However, here we deal only with the two-edge case for conciseness and clarity.

We begin by rewriting the form presented in Equation~(\ref{eq:MultipleEdgeLikelihood}) for the likelihood, with $M=2$:
\begin{equation}
\label{eq:TwoEdgeLikelihoodRecall}
\begin{aligned}
\Lhood(\beta \giv (e_{1}, e_{2})) 
&= \dfrac{1}{\PF} 
  \sum_{k=2}^{\infty} 
  \dfrac{1}{\binom{k}{2}} 
  \sum_{\wp_{k} \in \Pset^{(k)}} 
  \widetilde{w}(\wp_{k}) 
  \V_{e_{1} \leadsto e_{2}}(\wp_{k}),
\end{aligned}
\end{equation}
where now
$
\V_{e_{1} \leadsto e_{2}}(\wp_{k})
$
denotes the number of times that edges $e_1$ and $e_2$ appear on $\wp_k$ in that order, counting all occurences of $e_1$ and $e_2$ as separate.
For example, if
$
\wp = (1,2,3,2,3,4)
$, 
then
$
\V_{(1,2) \leadsto (2,3)}(\wp)
=
2,
$
as the edge $(2,3)$ appears twice on $\wp$ after the appearance of the edge $(1,2)$.

Now, $\V_{e_{1} \leadsto e_{2}}$ can be calculated by iterating over all edges of the path, checking if the edge corresponds to $e_{1}$, and then computing the number of times $e_{2}$ appears on the path after that.
Summing the occurences of $e_{2}$ after each occurence of $e_{1}$ then gives $\V_{e_{1} \leadsto e_{2}}$.
Formally, for any $\wp_{k} \in \Pset^{(k)}$ with $k\geq2$, where $\wp_{k} = (\wp(0), \wp(1), \ldots, \wp(k))$, we have
\begin{equation}
\label{eq:Etae1e2}
  \V_{e_{1} \leadsto e_{2}} (\wp_{k})
= \sum_{l=0}^{k-2} 
  \mleft[ 
  \wp_{k} \mleft(l,l+1 \mright) = e_{1} 
  \mright]
  \ \V_{e_{2}} (\wp_{k}(l+1:k)),
\end{equation}
where $\wp_{k}(l_{1}: l_{2})= (\wp(l_{1}), \ldots, \wp(l_{2}))$ denotes the subpath of $\wp_{k}$ from the $l_{1}$-th node to the $l_{2}$-th node, with $0 \leq l_{1} < l_{2} \leq k$, and the brackets $[ \ ]$ are the Iverson brackets~\citep{knuth1992two}, i.e.~1 if the statement within is true and 0 otherwise. 

Inserting Equation~(\ref{eq:Etae1e2}) into the sum appearing in Equation~(\ref{eq:TwoEdgeLikelihoodRecall}), we can write the sum in matrix form, for path length $k\geq 2$:
\begin{equation}
\begin{aligned}
\label{eq:SumOfwTimesEta}
&  \sum_{\wp_{k} \in \Pset^{(k)}}
   \widetilde{w}(\wp_{k}) 
   \V_{e_{1} \leadsto e_{2}}(\wp_{k})
   \\
= &
   \!\begin{aligned}[t]   
   \sum_{\wp_{k} \in \Pset^{(k)}}   
   \sum_{l=0}^{k-2}
  &w(\wp_{k}(0:l)) 
   \
   \mleft[ 
   \wp_{k} \mleft(l,l+1 \mright) = e_{1} 
   \mright] 
   \\
  &\times
   w_{i_{1}j_{1}} 
   w(\wp_{k}(l+1:k))
   \V_{e_{2}} (\wp_{k}(l+1 : k))
   \end{aligned}
   \\
= &
   \!\begin{aligned}[t]   
   \sum_{l=0}^{k-2}
   \sum_{\wp_{k} \in \Pset^{(k)}}   
   & w(\wp_{k}(0:l)) 
   \
   \mleft[
   \wp_{k}(l)=i_{1})
   \mright]
   \\
 & \times
   w_{i_{1}j_{1}} 
   \
   \mleft[
   \wp_{k}(l+1)=j_{1}
   \mright]
   \
   \mleft(
   -\dfrac{1}{\beta} 
   \dfrac{\partial}{\partial c_{i_{2}j_{2}}} 
   w(\wp_{k}(l+1:k))
   \mright)
   \end{aligned}
   \\
= &-\dfrac{1}{\beta}
   \sum_{l=0}^{k-2}
   \mathbf{e}_{s}^{\mathsf{T}} 
   \Wt^{l}\mathbf{e}_{i_{1}}
   w_{i_{1}j_{1}}
   \mathbf{e}_{j_{1}}^{\mathsf{T}} 
   \mleft(
   \dfrac{\partial \Wt^{k-l-1}}{\partial c_{i_{2}j_{2}}}   
   \mright)
   \mathbf{e}_{t}
   \\
= &\dfrac{1}{\beta^2}
   \sum_{l=0}^{k-2}
   \mathbf{e}_{s}^{\mathsf{T}}
   \Wt^{l}
   \mleft(
   \dfrac{\partial \Wt}{\partial c_{i_{1}j_{1}}}   
   \mright)
   \mleft(
   \dfrac{\partial \Wt^{k-l-1}}{\partial c_{i_{2}j_{2}}}   
   \mright)
   \mathbf{e}_{t}.
\end{aligned}
\end{equation}
But in fact, the matrix appearing above
\begin{equation}
\label{eq:Se1e2}
\mathbf{S}_{e_{1} \leadsto e_{2}}^{(k)}
=  \dfrac{1}{\beta^2}
   \sum_{l=0}^{k-2}
   \Wt^{l}
   \mleft[
   \dfrac{\partial \Wt}{\partial c_{i_{1}j_{1}}}   
   \mright]
   \mleft[
   \dfrac{\partial \Wt^{k-l-1}}{\partial c_{i_{2}j_{2}}}   
   \mright]
\end{equation}
can be computed, for any $k\geq 2$, as the $\mathbf{(1,3)}$-block of the $k$-th power of matrix 
\begin{equation}
\label{eq:TwoEdgesQijDef}
\mathbf{Q}_{e_{1}\leadsto e_{2}}
= \mleft[\begin{array}{ccc}
\Wt        & \mathbf{W}_{i_{1}j_{1}} & \mathbf{0}             \\
\mathbf{0} & \Wt                     & \mathbf{W}_{i_{2}j_{2}} \\
\mathbf{0} & \mathbf{0}              & \Wt
\end{array}\mright],
\end{equation}
which is the two-edge version of the matrix $\mathbf{Q}_{\tilde{e}}$ presented in Equation~(\ref{eq:QOmegaDef}).
For this, note that we can write
\begin{equation}
\label{eq:QijWithDerivatives}
\mathbf{Q}_{e_{1} \leadsto e_{2}}
= \mleft[\begin{array}{ccc}
\Wt        & \Wt_{i_{1}j_{1}} & \mathbf{0} \\
\mathbf{0} & \Wt             & \Wt_{i_{2}j_{2}} \\
\mathbf{0} & \mathbf{0}      & \Wt
\end{array}\mright]
=  \mleft[\begin{array}{ccc}
  \Wt
& -\dfrac{1}{\beta} \partial_{c_{i_{1}j_{1}}} \Wt
& \mathbf{0}
\\
  \mathbf{0} 
& \Wt
& -\dfrac{1}{\beta} \partial_{c_{i_{2}j_{2}}} \Wt
\\
  \mathbf{0} 
& \mathbf{0}
& \Wt
\end{array}\mright].
\end{equation}
It can then be shown (although, again, omitted for brevity) by induction, that, for all $k\geq 2$,
\begin{equation}
\label{eq:QijTheorem}
\mathbf{Q}_{e_{1} \leadsto e_{2}}^{k}
=  
\mleft[\begin{array}{ccc}
  \Wt^{k} 
& -\dfrac{1}{\beta} \partial_{c_{i_{1}j_{1}}} \Wt^{k} 
& \dfrac{1}{\beta^2}\mathbf{S}_{e_{1} \leadsto e_{2}}^{(k)} 
\\
  \mathbf{0} 
& \Wt^{k}
& -\dfrac{1}{\beta} \partial_{c_{i_{2}j_{2}}}\Wt^{k} 
\\
  \mathbf{0} 
& \mathbf{0}      
& \Wt^{k}
\end{array}\mright].
\end{equation}

Using the above, we can finally write the two-edge likelihood, continuing from Equation~(\ref{eq:TwoEdgeLikelihoodRecall}) as
\begin{equation}
\label{eq:eq:TwoEdgeLikelihoodProofFinal}
  \Lhood(\beta \giv (e_{1}, e_{2}))
= \dfrac{1}{\PF}
  \mathbf{e}_{s}^{\mathsf{T}}
  \sum_{k=2}^{\infty}
  \dfrac{1}{\binom{k}{2}} \mathbf{Q}_{e_{1} \leadsto e_{2}}^{k}
  \mathbf{e}_{2n+t}
= \dfrac{\mleft[ \mathbf{L}_{e_{1} \leadsto e_{2}} \mright]_{s, 2n+t}}{\PF},
\end{equation}
where
\begin{equation}
\label{eq:TwoEdgeMatrixLogarithm}
  \mathbf{L}_{e_{1} \leadsto e_{2}} 
= \sum_{k=2}^{\infty} \dfrac{1}{\binom{k}{2}} \mathbf{Q}_{e_{1} \leadsto e_{2}}^k
\end{equation}
is the two-edge version of $\mathbf{L}_{\tilde{e}}$ defined in Equation~(\ref{eq:OmegaMatrixLogarithm}).

As already mentioned, the proof of Equation~(\ref{eq:MultipleEdgeFullLhood}) for the likelihood for an arbitrary number $M$ of observed edges can be derived similarly to the process presented here.
However, the derivation becomes overly tedious and messy when for an arbitrary $M$ and is left out of the scope of this work.

%
%
\section{Estimation of edge affinities as step selection probabilities}
\label{app:SSPF}
For the landscape graph used in Section~\ref{sec:ApplicationToWildAnimalMovement}, we estimated the edge affinities $a_{ij}$ based on the data and general approach described in~\citep{panzacchi2016predicting} from GPS data for more than 200 wild mountain reindeer from 7 of the largest wild reindeer management areas (including Austhei).
As a slight difference, in~\citep{panzacchi2016predicting} the affinities were based on estimating a Step Selection Function without an intercept, and therefore the model yielded values proportional, but not equal, to the probability of selecting a step. 
For the affinities $a_{ij}$ in this paper, we used the same models, but we refitted them using the method developed in~\citep{lele2009new} to estimate actual probabilities of selection, using Step Selection Probability Functions (SSPF)~\citep{forester2009accounting,lele2006weighted}.
Using the \texttt{ResourceSelection} package \citep{lele2017manual} for R \citep{rcore2015r}, we maximized the following log-likelihood (see \citep{lele2006weighted} for details): 
\begin{equation} \label{eq:likelihood}
\mathcal{L}
(\bm{\beta};
\mathbf{x}_1,
\mathbf{x}_2, \ldots,
\mathbf{x}_m)
=
\sum_{k=1}^K
\log \pi(\mathbf{x_k};\bm{\beta})
-
\log P_k(\bm{\beta})
\end{equation}
where $\mathbf{x}_{k}$ is the vector of $M$ covariates associated to an observed step $k$ between two consecutive locations, $i$ and $j$, $K$ is the number of observed steps in the data set and $\pi$ the probability of selection.
Furthermore,
\begin{equation}
\label{eq:PIntegral}
P_k(\bm{\beta}) = \int \pi(\mathbf{x};\bm{\beta}) f_k (\mathbf{x};s_k) d\mathbf{x},
\end{equation}
where $f_k (\mathbf{x};s_k)$ denotes the distribution of resources available for step $s_k$.
We defined the area within 2 kilometers from the start location $i$ of each step $s_k$ as available. 
We used The Akaike Information Criterion for model selection (see \citep{panzacchi2016predicting} for details). 
Table \ref{tab:suppl_sspf} shows the parameter estimates that maximize the likelihood in Equation (\ref{eq:likelihood}).

\begin{table}[ht]
\footnotesize
\centering
\begin{tabular}{r|rrrr}
 & Estimate & Std.\ Error & $z$-value & Pr($>|z|$) \\ 
  \hline
  Intercept & $-3.79$ & $0.79$ & $-4.77$ & $< 0.001$ \\ 
  Step Length & $-1.24 \times 10^{-3}$ & $2.51\times10^{-5}$ & $-49.53$ & $< 0.001$ \\ 
  (max.\ Slope)$^2$ & $-1.33\times10^{-3}$ & $4.79\times10^{-5}$ & $-27.81$ & $< 0.001$ \\ 
  max.\ Solar Radiation & $0.30$ & $0.02$ & $17.77$ & $< 0.001$ \\ 
  max.\ Trail dens. & $-0.19$ & $0.03$ & $-6.34$ & $< 0.001$ \\ 
  max.\ Road dens. & $0.46$ & $0.27$ & $1.70$ & $0.09$ \\ 
  crossing Road & $-1.19$ & $0.37$ & $-3.20$ & $< 0.001$ \\ 
  prop.\ Bog & $-0.31$ & $0.41$ & $-0.76$ & $0.45$ \\ 
  prop.\ non-Forage & $0.20$ & $0.16$ & $1.21$ & $0.22$ \\ 
  prop.\ Forage & $0.88$ & $0.14$ & $6.37$ & $< 0.001$ \\ 
  prop.\ Lakes & $-1.40$ & $0.33$ & $-4.21$ & $< 0.001$ \\ 
  prop.\ Reservoirs & $-3.97$ & $0.84$ & $-4.70$ & $< 0.001$ \\ 
\end{tabular}
\caption{Fitted coefficients for the summer step selection model of reindeer. ``Max.'' denotes the maximum value along the step, ``prop.'' the proportion of the land cover class along the step, and ``dens.'' is the road density, which is the length of road within a 5 km radius. The methods are detailed in \citep{panzacchi2016predicting}, our only extension was to fit the models using a Step Selection Probability Function (see main text for further details).}
\label{tab:suppl_sspf}
\end{table}
We predicted the affinities $a_{ij}$ as the probability of a step between adjacent pixels $i$ and $j$ using the coefficients from the SSPF:
\begin{equation} \label{eq:s_rein}
a_{ij} = \frac{\exp(\bm{\beta}_{SSPF} \mathbf{x})}{1 + \exp(\bm{\beta}_{SSPF} \mathbf{x})}
\end{equation}
where $\bm{\beta}_{SSPF}$ is a row vector with $M$ elements corresponding to the coefficients from the SSPF (see Table \ref{tab:suppl_sspf}), and $\mathbf{x}$ is a column vector with $M$ elements describing the environmental characteristics of the transition (first element is the intercept, and equals 1). Thus, $a_{ij}$ is the probability of selection of step $i$-to-$j$ (instead of staying put) based on the vector of covariates (e.g. geographic distance between $i$ and $j$, road crossing, proportion of each land cover) characterizing this transition.



\newpage


\bibliographystyle{comnet}
\bibliography{MLE-RSP-biblio}

\end{document}